\newtheorem{theorem}{Theorem}
\newtheorem{lemma}{Lemma}
\newtheorem{corollary}{Corollary}
\newtheorem{proposition}{Proposition}
\newtheorem{remark}{Remark}
\newtheorem{definition}{Definition}
\newtheorem{example}{Example}
\tikzset{
	block/.style = {draw, rectangle,
		minimum height=1cm,
		minimum width=2cm},
	input/.style = {coordinate,node distance=1cm},
	output/.style = {coordinate,node distance=4cm},
	arrow/.style={draw, -latex,node distance=2cm},
	pinstyle/.style = {pin edge={latex-, black,node distance=2cm}},
	sum/.style = {draw, circle, node distance=1cm},
}
\pgfplotsset{compat=newest}
\newcommand{\Z}{\ensuremath{{\mathbb Z}}}
\newcommand{\N}{\ensuremath{{\mathbb N}}}
\newcommand{\R}{\ensuremath{{\mathbb R}}}
\newcommand{\C}{\ensuremath{{\mathbb C}}}
\newcommand{\calH}{\ensuremath{\mathcal{H}}}
\newcommand{\calP}{\ensuremath{\mathcal{P}}}
\newcommand{\calD}{\ensuremath{\mathcal{D}}}
\newcommand{\calC}{\ensuremath{\mathcal{C}}}
\newcommand{\calO}{\ensuremath{\mathcal{O}}}
\newcommand{\calT}{\ensuremath{\mathcal{T}}}
\newcommand{\calA}{\ensuremath{\mathcal{A}}}
\newcommand{\calB}{\ensuremath{\mathcal{B}}}
\newcommand{\calK}{\ensuremath{\mathcal{K}}}
\newcommand{\calF}{\ensuremath{\mathcal{F}}}
\newcommand{\calG}{\ensuremath{\mathcal{G}}}
\newcommand{\calW}{\ensuremath{\mathcal{W}}}
\newcommand{\sfD}{\ensuremath{\mathsf{D}}}
\newcommand{\sfN}{\ensuremath{\mathsf{N}}}
\newcommand{\sfC}{\ensuremath{\mathsf{C}}}
\newcommand{\sfn}{\ensuremath{\mathsf{n}}}
\newcommand{\sfG}{\ensuremath{\mathsf{G}}}
\newcommand{\sfm}{\ensuremath{\mathsf{m}}}
\newcommand{\sfh}{\ensuremath{\mathsf{h}}}
\newcommand{\inv}{\ensuremath{\mathrm{inv}}}
\newcommand{\refsig}{\ensuremath{\mathrm{ref}}}
\newcommand{\dd}{\ensuremath{\mathrm{d}}}
\newcommand{\pp}{\ensuremath{\mathrm{p}}}
\newcommand{\ff}{\ensuremath{\mathrm{f}}}
\newcommand{\rmH}{\ensuremath{\mathrm{H}}}
\newcommand{\rmG}{\ensuremath{\mathrm{G}}}
\DeclareMathOperator{\rank}{rank}
\DeclareMathOperator{\col}{col}
\DeclareMathOperator*{\minimize}{minimize}
\DeclareMathOperator{\im}{image}
\DeclareMathOperator*{\argmin}{argmin}
\DeclareMathOperator{\diag}{diag}
\DeclareMathOperator{\real}{Re}
\DeclareMathOperator{\imagin}{Im}
\def\BibTeX{{\rm B\kern-.05em{\sc i\kern-.025em b}\kern-.08em
    T\kern-.1667em\lower.7ex\hbox{E}\kern-.125emX}}
\begin{document}

\title{
Input-Output Data-Driven Representation:
Non-Minimality and Stability
}
\author{Joowon Lee, \IEEEmembership{Graduate Student Member, IEEE},
Nam Hoon Jo, \IEEEmembership{Member, IEEE},
Hyungbo Shim, \IEEEmembership{Senior Member, IEEE},
Florian Dörfler,
\IEEEmembership{Senior Member, IEEE},
and Jinsung Kim, \IEEEmembership{Member, IEEE}
    \thanks{This work was supported by the grant from Hyundai Motor Company's R\&D Division and by the National Research Foundation of Korea(NRF) grant funded by the Korea government(MSIT) (No. RS-2022-00165417) \& (No. 2022R1F1A1074838).
    }
    \thanks{J.~Lee and H.~Shim are with ASRI, the Department of Electrical and Computer Engineering, Seoul National University, Seoul 08826, Korea (e-mail: jwlee@cdsl.kr, hshim@snu.ac.kr).}
    \thanks{N.\,H.~Jo is with the Department of Electrical Engineering, Soongsil University, Seoul 06978, Korea (e-mail: nhjo@ssu.ac.kr).}
    \thanks{F.~Dörfler is with the Department
    of Information Technology and Electrical Engineering, ETH Zurich,
    Zurich 8092, Switzerland (e-mail: dorfler@ethz.ch).}
    \thanks{J.~Kim is with Advanced Vehicle Platform Division, Hyundai Motor Company, Seongnam 13529, Korea (e-mail: jinsung.kim@hyundai.com).}
}

\maketitle

\begin{abstract}
    Many recent data-driven control approaches for linear time-invariant systems are based on finite-horizon prediction of output trajectories using input-output data matrices. When applied recursively, this predictor forms a dynamic system representation.
    This data-driven representation is generally non-minimal, containing latent poles in addition to the system’s original poles. In this article, we show that these latent poles are guaranteed to be stable
    through the use of
    the Moore-Penrose inverses of the data matrices,
    regardless of the system’s stability and even in the presence of small noise in data. This result obviates the need to eliminate the latent poles
    through procedures that resort to low-rank approximation
    in data-driven control and analysis.
    It is then applied to construct a stabilizable and detectable realization from data, from which we design an
    output feedback linear quadratic regulator (LQR) controller.
    Furthermore, we extend this principle to data-driven inversion, enabling asymptotic unknown input estimation for minimum-phase systems.
\end{abstract}

\begin{IEEEkeywords}
    Data-driven control, system identification, linear systems.
\end{IEEEkeywords}

\section{Introduction}\label{sec:introduction}

\IEEEPARstart{I}{n} recent years, there have been significant developments in data-driven control,
particularly in methods
for linear time-invariant (LTI) systems
that use
data matrices consisting of system trajectories \cite{DeePC,DDformulas,DDinform,CSM},
an idea that dates back to subspace methods \cite{subspaceAlg,SPC} and behavioral system theory \cite{fundamental,DDsimul,ARC}.
Unlike classical approaches,
these methods aim to bypass
modeling of the system in terms of the transfer function or state-space representation,
formulating controllers directly from the data matrices.
Although these are categorized
as \emph{direct}
approaches, as the counterpart of indirect approaches that involve system identification, they are often based on (non-classical) system representations that can be expressed by the data matrices.

For example, data-enabled predictive control (DeePC) \cite{DeePC} in its basic form is equivalent to linear model predictive control where the LTI model equations at time $t\in\Z_{\geq 0}$ are replaced with the following formula:
\begin{subequations}\label{eq:intro}
\begin{align}
    \rmH g(t):=\begin{bmatrix}
        U_\pp\\ U_\ff \\ Y_\pp
    \end{bmatrix}g(t)&=\begin{bmatrix}
        u(t-N:t-1)\\ u(t:t+M-1)\\ y(t-N:t-1)
    \end{bmatrix},\label{eq:intro a}\\
    y(t:t+M-1)&=Y_\ff g(t)\label{eq:intro b},
\end{align}
\end{subequations}
where $u(t-N:t-1)=\col( u(t-N),\,\ldots,\,u(t-1))\in\R^{mN}$ and
$y(t-N:t-1)=\col( y(t-N),\,\ldots,\,y(t-1))\in\R^{pN}$ are the system's past input-output trajectories, respectively,
$u(t:t+M-1)\in\R^{mM}$ is the future input trajectory (the decision variable in predictive control), $y(t:t+M-1)\in\R^{pM}$ is the predicted output trajectory,
and $g(t)\in \R^T$ is a solution to the linear equation \eqref{eq:intro a}
given the matrix $\rmH$ determined by offline data.
Specifically, the columns of the data matrices
\begin{equation}\label{eq:UY}
    \begin{bmatrix}
        U_\pp\\ U_\ff
    \end{bmatrix}
    \in \R^{m(N+M)\times T}
    \quad \text{and}\quad
    \begin{bmatrix}
        Y_{\pp}\\ Y_{\ff}
    \end{bmatrix}
    \in \R^{p(N+M)\times T}
\end{equation}
are input and output trajectories of length $N+M$, respectively,
and
$U_\pp$, $U_\ff$, $Y_\pp$, and $Y_\ff$ have $mN$, $mM$, $pN$, and $pM$ rows, respectively.
It is known that this prediction is
consistent with a model-based prediction
for any $g(t)$ satisfying \eqref{eq:intro a},
if
\begin{equation}\label{eq:rankcond}
    \rank\left(\rmH\right)=m(N+M)+n\quad \text{with}\quad N\geq l,
\end{equation}
where $n\in\N$ and $l\in\N$ are the minimal order and the lag of the system, respectively \cite{identifiability}.
In other words, the formula \eqref{eq:intro} perfectly simulates the system.
For details, see Section~\ref{subsec:pre}.

Consider
recursive one-step prediction by \eqref{eq:intro}; that is,
with $M=1$,
the current prediction $y(t)$ from \eqref{eq:intro b} is reused at the next time step as the most recent output in \eqref{eq:intro a}.
In this case,
the formula \eqref{eq:intro}
receives the current input $u(t)$
and produces the current output $y(t)$ based on its past behavior.
From this viewpoint,
the recursive use of \eqref{eq:intro} yields another system representation,
similar to the state-space or ARX (autoregressive with exogenous inputs) model.
Moreover,
under the condition \eqref{eq:rankcond},
the resulting input-output behavior
of \eqref{eq:intro}
coincides with that of the underlying system.
This makes \eqref{eq:intro} a valid \emph{data-driven representation} of the system---the topic of this article.

Despite the widespread use of
\eqref{eq:intro} and its variations
in data-driven control and analysis \cite{DeePC,DDformulas,DDsimul,SPC,CSM,ARC},
it has not been extensively studied as a dynamic system representation
with recursive nature,
providing a thorough stability analysis.
In fact, the representation \eqref{eq:intro} contains \emph{latent poles} in addition to the poles of the system,
due to its non-minimality when $pN>n$---a fundamental characteristic when input-output data are used instead of state measurements.
To see this, for now consider a single-input single-output (SISO) system
and
noise-free data.
Then, with $M=1$,
$Y_\ff$ is a row vector that belongs to the row span of $\rmH$, since
the noise-free data are originated from an LTI system
where the output is a linear combination of
the input
and
the past $N$ input-output pairs.
Thus, with any $h\in\R^{2N+1}$ such that $Y_\ff=h^\top\rmH$,
\eqref{eq:intro} can be rewritten as
\begin{equation}\label{eq:intro h}
    y(t)=h^\top \begin{bmatrix}
        u(t-N:t-1)\\ u(t)\\ y(t-N:t-1)
    \end{bmatrix},
\end{equation}
by substituting $h^\top\rmH$ for $Y_\ff$ in \eqref{eq:intro b} and using equality \eqref{eq:intro a}.

Now it is clear that \eqref{eq:intro h}, or equivalently, \eqref{eq:intro} is a difference equation of inputs and outputs or an ARX model.
By applying the $z$-transform to \eqref{eq:intro h}, we obtain a transfer function
having $N$ poles,
which is
identical to
that of the underlying system
after $N-n$ pole-zero cancellations.
We refer to these $N-n$ poles, which would ideally be canceled, as the {\em latent poles}.
In practice, these latent poles are not perfectly canceled due to numerical errors or noise in data,
which is why unstable pole-zero cancellations should be avoided.
Yet, to the best of our knowledge,
it has not been addressed
where the latent poles are located,
whether an unstable pole-zero cancellation can occur, and if so,
how their stability can be ensured.

We have observed that unstable pole-zero cancellations actually occur with an injudicious choice of $h$ in \eqref{eq:intro h},
even when the underlying system is stable;
Fig.~\ref{fig:forward} compares the output of a stable SISO system
with those
obtained by applying \eqref{eq:intro h} recursively with
$h^\top=Y_\ff\rmH^\dagger$,
where $\rmH^\dagger$ is the Moore-Penrose inverse of $\rmH$,
and $h^\top=Y_\ff\rmH^\rmG$,
where $\rmH^\rmG$ is some randomly selected generalized inverse\footnote{$A^\rmG$ is a generalized inverse of a matrix $A$ if $AA^\rmG A=A$.
If a vector $b$ belongs to the image of $A$, then $A^\rmG b$ is a solution $x$ to $Ax=b$ for any $A^\rmG$.} of $\rmH$.
As $N=6$ and $n=4$ in this simulation, there are two latent poles,
and the red dotted line in Fig.~\ref{fig:forward} demonstrates that they can be unstable.

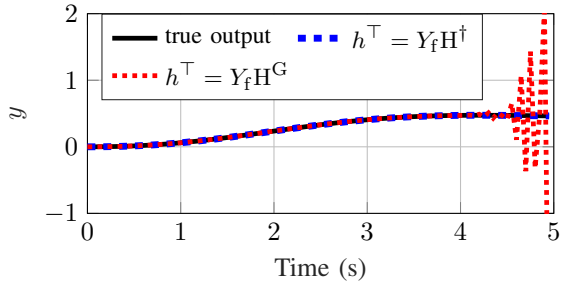
\begin{figure}
	\centering
%
%
\begin{tikzpicture}

\begin{axis}[%
width=0.7\linewidth,
height=0.3\linewidth,
at={(0.758in,0.603in)},
scale only axis,
xmin=0,
xmax=5,
xlabel style={font=\color{white!15!black}},
xlabel={Time (s)},
ylabel style={font=\color{white!15!black}},
ylabel={$y$},
ymin=-1,
ymax=2,
axis background/.style={fill=white},
xmajorgrids = true,
ymajorgrids = true,
legend style={at={(0.03,1)}, anchor=north west, legend cell align=left, align=left, draw=white!15!black, legend columns=2},
]
\addplot [color=black, line width=1.8pt]
  table[row sep=crcr]{%
0	0\\
0.05	0.000111704063845713\\
0.1	0.000455755158254372\\
0.15	0.00105084487399185\\
0.2	0.00191432508486234\\
0.25	0.00306214904993991\\
0.3	0.00450881964462603\\
0.35	0.00626734482905435\\
0.4	0.00834920042747024\\
0.45	0.0107643002576547\\
0.5	0.0135209736153914\\
0.55	0.0166259500855416\\
0.6	0.0200843516186375\\
0.65	0.0238996917801627\\
0.7	0.0280738820489926\\
0.75	0.032607245011935\\
0.8	0.0374985342730572\\
0.85	0.0427449608696149\\
0.9	0.0483422259610082\\
0.95	0.0542845595333677\\
1	0.0605647648401954\\
1.05	0.067174268279025\\
1.1	0.0741031743853818\\
1.15	0.0813403256084614\\
1.2	0.0888733665179538\\
1.25	0.096688812078345\\
1.3	0.104772119615853\\
1.35	0.113107764093915\\
1.4	0.121679316305823\\
1.45	0.130469523587755\\
1.5	0.139460392651947\\
1.55	0.148633274138203\\
1.6	0.157968948482241\\
1.65	0.167447712701424\\
1.7	0.177049467702378\\
1.75	0.186753805720526\\
1.8	0.196540097508866\\
1.85	0.206387578902128\\
1.9	0.21627543639282\\
1.95	0.226182891367449\\
2	0.236089282664332\\
2.05	0.245974147128826\\
2.1	0.255817297857356\\
2.15	0.265598899838257\\
2.2	0.275299542715064\\
2.25	0.28490031041634\\
2.3	0.294382847415403\\
2.35	0.303729421403184\\
2.4	0.312922982177931\\
2.45	0.32194721657637\\
2.5	0.330786599292189\\
2.55	0.339426439449196\\
2.6	0.347852922818106\\
2.65	0.356053149587607\\
2.7	0.364015167621869\\
2.75	0.371728001158142\\
2.8	0.379181674919174\\
2.85	0.386367233636029\\
2.9	0.393276756997193\\
2.95	0.399903370059706\\
3	0.406241249177261\\
3.05	0.412285623518753\\
3.1	0.418032772268536\\
3.15	0.423480017616621\\
3.2	0.428625713663114\\
3.25	0.433469231376376\\
3.3	0.438010939758534\\
3.35	0.442252183385124\\
3.4	0.44619525649776\\
3.45	0.449843373839674\\
3.5	0.453200638433874\\
3.55	0.45627200651239\\
3.6	0.459063249812673\\
3.65	0.461580915463595\\
3.7	0.463832283688798\\
3.75	0.465825323559185\\
3.8	0.467568647029331\\
3.85	0.469071461494345\\
3.9	0.470343521104417\\
3.95	0.471395077073859\\
4	0.472236827219945\\
4.05	0.472879864964342\\
4.1	0.473335628026361\\
4.15	0.47361584703282\\
4.2	0.473732494263832\\
4.25	0.473697732747624\\
4.3	0.473523865910336\\
4.35	0.47322328797894\\
4.4	0.472808435326806\\
4.45	0.472291738942248\\
4.5	0.47168557819056\\
4.55	0.471002236029686\\
4.6	0.470253855828891\\
4.65	0.469452399928538\\
4.7	0.468609610067525\\
4.75	0.467736969793086\\
4.8	0.466845668955604\\
4.85	0.465946570378839\\
4.9	0.465050178783681\\
4.95	0.464166612031199\\
};
\addlegendentry{\small true output}

\addplot [color=blue, dashed, line width=2.6pt]
  table[row sep=crcr]{%
0	0\\
0.05	0.000111704063845713\\
0.1	0.000455755158254372\\
0.15	0.00105084487399185\\
0.2	0.00191432508486234\\
0.25	0.00306214904993991\\
0.3	0.00450881964462632\\
0.35	0.00626734482905558\\
0.4	0.00834920042747386\\
0.45	0.0107643002576633\\
0.5	0.0135209736154094\\
0.55	0.0166259500855757\\
0.6	0.0200843516186978\\
0.65	0.0238996917802636\\
0.7	0.0280738820491542\\
0.75	0.0326072450121844\\
0.8	0.0374985342734303\\
0.85	0.042744960870158\\
0.9	0.0483422259617805\\
0.95	0.0542845595344437\\
1	0.0605647648416671\\
1.05	0.0671742682810051\\
1.1	0.0741031743880072\\
1.15	0.0813403256118961\\
1.2	0.0888733665223928\\
1.25	0.0966888120840183\\
1.3	0.10477211962303\\
1.35	0.113107764102908\\
1.4	0.121679316316993\\
1.45	0.130469523601518\\
1.5	0.139460392668773\\
1.55	0.148633274158631\\
1.6	0.157968948506874\\
1.65	0.167447712730944\\
1.7	0.177049467737545\\
1.75	0.186753805762189\\
1.8	0.196540097557964\\
1.85	0.206387578959699\\
1.9	0.216275436460008\\
1.95	0.226182891445507\\
2	0.236089282754632\\
2.05	0.245974147232861\\
2.1	0.255817297976748\\
2.15	0.265598899974763\\
2.2	0.275299542870581\\
2.25	0.284900310592911\\
2.3	0.294382847615223\\
2.35	0.303729421628604\\
2.4	0.312922982431462\\
2.45	0.321947216860689\\
2.5	0.330786599610144\\
2.55	0.339426439803807\\
2.6	0.347852923212573\\
2.65	0.356053150025308\\
2.7	0.364015168106366\\
2.75	0.371728001693182\\
2.8	0.379181675508691\\
2.85	0.386367234284144\\
2.9	0.393276757708219\\
2.95	0.399903370838142\\
3	0.406241250027797\\
3.05	0.412285624446266\\
3.1	0.418032773278092\\
3.15	0.423480018713468\\
3.2	0.428625714852684\\
3.25	0.433469232664281\\
3.3	0.43801094115056\\
3.35	0.442252184887231\\
3.4	0.446195258116074\\
3.45	0.449843375580484\\
3.5	0.453200640303622\\
3.55	0.456272008517673\\
3.6	0.459063251960229\\
3.65	0.4615809177603\\
3.7	0.463832286141654\\
3.75	0.465825326175318\\
3.8	0.467568649815978\\
3.85	0.469071464458847\\
3.9	0.470343524254209\\
3.95	0.471395080416463\\
4	0.472236830762959\\
4.05	0.472879868715428\\
4.1	0.47333563199324\\
4.15	0.473615851223257\\
4.2	0.473732498685629\\
4.25	0.473697737408607\\
4.3	0.473523870818348\\
4.35	0.473223293141826\\
4.4	0.472808440752408\\
4.45	0.472291744638391\\
4.5	0.47168558416504\\
4.55	0.471002242290265\\
4.6	0.470253862383286\\
4.65	0.469452406784408\\
4.7	0.468609617232462\\
4.75	0.467736977274611\\
4.8	0.466845676761154\\
4.85	0.465946578515757\\
4.9	0.465050187259213\\
4.95	0.464166620852482\\
};
\addlegendentry{\small $h^\top=Y_\ff\rmH^\dagger$}

\addplot [color=red, dotted, line width=1.8pt]
  table[row sep=crcr]{%
0	0\\
0.05	0.000111704063845713\\
0.1	0.000455755158254372\\
0.15	0.00105084487399185\\
0.2	0.00191432508486234\\
0.25	0.00306214904993991\\
0.3	0.00450881964462974\\
0.35	0.00626734482906361\\
0.4	0.00834920042749747\\
0.45	0.0107643002577073\\
0.5	0.0135209736154937\\
0.55	0.0166259500857164\\
0.6	0.0200843516189161\\
0.65	0.0238996917806066\\
0.7	0.0280738820496206\\
0.75	0.0326072450128834\\
0.8	0.0374985342743116\\
0.85	0.0427449608713992\\
0.9	0.0483422259633214\\
0.95	0.0542845595364021\\
1	0.0605647648442226\\
1.05	0.0671742682837939\\
1.1	0.0741031743920236\\
1.15	0.0813403256156166\\
1.2	0.0888733665281688\\
1.25	0.0966888120891254\\
1.3	0.104772119630033\\
1.35	0.113107764111027\\
1.4	0.12167931632285\\
1.45	0.13046952361609\\
1.5	0.13946039266938\\
1.55	0.148633274182796\\
1.6	0.157968948501504\\
1.65	0.167447712757777\\
1.7	0.177049467743787\\
1.75	0.186753805754836\\
1.8	0.196540097632536\\
1.85	0.206387578837653\\
1.9	0.216275436689802\\
1.95	0.226182891138646\\
2	0.236089283128037\\
2.05	0.245974146913428\\
2.1	0.255817298032305\\
2.15	0.265598900505956\\
2.2	0.275299541244033\\
2.25	0.284900313778012\\
2.3	0.294382842355548\\
2.35	0.303729428797595\\
2.4	0.312922974167188\\
2.45	0.321947223328308\\
2.5	0.330786599721736\\
2.55	0.339426424738855\\
2.6	0.347852963519953\\
2.65	0.3560530722868\\
2.7	0.364015290909526\\
2.75	0.371727836224246\\
2.8	0.379181856049151\\
2.85	0.386367103880888\\
2.9	0.393276714803705\\
2.95	0.399903773484485\\
3	0.406240241827781\\
3.05	0.412287490803753\\
3.1	0.418029890038701\\
3.15	0.42348378522952\\
3.2	0.428621767510262\\
3.25	0.433471721246869\\
3.3	0.4380128300221\\
3.35	0.442241535307602\\
3.4	0.446220200867121\\
3.45	0.449798652037946\\
3.5	0.453267933878758\\
3.55	0.456186583703199\\
3.6	0.459148503124815\\
3.65	0.461536065151764\\
3.7	0.463767081113434\\
3.75	0.466102345742686\\
3.8	0.466955087173495\\
3.85	0.470138299836999\\
3.9	0.468779063439527\\
3.95	0.473321212823175\\
4	0.470418281286012\\
4.05	0.47360763006857\\
4.1	0.475352340733472\\
4.15	0.466500366244052\\
4.2	0.488739027087867\\
4.25	0.448354007065353\\
4.3	0.509740023241504\\
4.35	0.430060147566456\\
4.4	0.511009984144947\\
4.45	0.462954854197185\\
4.5	0.413055207761075\\
4.55	0.651826121366218\\
4.6	0.105154034869886\\
4.65	1.06906939153423\\
4.7	-0.366094101486246\\
4.75	1.42841171033403\\
4.8	-0.320295829350787\\
4.85	0.501272045646383\\
4.9	2.10242624451871\\
4.95	-4.08914483317077\\
};
\addlegendentry{\small $h^\top=Y_\ff\rmH^\rmG$}

\end{axis}
\end{tikzpicture}%
	\caption{The output of the system (the black solid line) and $y(t)$
    of \eqref{eq:intro h}
    when $h^\top=Y_\ff\rmH^\dagger$ (the blue dashed line) and $h^\top=Y_\ff\rmH^\rmG$ with some randomly selected $\rmH^\rmG$ (the red dotted line). 
    More details on the simulation can be found in Section~\ref{subsec:pole ex}.}
	\label{fig:forward}
\end{figure}

In this article,
we show that
the stability of every latent pole in the data-driven representation \eqref{eq:intro h} is guaranteed
when $h^\top=Y_\ff\rmH^\dagger$,
regardless of the underlying system's stability or the number of the latent poles.
Adopting $Y_\ff\rmH^\dagger$,
known as the \emph{subspace predictor} \cite{SPC,ARC},
can be understood as
choosing $g(t)$ of \eqref{eq:intro b}
as
the least-norm solution to \eqref{eq:intro a}.
This result directly applies to multi-input single-output systems,
and also extends
to general multi-output systems with a slight modification.
In such cases,
we further show that
the latent poles remain stable
even in the presence of sufficiently small noise in data.

Stability analysis of the latent poles is crucial to data-driven control in several aspects.
First, they are not easy to avoid or remove,
particularly when the minimal order $n$ or the lag $l$ is unknown.
For single-output systems,
they can be avoided by setting $N=n$ as in \cite{DDformulas,outputOptimal}.
Ideally, $n$ can be identified from the rank condition \eqref{eq:rankcond},
which can be met by some persistency of excitation condition \cite{fundamental} on the input data
without the exact knowledge of $n$ and $l$
(see Remark~\ref{rem:pe}).
However,
when the data are noisy,
the matrix $\rmH$ has higher rank in general.
Although
low-rank approximation of $\rmH$ can be conducted by observing its dominant singular values,
this is essentially heuristic when the true rank is unknown.
Moreover,
for multi-output systems,
the latent poles cannot be avoided in general
by adjusting the parameter $N$ alone,
since
\eqref{eq:intro} is non-minimal
when $pN>n$.
For this reason,
prior works have assumed $n=pl$ \cite{prior,terminal}, or imposed other assumptions \cite{noisyIO}, both of which restrict the class of multi-output systems.
Alternatively,
in \cite{Bell,minimal,MIMO,KalmanLQ,shortest},
the latent poles are removed subsequently by eliminating linearly dependent rows or portions from the data matrices;
however,
it is tedious and
still requires low-rank approximation
for noisy data matrices.

Secondly,
for data-driven output feedback control,
a common approach is to construct a non-minimal realization
from data matrices,
where the state consists of,
or is derived from,
the past $N$ input-output pairs
\cite{DDformulas,outputOptimal,prior,terminal,KalmanLQ,MIMO,noisyIO}.
The problem is that the presence of latent poles makes the non-minimal realization uncontrollable.
To resolve this,
they have been
avoided or eliminated by the assumptions or methods in the previous paragraph.
However, if the latent poles are stable and the non-minimal realization is stabilizable,
then it would be still useful for output feedback control.

Indeed, it can be shown that such a non-minimal realization,
constructed from
the data-driven representation \eqref{eq:intro h} with
$h^\top=Y_\ff\rmH^\dagger$,
is always stabilizable for single-output systems
due to the stability of every latent pole.
For multi-output systems,
we formulate a non-minimal realization
by concatenating those of single-output systems,
and provide a necessary and sufficient condition for its stabilizability.
In addition, this non-minimal realization is always detectable by its structure.

As an application
to data-driven control,
we design an output feedback controller based on this non-minimal realization,
given its stabilizability.
This controller
not only stabilizes the underlying system,
but also
produces the optimal
solution to the
classical
model-based linear quadratic regulator (LQR) problem
when its initial state is consistent with the system's actual initial input-output trajectory.
Interestingly,
it has been observed from simulation results that
the controller's performance
against online measurement noise
tends to improve
as $N$ increases,
that is,
as the number of latent poles grows.
This indicates that
the latent poles can be regarded as beneficial, rather than as obstacles to be avoided.

Finally,
it is essential to examine the stability of the latent poles for data-driven inversion \cite{DDinv22,DDinv23,IMC},
also referred to as data-driven unknown input estimation or reconstruction,
which estimates the system's input from the measured outputs
using input-output data matrices.
It has a variety of applications,
including attack detection,
internal model control \cite{IMC},
and
data-driven disturbance observer (DD-DOB) \cite{DDinv23},
which estimates the input disturbance and rejects it simultaneously.
Despite the recursive nature of previous data-driven inversion algorithms,
where
the current input estimate is reused to compute the next one,
their stability has been overlooked \cite{DDinv23,IMC} or not fully understood \cite{DDinv22}.
In fact, it can be shown that these algorithms are data-driven representations of an $L$-delay inverse of the system,
which accounts for the issues observed in practice
for non-minimum phase systems.
Moreover,
latent poles
may destabilize these algorithms
even for minimum-phase systems.

In this context,
we propose a modified
data-driven inversion algorithm for SISO systems
using the Moore-Penrose inverse of the data matrix
whose latent poles are stable.
This implies that the proposed algorithm is stable
if and only if the system is of minimum-phase.
The stability of this algorithm enables \emph{asymptotic} data-driven input estimation,
where the estimation error converges to zero
for any initial guess of the unknown input trajectory.
This property is particularly useful for DD-DOB since the initial disturbance is hardly known, while its original version \cite{DDinv23}
assumes its knowledge.

The rest of this article is organized as follows.
Section~\ref{sec:pre} provides the problem formulation and preliminaries.
In Section~\ref{sec:main},
we analyze the stability of the data-driven representation \eqref{eq:intro}
and present the main result
on
the stability guarantee of
the latent poles.
Section~\ref{sec:control} discusses applications
of this result
to data-driven output feedback control and
presents an ``output feedback'' LQR controller as an example.
Section~\ref{sec:inv} extends the analysis to data-driven inversion and introduces a novel input estimation algorithm together with its adaptation to DD-DOB.
Finally, Section~\ref{sec:conclusion} concludes the article.

\subsection{Notations}

The sets of real numbers, complex numbers, positive integers, and nonnegative integers are denoted by $\R$, $\C$, $\N$, and $\Z_{\geq 0}$, respectively.
For $a_i\in\R^{m_i}$ with $i\in \lbrace 1,\,\ldots,\,n\rbrace$,
we let
$\col( a_1,\,\ldots,\,a_n):=[a_1^\top,\,\ldots,\,a_n^\top]^\top$.
For $n\in\Z_{\geq 0}$,
$P_n$ denotes
the space of polynomials of degree no greater than $n$,
and
$M_n$
denotes
the set of monic polynomials of degree $n$.
We write the Moore-Penrose inverse and a generalized inverse of a matrix $A$ as $A^\dagger$ and $A^\rmG$, respectively.
Let $I_n\in\R^{n\times n}$,
$0_n\in\R^n$, and $0_{m\times n}\in\R^{m\times n}$
be the identity matrix, the zero vector, and the zero matrix,
respectively,
whose dimensions are omitted when clear from the context.
The diagonal matrix
whose $i$-th main diagonal element is $a_i$ is denoted by $\diag(a_1,\,\ldots,\,a_n)$;
it is also used for
diagonal elements of rectangular matrices
whose dimensions are clear.
Let $\otimes$ denote the Kronecker product.
The real and the imaginary parts of $c\in\C$ are denoted by $\real(c)$ and $\imagin(c)$, respectively,
which are defined element-wisely for
vectors and
matrices.
For a matrix $A$,
the nonzero minimum singular value and the maximum singular value are denoted by $\sigma_{\min}(A)$ and $\sigma_{\max}(A)$, respectively.

\section{Problem Formulation and Preliminaries}\label{sec:pre}

\subsection{Problem Formulation}\label{subsec:setup}

Consider a discrete-time LTI system
with input $u(t)\in\R^m$, output $y(t)\in\R^p$, minimal order $n\in\N$, and lag $l\in\N$ (the observability index of a minimal realization).
The system is described by a proper transfer function matrix
\begin{equation*}
    \sfG(z)
    =\begin{bmatrix}
		\sfG_1(z)\\ \vdots \\ \sfG_p(z)
    \end{bmatrix},
\end{equation*}
where $\sfG_i(z)$ is a $1\times m$ nonzero transfer function matrix with minimal order $n_i\in\N$ for $i\in \lbrace 1,\,\ldots,\,p\rbrace$;
this implies that each system $\sfG_i(z)$ is non-autonomous and dynamic.
Let
\begin{equation}\label{eq:Gi}
    \sfG_i(z)=:\frac{1}{\sfD_i(z)}\sfN_i(z),\quad i\in \lbrace 1,\,\ldots,\,p \rbrace,
\end{equation}
where
$\sfD_i(z)\in M_{n_i}$ and $\sfN_i(z)\in P_{n_i}^{1\times m}$.

A state-space representation of the system is given by
\begin{equation}\label{eq:sys_ss}
    \begin{aligned}
    x(t+1)&=Ax(t)+Bu(t),\\
    y(t)&=Cx(t)+Du(t),
    \end{aligned}
\end{equation}
where $x(t)$ is the state with dimension $n^\prime\geq n$.
Suppose that the system is unknown, i.e., all of $\sfG(z)$, $(A,B,C,D)$, $n$, $n^\prime$, and $l$ are unknown,
though both the minimal order and the lag have known upper bounds $\bar{n}\geq n$ and $\bar{l}\geq l$, respectively.
We regard direct measurements of the state as unavailable.

Let $\calW_k\subset \R^{(m+p)k}$ be the set of all length-$k$ input-output trajectories of the system. 
With $M=1$,
i.e., for a single step prediction horizon,
we construct the data matrices \eqref{eq:UY} so that each column of $\begin{bmatrix}
    \rmH\\ Y_\ff
\end{bmatrix}$, where $\rmH$ is defined in \eqref{eq:intro},
belongs to $\calW_{N+1}$.
Here, the parameter $N$ is chosen to be $N\geq \bar{l}$,
and we assume that the data matrix satisfies the rank condition \eqref{eq:rankcond}.

\begin{remark}\label{rem:pe}
    The rank condition \eqref{eq:rankcond} can be satisfied under some persistency of excitation assumption on the input data,
    which depends on the specific structure of the data matrices.
    For example,
    given a single data trajectory $\col(u^\dd,y^\dd)\in \calW_{T+N}$,
    let the data matrices \eqref{eq:UY}
    be Hankel matrices\footnote{For a signal $u=\col(u_0,\,\ldots,\,u_{T^\prime-1})\in\R^{mT^\prime}$ and $L\in\N$ such that $L\leq T^\prime$, the Hankel operator $\calH_L$ is defined by
\begin{equation*}
    \calH_L(u):=\begin{bmatrix}
        u_0 &\cdots & u_{T^\prime-L}\\
        \vdots & \ddots & \vdots\\
        u_{L-1} & \cdots & u_{T^\prime-1}
    \end{bmatrix}\in\R^{mL\times (T^\prime-L+1)}.
    \end{equation*}}
    $\calH_{N+1}(u^\dd)$ and $\calH_{N+1}(y^\dd)$, respectively.
    Then,
    the condition \eqref{eq:rankcond} holds if
    $u^\dd$ is persistently exciting of order $\bar{n}+N+1$, i.e.,
    $\calH_{\bar{n}+N+1}(u^\dd)$ has full row rank,
    by Willems' fundamental lemma \cite{fundamental} and \cite{identifiability}.
    Similar
    assumptions ensuring the condition \eqref{eq:rankcond}
    have also been explored
    with respect to the Page matrix \cite{Page} and the mosaic-Hankel matrix \cite{mosaicHankel} (a generalization of the Hankel matrix for multiple data trajectories).
\end{remark}

We define data matrices corresponding to the $i$-th output, as
\begin{equation*}
    Y_{\pp,i}:=\left(I_N\otimes e_i^\top\right)Y_\pp,\quad
    Y_{\ff,i}:=e_i^\top Y_\ff,\quad \rmH_i:=\begin{bmatrix}
        U_\pp\\ U_\ff \\ Y_{\pp,i}
    \end{bmatrix},
\end{equation*}
where $e_i\in\R^p$ is the unit vector whose $i$-th component is $1$.

Henceforth, we consider the data-driven representation \eqref{eq:intro} with $M=1$,
and show that
i) it contains latent poles (the poles aside from the system's original poles) that can be unstable,
and ii) the stability of every latent pole is always guaranteed
by employing
the Moore-Penrose inverse of each $\rmH_i$.

\subsection{Preliminaries}\label{subsec:pre}

This subsection
illustrates how \eqref{eq:intro} exhibits the same input-output behavior as the system, being a valid system representation.
The following proposition summarizes the results from \cite{fundamental,DDsimul,identifiability} applied to \eqref{eq:intro}.

\begin{proposition}\label{prop:pre}
    For $N\geq l$,
    $\im \begin{bmatrix}
        \rmH\\ Y_\ff
    \end{bmatrix}=\calW_{N+M}$ if and only if
    \eqref{eq:rankcond} holds.
    Under this condition,
    given $u(t-N:t+M-1)$ and $y(t-N:t-1)$,
    $y(t:t+M-1)$ is uniquely determined by \eqref{eq:intro b} for any $g(t)\in \R^T$ satisfying \eqref{eq:intro a}.
\end{proposition}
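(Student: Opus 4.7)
The plan is to derive both assertions from a single rank identity, $\rank(\rmH)=\rank\begin{bmatrix}\rmH\\ Y_\ff\end{bmatrix}$, combined with a dimension count for $\calW_{N+M}$.

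First, I would observe that each column of $\begin{bmatrix}\rmH\\ Y_\ff\end{bmatrix}$ is a length-$(N+M)$ trajectory by construction, and by linearity of the LTI system any linear combination of trajectories is again a trajectory; hence $\im\begin{bmatrix}\rmH\\ Y_\ff\end{bmatrix}\subseteq \calW_{N+M}$. Next, I would compute $\dim\calW_{N+M}=m(N+M)+n$ by using a minimal realization $(A,B,C,D)$ of order $n$: the linear map from $(x_0,u)\in\R^n\times\R^{m(N+M)}$ to the corresponding input-output trajectory in $\calW_{N+M}$ is surjective by definition, and injective because $N+M\geq N\geq l$ guarantees that the depth-$(N+M)$ observability matrix has full column rank, so $x_0$ is uniquely recovered from any trajectory.

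The core step is the rank identity, or equivalently $\ker(\rmH)\subseteq \ker(Y_\ff)$. For any $g\in\ker(\rmH)$, the vector $\begin{bmatrix}\rmH g\\ Y_\ff g\end{bmatrix}=\begin{bmatrix}0\\ Y_\ff g\end{bmatrix}$ lies in $\im\begin{bmatrix}\rmH\\ Y_\ff\end{bmatrix}\subseteq\calW_{N+M}$, i.e., it represents a trajectory with zero input over $[t-N,t+M-1]$ and zero output over $[t-N,t-1]$. Since $N\geq l$, observability forces the state at time $t-N$ to vanish; the zero input then keeps the state at zero, so necessarily $Y_\ff g=0$.

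Combining these observations, $\im\begin{bmatrix}\rmH\\ Y_\ff\end{bmatrix}=\calW_{N+M}$ iff $\rank\begin{bmatrix}\rmH\\ Y_\ff\end{bmatrix}=m(N+M)+n$ iff, by the rank identity, $\rank(\rmH)=m(N+M)+n$, which is \eqref{eq:rankcond}. For the second assertion, under \eqref{eq:rankcond} any two solutions $g_1,g_2$ of \eqref{eq:intro a} differ by an element of $\ker(\rmH)\subseteq\ker(Y_\ff)$, so $Y_\ff g_1=Y_\ff g_2$ and the prediction $y(t:t+M-1)$ is uniquely determined. The main subtlety I anticipate is the observability step in the rank identity: one must invoke $N\geq l$ precisely to conclude that zero past output under zero input implies zero state at time $t-N$; the remainder is a clean dimension and linearity argument.
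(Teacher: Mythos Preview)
Your proof is correct and follows the same route as the paper's: the inclusion by LTI-ness, the dimension formula $\dim\calW_{N+M}=m(N+M)+n$, and an observability argument to identify $\rank(\rmH)$ with $\rank\begin{bmatrix}\rmH\\ Y_\ff\end{bmatrix}$. The paper's version simply defers the dimension formula to \cite{fundamental,identifiability} and the kernel inclusion $\ker(\rmH)\subseteq\ker(Y_\ff)$ (and hence uniqueness) to \cite[Proposition~1]{DDsimul}, whereas you supply both arguments explicitly.
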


\begin{proof}
    Since the system is LTI,
    $\im \begin{bmatrix}
        \rmH\\ Y_\ff
    \end{bmatrix}\subset\calW_{N+M}$.
    By \cite{fundamental} and \cite{identifiability},
    $\dim(\calW_{N+M})=m(N+M)+n$,
    and thus the equality holds
    if and only if \eqref{eq:rankcond} holds.
    The rest of the proof directly follows from \cite[Proposition~1]{DDsimul}.
\end{proof}

By Proposition~\ref{prop:pre}, under the condition \eqref{eq:rankcond}, \eqref{eq:intro} is well-defined if and only if
$\col(u(t-N:t-1),y(t-N:t-1))\in \calW_N$.
It also states that \eqref{eq:intro} correctly predicts the system's future output trajectory $y(t:t+M-1)$,
given the system's past input-output trajectory
and the future input trajectory $u(t:t+M-1)$.

\section{Data-Driven Representation\\
with Stable Latent Poles}\label{sec:main}

We demonstrate how latent poles are generated by the data-driven representation \eqref{eq:intro},
and show that their stability is always guaranteed
by adopting the Moore-Penrose inverses of the data matrices and reformulating \eqref{eq:intro} as follows:
\begin{subequations}\label{eq:DDR MP}
    \begin{align}
        y_i(t)&=Y_{\ff,i}\rmH_i^\dagger\begin{bmatrix}
        u(t-N:t-1)\\ u(t)\\ y_i(t-N:t-1)
        \end{bmatrix},\quad i\in \lbrace 1,\,\ldots,\,p\rbrace, \label{eq:DDR i}\\
        y(t)&=\left[y_1(t),\,\ldots,\,y_p(t)\right]^\top.
    \end{align}
\end{subequations}
It is also analyzed how the latent poles of \eqref{eq:DDR MP} are affected by noisy data,
ensuring that they remain stable under sufficiently small noise.
We begin with the case of single-output systems,
and then present the result for the general multi-output case.

\subsection{Single-Output Case}\label{subsec:SO}

Let $p=1$.
As discussed in Section~\ref{sec:introduction},
the data-driven representation \eqref{eq:intro} can be rewritten as \eqref{eq:intro h}
for any $h\in\R^{m(N+1)+N}$ satisfying $Y_\ff=h^\top \rmH$.
Such $h$ always exists
since the data are noise-free and the underlying system is LTI.

We first show the existence of latent poles from \eqref{eq:intro h}.
Let
\begin{equation}\label{eq:h}
    h=\left[\beta_0^\top,\,\ldots,\,\beta_{N-1}^\top,\,\beta_N^\top,\,\alpha_0,\,\ldots,\,\alpha_{N-1}\right]^\top,
\end{equation}
where $\beta_k\in\R^m$ for $k\in \lbrace 0,\,1,\,\ldots,\,N \rbrace$ and $\alpha_k\in\R$ for $k\in \lbrace 0,\,1,\,\ldots,\,N-1\rbrace$.
Since \eqref{eq:intro}, and therefore \eqref{eq:intro h} describes the input-output behavior of the system,
applying the $z$-transform to \eqref{eq:intro h} yields the transfer function matrix $\sfG(z)$
so that
\begin{equation}\label{eq:alphabetaG}
    \frac{1}{z^N-\sum_{k=0}^{N-1}\alpha_kz^k}\sum_{k=0}^{N}\beta_k^\top z^k=\sfG(z)=:\frac{1}{\sfD(z)}\sfN(z),
\end{equation}
where $\sfD(z)\in M_n$ and $\sfN(z)\in P_n^{1\times m}$.
Therefore, $N-n$ latent poles
exist in \eqref{eq:intro h},
as stated by the following lemma.

\begin{lemma}\label{lem:h}
    Under the condition \eqref{eq:rankcond}, for any vector $h$ in \eqref{eq:h}
    satisfying
    $Y_{\ff}=h^\top\rmH$,
    there exists $\sfC(z)\in M_{N-n}$
    such that
    \begin{equation}\label{eq:Ch}
    z^N-\sum_{k=0}^{N-1}\alpha_kz^k=\sfC(z)\sfD(z),\quad \sum_{k=0}^{N}\beta_k^\top z^k=\sfC(z)\sfN(z).
    \end{equation}
    Conversely, for any
    $\sfC(z)\in M_{N-n}$,
    the vector \eqref{eq:h} constructed from \eqref{eq:Ch} satisfies $Y_\ff=h^\top\rmH$.
\end{lemma}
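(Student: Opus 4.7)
The plan is to exploit the rational-function identity \eqref{eq:alphabetaG}, already derived in the paragraph preceding the lemma from the fact that \eqref{eq:intro h} with $Y_{\ff}=h^{\top}\rmH$ reproduces the system's input-output map. Writing $\mathsf{A}(z):=z^{N}-\sum_{k=0}^{N-1}\alpha_{k}z^{k}$ and $\mathsf{B}(z):=\sum_{k=0}^{N}\beta_{k}^{\top}z^{k}$, equation \eqref{eq:alphabetaG} reads $\mathsf{B}(z)/\mathsf{A}(z)=\sfN(z)/\sfD(z)$, which cross-multiplies to the polynomial identity $\sfD(z)\mathsf{B}(z)=\mathsf{A}(z)\sfN(z)$. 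The forward direction is then extracted by dividing $\mathsf{A}$ by $\sfD$, and the converse is obtained by reversing this division and verifying the resulting difference equation on every column of the data matrix.

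For the forward direction I would first argue that $\sfD$ and the row $\sfN$ are coprime in the sense that no $\lambda\in\C$ simultaneously satisfies $\sfD(\lambda)=0$ and $\sfN(\lambda)=0$: otherwise, cancelling the factor $(z-\lambda)$ would give a representation of $\sfG$ with scalar denominator of degree $n-1$, contradicting the minimality of $n$ fixed in \eqref{eq:Gi}. Hence at every root $\lambda$ of $\sfD$ of multiplicity $r$, some scalar component of $\sfN(\lambda)$ is nonzero, and the corresponding scalar component of $\sfD\mathsf{B}=\mathsf{A}\sfN$ forces $\lambda$ to be a root of $\mathsf{A}$ of multiplicity at least $r$. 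Running over all roots of $\sfD$ yields $\sfD\mid\mathsf{A}$, so $\sfC(z):=\mathsf{A}(z)/\sfD(z)$ is a monic polynomial of degree $N-n$, establishing $\sfC\in M_{N-n}$ and the first identity in \eqref{eq:Ch}; dividing the vector identity $\sfD\mathsf{B}=\mathsf{A}\sfN=\sfC\sfD\sfN$ by $\sfD$ then gives the second.

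For the converse, I would fix any $\sfC\in M_{N-n}$, construct $h$ via \eqref{eq:Ch}, and let $q$ denote the forward-shift operator. Equation \eqref{eq:intro h} evaluated at time $\tau$ is the difference equation $\mathsf{A}(q)y(\tau-N)=\mathsf{B}(q)u(\tau-N)$, which by \eqref{eq:Ch} factorizes as $\sfC(q)\bigl[\sfD(q)y-\sfN(q)u\bigr](\tau-N)=0$. Each column of $\begin{bmatrix}\rmH\\ Y_{\ff}\end{bmatrix}$ is, by construction, a length-$(N+1)$ system trajectory on which the minimal input-output relation $[\sfD(q)y-\sfN(q)u](t)=0$ holds for every $t\in[\tau-N,\tau-n]$; since $\sfC(q)$ applied at $\tau-N$ consumes only values in that range, the entire expression vanishes and $Y_{\ff}=h^{\top}\rmH$ follows column by column.

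The main obstacle is the coprimeness step in the forward direction: rigorously translating the minimality of $n$---defined in \eqref{eq:Gi} as the minimal order of the row transfer function $\sfG$---into the concrete algebraic statement that $\sfD(z)$ and the row vector $\sfN(z)$ share no common complex root. Once this fact is granted, both directions reduce to routine polynomial arithmetic together with a direct verification on the columns of the data matrix.
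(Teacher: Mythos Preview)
Your proposal is correct and follows essentially the same route as the paper: the forward direction uses the rational identity \eqref{eq:alphabetaG} together with the fact that $\sfD$ and $\sfN$ share no common root to conclude $\sfD\mid\mathsf{A}$, and the converse verifies the factored difference equation on each data column. The paper's proof is simply a terser version of yours---it asserts the coprimeness and the divisibility in one line, and for the converse writes only that ``the data obey the input-output relation described by $\sfG(z)$''---so your explicit treatment of root multiplicities and of the range on which $\sfD(q)y-\sfN(q)u$ vanishes fills in details the paper leaves implicit.
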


\begin{proof}
    Given a vector $h$ written as \eqref{eq:h} satisfying $Y_{\ff}=h^\top\rmH$,
    \eqref{eq:alphabetaG} holds by Proposition~\ref{prop:pre}.
    Then, since there is no $\omega\in\C$ such that $\sfD(\omega)=0$ and $\sfN(\omega)=0_m^\top$,
    $\sfD(z)$ (of degree $n$) divides $z^N-\sum_{k=0}^{N-1}\alpha_kz^k$ (of degree $N$),
    and thus \eqref{eq:Ch} holds
    for some $\sfC(z)\in M_{N-n}$.
    Conversely, given $\sfC(z)\in M_{N-n}$, the vector \eqref{eq:h} constructed from \eqref{eq:Ch} satisfies $Y_\ff=h^\top\rmH$ because the data obey the input-output relation described by $\sfG(z)$.
\end{proof}

Lemma~\ref{lem:h} implies that
the collection of $N-n$ latent poles,
represented by the polynomial $\sfC(z)$ in \eqref{eq:Ch},
is determined by the choice of
solution $h$
to the linear equation $Y_\ff=h^\top \rmH$.
It also shows that
any monic polynomial $\sfC(z)$ of degree $N-n$ corresponds to such $h$.
For SISO systems,
this means that the transfer function $\sfG(z)$ with any possible combination of $N-n$ pole-zero cancellations
can be obtained from \eqref{eq:intro h}.
This is why unstable pole-zero cancellations can occur
as shown in Fig.~\ref{fig:forward}.

Now we show that every latent pole in \eqref{eq:DDR MP}, or equivalently, \eqref{eq:intro h} with $h^\top=Y_\ff\rmH^\dagger$
is always stable.
To this end,
we define
a function $\Phi$ that
maps
$h$ in \eqref{eq:intro h}
to $\sfC(z)$ in \eqref{eq:Ch}, that is,
$\Phi(h):=\sfC(z)$.
Thus, we aim to show that
$$ \sfC^\star(z):=\Phi((Y_\ff\rmH^\dagger)^\top)$$
is a Schur stable polynomial.
We utilize the fact that $(Y_\ff\rmH^\dagger)^\top$ is the least-norm solution to $Y_\ff=h^\top\rmH$, i.e.,
\begin{equation}\label{eq:YH}
\begin{aligned}
    (Y_{\ff}\rmH^\dagger)^\top=\argmin_{h\in\R^{m(N+1)+N}}&\left\lVert h\right\rVert_2^2\\
    \text{subject to}\quad &Y_{\ff}=h^\top\rmH.
\end{aligned}
\end{equation}
As the function $\Phi$ is surjective onto $M_{N-n}$ by Lemma~\ref{lem:h}
and is injective by definition,
\eqref{eq:YH} can be rewritten
as
\begin{equation}\label{eq:argminC}
    \sfC^\star(z)=\argmin_{\sfC(z)\in M_{N-n}}\left\lVert \Phi^{-1}(\sfC(z))\right\rVert_2^2.
\end{equation}

To utilize \eqref{eq:argminC},
we introduce a new cost function;
for $\sfn\in\N$ and $\tau\in\Z_{\geq 0}$,
$f_{\sfn,\tau}:M_\tau\times M_\sfn\times P_\sfn^{1\times m}\to \R$ is defined by
\begin{multline}\label{eq:f def}
    f_{\sfn,\tau}(r(z),p(z),q(z))\\:=
    \left\lVert \left[\bar{q}_0^\top,\,\ldots,\,\bar{q}_{\sfn+\tau}^\top,\,\bar{p}_0,\,\ldots,\,\bar{p}_{\sfn+\tau-1}\right]^\top\right\rVert_2^2,
\end{multline}
where
\begin{equation*}
    r(z)p(z)=z^{\sfn+\tau}-\sum_{k=0}^{\sfn+\tau-1}\bar{p}_kz^k,\quad r(z)q(z)=\sum_{k=0}^{\sfn+\tau}\bar{q}_k^\top z^k,
\end{equation*}
so that $f_{n,N-n}(\sfC(z),\sfD(z),\sfN(z))=\lVert \Phi^{-1}(\sfC(z)) \rVert_2^2$.
With this definition in hand, we rewrite \eqref{eq:argminC} as
\begin{equation}\label{eq:Cstar}
    \sfC^\star(z)=\argmin_{\sfC(z)\in M_{N-n}} f_{n,N-n}\left(\sfC(z),\sfD(z),\sfN(z)\right).
\end{equation}

It can be observed from \eqref{eq:Cstar} that the polynomial $\sfC^\star(z)$ is determined by the system $\sfG(z)$ and the parameter $N$.
In other words,
under a fixed $N$,
the latent poles of \eqref{eq:DDR MP} are the same for any data satisfying the rank condition \eqref{eq:rankcond}.

Next, we show that given $\sfn\in\N$, $\tau\in\Z_{\geq 0}$, $p(z)\in M_\sfn$, and $q(z)\in P_\sfn^{1\times m}$,
any polynomial $r(z)\in M_\tau$ that minimizes the cost function \eqref{eq:f def} is Schur stable;
then, the Schur stability of \eqref{eq:Cstar} is straightforward.
As a first step, we consider the case when $\tau=1$ and $2$.

\begin{lemma}\label{lem:tau12}
	Given $\sfn\in\N$, $p(z)\in M_\sfn$, and $q(z)\in P_\sfn^{1\times m}$, the followings hold:
    \begin{enumerate}
        \item A solution $\lambda^\star\in\R$ to
        \begin{equation*}
            \minimize_{\lambda\in\R}\,\,f_{\sfn,1}(z+\lambda,p(z),q(z))
        \end{equation*}
        is unique and $\lvert \lambda^\star\rvert<1$.
        \item A solution $[\phi^\star,\,\psi^\star]^\top\in\R^2$ to
        \begin{equation*}
            \minimize_{[\phi,\,\psi]^\top\in\R^2}\,\,f_{\sfn,2}\left(z^2+\phi z+\psi,p(z),q(z)\right)
        \end{equation*}
        is unique and $\psi^\star <1$.
    \end{enumerate}
\end{lemma}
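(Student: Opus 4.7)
My plan is to expand the cost as a strictly convex quadratic form in the free coefficients of $r(z)$, establish uniqueness of the minimizer from positive definiteness of the associated Toeplitz moment matrix, and then derive the bounds on $\lambda^\star$ and $\psi^\star$ from that same positive definiteness. Set $a_k := (p_k, q_k) \in \R^{1 \times (m+1)}$ with $a_k = 0$ for $k \notin \{0,\ldots,\sfn\}$, and let $B_s := \sum_k \langle a_k, a_{k+s}\rangle$ for $s \geq 0$, so $B_0 = \lVert p\rVert^2 + \lVert q\rVert^2 =: A$. A direct convolution calculation yields
\begin{equation*}
\lVert rp\rVert^2 + \lVert rq\rVert^2 = \sum_{j,j'=0}^{\tau} r_j r_{j'}\, B_{|j-j'|}.
\end{equation*}
The $(\tau+1)\times(\tau+1)$ Toeplitz moment matrix $T_\tau := [B_{|j-j'|}]$ is positive definite because it is the Gram matrix of the shifted-and-padded $\ell^2$ vectors $V_j = (0,\ldots,0,a_0,a_1,\ldots,a_\sfn,0,\ldots)$ (with $a_0$ in position $j$) for $j=0,\ldots,\tau$; monicity of $p$ forces $\lVert a_\sfn\rVert^2 \geq p_\sfn^2 = 1$, and linear independence is obtained by zeroing coefficients from the highest index downward.

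For part 1, the expansion collapses to $f_{\sfn,1}(z+\lambda,p,q) = A\lambda^2 + 2B_1\lambda + (A-1)$, which is strictly convex (since $A \geq 1$), giving the unique minimizer $\lambda^\star = -B_1/A$. The bound $|\lambda^\star| < 1$ follows from the AM--GM type estimate
\begin{equation*}
2|B_1| \leq \sum_{k=0}^{\sfn-1}\bigl(\lVert a_k\rVert^2 + \lVert a_{k+1}\rVert^2\bigr) = 2A - \lVert a_0\rVert^2 - \lVert a_\sfn\rVert^2 \leq 2A - 1,
\end{equation*}
where the last inequality uses $\lVert a_\sfn\rVert^2 \geq 1$.

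For part 2, the analogous expansion reads
\begin{equation*}
f_{\sfn,2}(z^2+\phi z+\psi,p,q) = A(1+\phi^2+\psi^2) + 2B_1\phi(1+\psi) + 2B_2\psi - 1,
\end{equation*}
whose Hessian $2\bigl[\begin{smallmatrix} A & B_1 \\ B_1 & A\end{smallmatrix}\bigr]$ is positive definite by part 1, so the minimizer is unique. Solving the first-order conditions gives $\psi^\star = (B_1^2 - AB_2)/(A^2 - B_1^2)$, and $\psi^\star < 1$ is equivalent to $2B_1^2 < A(A+B_2)$. To obtain this inequality I will invoke positive definiteness of the full $3\times 3$ matrix $T_2$ at the probe vector $(1,t,1)^\top$: for every $t \in \R$,
\begin{equation*}
(1,t,1)\,T_2\,(1,t,1)^\top = At^2 + 4B_1 t + 2(A+B_2) > 0,
\end{equation*}
so the discriminant of this real quadratic in $t$ is negative, which is exactly $2B_1^2 < A(A+B_2)$.

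The only non-routine step is recognizing the probe vector $(1,t,1)^\top$ that converts the desired bound into a discriminant condition; everything else is mechanical. A cleaner alternative uses a frequency-domain interpretation: by Parseval, $A$, $B_1$, $B_2$ are the first three trigonometric moments of the nonnegative weight $W(\theta) := |p(e^{i\theta})|^2 + \lVert q(e^{i\theta})\rVert^2$, and Cauchy--Schwarz in $L^2(W\, d\theta/(2\pi))$ applied to the pair $(1,\cos\theta)$ directly yields $B_1^2 \leq A(A+B_2)/2$, with strict inequality because $W>0$ on a set of positive measure so $\cos\theta$ is not constant $W$-almost everywhere.
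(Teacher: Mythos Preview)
Your proof is correct and follows essentially the same approach as the paper: both expand $f$ as a quadratic in the free coefficients via the Toeplitz Gram matrix (your $B_s$ is the paper's $\theta_s$), obtain the same explicit minimizers, bound $|B_1|<A$ by an elementary inequality, and for part~2 both effectively evaluate the quadratic form at vectors of the shape $(1,t,1)$---the paper substitutes $(\phi,\psi)=(-2\theta_1/\theta_0,1)$ into $f\ge 0$, which is exactly minimizing your $(1,t,1)\,T_2\,(1,t,1)^\top$ over $t$, while you phrase it as a discriminant condition from positive definiteness of $T_2$. Your Parseval/moment alternative is a nice bonus not present in the paper.
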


\begin{proof}
    See the Appendix~\ref{append:pf}.
\end{proof}

Based on Lemma~\ref{lem:tau12},
we present the following lemma with respect to any $\tau\in\N$.

\begin{lemma}\label{lem:stable}
	Given $\sfn\in\N$, $p(z)\in M_\sfn$, and $q(z)\in P_\sfn^{1\times m}$,
    any polynomial $r(z)\in M_\tau$ that minimizes \eqref{eq:f def} is Schur stable for all $\tau\in\N$.
\end{lemma}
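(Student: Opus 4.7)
The plan is to reduce the general $\tau$ to the cases $\tau=1,2$ treated by Lemma~\ref{lem:tau12} via a factor-by-factor argument. Let $r^\star(z)\in M_\tau$ be any minimizer of $f_{\sfn,\tau}(\cdot,p(z),q(z))$ and factor it over $\R$ as a product of monic real-linear factors $(z-\lambda_i)$ and monic irreducible quadratic factors $(z^2+\phi_j z+\psi_j)$. I will show that each $\lvert\lambda_i\rvert<1$ and each $\psi_j<1$. For an irreducible quadratic factor the roots are a complex conjugate pair $\omega_j,\bar\omega_j$ with product $\psi_j=\lvert\omega_j\rvert^2$, so the bound $\psi_j<1$ immediately translates to $\lvert\omega_j\rvert<1$, yielding the Schur stability of $r^\star$.

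For the linear case, fix one factor $(z-\lambda_i)$ and write $r^\star(z)=(z-\lambda_i)s(z)$ with $s(z)\in M_{\tau-1}$. Define $\tilde p(z):=s(z)p(z)\in M_{\sfn+\tau-1}$ and $\tilde q(z):=s(z)q(z)\in P_{\sfn+\tau-1}^{1\times m}$. Because the coefficients read off by $f_{\sfn,\tau}(r^\star,p,q)$ and by $f_{\sfn+\tau-1,1}(z-\lambda_i,\tilde p,\tilde q)$ come from the identical polynomials $r^\star p=(z-\lambda_i)\tilde p$ and $r^\star q=(z-\lambda_i)\tilde q$, the two cost values coincide. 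If some $\tilde\lambda\in\R$ delivered a strictly smaller right-hand side, then $(z-\tilde\lambda)s(z)\in M_\tau$ would contradict the optimality of $r^\star$; hence $\lambda_i$ is the unique minimizer of $f_{\sfn+\tau-1,1}(\cdot,\tilde p,\tilde q)$, and Lemma~\ref{lem:tau12}(1) gives $\lvert\lambda_i\rvert<1$. Peeling off an irreducible quadratic factor $(z^2+\phi_j z+\psi_j)$ in exactly the same way and invoking Lemma~\ref{lem:tau12}(2) gives $\psi_j<1$.

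The hard work has already been absorbed into Lemma~\ref{lem:tau12}; what remains here is largely an accounting step, verifying that after peeling a factor the residual polynomials $sp$ and $sq$ still satisfy the degree hypotheses of $f_{\sfn+\tau-d,d}$ for $d\in\{1,2\}$ and that the substitution argument keeps the candidate inside the feasible set $M_\tau$. The most delicate point to articulate is why only $d=1$ and $d=2$ are needed: because every monic real polynomial factors over $\R$ into such pieces, and because the Schur-stability criterion for a monic real-irreducible quadratic is precisely that its constant coefficient is less than $1$, the bounds from Lemma~\ref{lem:tau12} cover all cases, and the conclusion follows by combining the root bounds across all factors of $r^\star$.
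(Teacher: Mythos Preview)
Your proposal is correct and takes essentially the same approach as the paper: peel off a real-linear or real-irreducible quadratic factor of $r^\star$, use the identity $f_{\sfn,\tau}(r^\star,p,q)=f_{\sfn+\tau-d,d}(\text{factor},sp,sq)$ for $d\in\{1,2\}$, and invoke Lemma~\ref{lem:tau12}. The paper phrases this as a contradiction (assume an unstable root exists, then Lemma~\ref{lem:tau12} yields a strictly better competitor), while you phrase it directly (each factor must itself be a minimizer of the reduced problem, hence satisfies the bounds of Lemma~\ref{lem:tau12}); the two framings are logically equivalent.
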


\begin{proof}
    Given $\tau\in\N$, let $r^\star(z)\in M_\tau$ minimize \eqref{eq:f def}.
    Suppose that there exists a root $\lambda\in\C$ of $r^\star(z)$ such that $\lvert \lambda\rvert \geq 1$.
    First, consider the case when $\lambda\in\R$.
    Then, $r^\star(z)$ can be factorized as $r^\star(z)=(z-\lambda)r_1(z)$, and by definition \eqref{eq:f def},
    \begin{multline}\label{eq:fpf1}
		f_{\sfn,\tau}\left(r^\star(z),p(z),q(z)\right)\\
		=f_{\sfn+\tau-1,1}(z-\lambda,r_1(z)p(z),r_1(z)q(z)).
    \end{multline}
    However, by Lemma~\ref{lem:tau12},
    the right hand side of \eqref{eq:fpf1} is strictly greater than when $\lambda$ is replaced with some $\lambda^\star$ such that $\lvert \lambda^\star\rvert<1$.
    Thus, \eqref{eq:f def} has less value when $r(z)=(z-\lambda^\star)r_1(z)$,
    which contradicts that $r^\star(z)$ minimizes \eqref{eq:f def}.
    Next, suppose $\lambda\notin \R$, which implies that $\tau >1$.
    Then, the complex conjugate $\bar{\lambda}$ of $\lambda$ is also a root of $r^\star(z)$,
    and
    therefore $r^\star(z)$ is factorized as
    $r^\star(z)=(z-\lambda)(z-\bar{\lambda})r_2(z)=:(z^2+\phi z+\psi)r_2(z)$, where $\psi^2=\lvert \lambda\rvert^2\geq 1$.
    The rest of the proof is analogous to that of the case when $\lambda\in\R$.
\end{proof}

The analysis so far ensures that for single-output systems, every latent pole in the data-driven representation \eqref{eq:DDR MP} is stable.
In what follows, we extend this result to multi-output systems.

\begin{remark}
    For single-output systems,
    the data-driven representation \eqref{eq:DDR MP} coincides with the subspace predictor \cite{SPC,ARC}.
    It can also be obtained directly from \eqref{eq:intro} by choosing $g(t)$ of \eqref{eq:intro b}
 as the least-norm solution to \eqref{eq:intro a},
 which is enforced by projection-based regularization schemes \cite{regularization}.
\end{remark}

\subsection{General Result}\label{subsec:MO}

For multi-output systems,
the data-driven representation \eqref{eq:intro} can be reformulated as
\begin{equation}\label{eq:h mo}
    y(t)=\begin{bmatrix}
        h_1^\top \\ \vdots \\ h_p^\top
    \end{bmatrix} \begin{bmatrix}
        u(t-N:t-1)\\ u(t)\\ y(t-N:t-1)
    \end{bmatrix},
\end{equation}
with any $h_i\in\R^{m(N+1)+pN}$ satisfying $Y_{\ff,i}=h_i^\top\rmH$ for each $i$.
As \eqref{eq:intro h} in the single-output case,
such $h_i$ is non-unique for every $i$ when $pN>n$
by the condition \eqref{eq:rankcond}.
This implies that the representation \eqref{eq:h mo} is generally overparameterized (or non-minimal),
which can be regarded as the fundamental cause of the emergence of latent poles.

In fact,
\eqref{eq:DDR MP} is a special case of \eqref{eq:h mo} when each $h_i^\top$ in \eqref{eq:h mo}
is constructed from $Y_{\ff,i}\rmH_i^\dagger$
by appropriately inserting zeros;
each $Y_{\ff,i}$ belongs to the row span of $\rmH_i$ as well as $\rmH$,
because the transfer function matrix $\sfG_i(z)$ is well-defined so that the $i$-th output is a linear combination of the previous $i$-th outputs and the current and previous inputs.
Therefore, \eqref{eq:DDR MP} is also a valid representation of the system.

Now we introduce the main result;
\eqref{eq:DDR i} for each $i$ represents each single-output system $\sfG_i(z)$, having $N-n_i$ latent poles that are always stable.
Recall that $\sfG_i(z)$ has minimal order $n_i$ and is written as \eqref{eq:Gi}.

\begin{theorem}\label{thm:main}
    Consider the data-driven representation \eqref{eq:DDR MP}.
    For each $i\in \lbrace 1,\,2,\,\ldots,\,p\rbrace$,
    let
    \begin{equation}\label{eq:YHdagger}
	Y_{\ff,i}\rmH_i^\dagger
	=\left[b_{0,i}^\top,\,\ldots,\,b_{N-1,i}^\top,\,b_{N,i}^\top,\,a_{0,i},\,\ldots,\,a_{N-1,i}\right],
    \end{equation}
    where $a_{k,i}\in \R$ and $b_{k,i}\in \R^m$ for each $k$.
    Under the condition \eqref{eq:rankcond},
    there exists a Schur stable
    $\sfC_i^\star(z)\in M_{N-n_i}$
    such that
    \begin{equation}\label{eq:thm i}
    z^N- \sum_{k=0}^{N-1}a_{k,i}z^k=\sfC_i^\star(z)\sfD_i(z),\quad
    \sum_{k=0}^{N}b_{k,i}^\top z^k=\sfC_i^\star(z)\sfN_i(z).
    \end{equation}
    Furthermore, each $\sfC_i^\star(z)$ is determined solely by $\sfG_i(z)$ and $N$.
\end{theorem}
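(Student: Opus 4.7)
The plan is to reduce Theorem~\ref{thm:main} to the single-output results of Section~\ref{subsec:SO} by recognizing that, for each $i$, the matrices $(\rmH_i, Y_{\ff,i})$ are precisely the input/$i$-th-output data matrices for the scalar subsystem $\sfG_i(z)$. If the single-output rank condition $\rank(\rmH_i) = m(N+1) + n_i$ can be verified (with $N \geq n_i$, implicit in the theorem statement since $M_{N-n_i}$ must be nonempty), then Lemma~\ref{lem:h} applied to $\sfG_i(z)$ with $h = (Y_{\ff,i}\rmH_i^\dagger)^\top$ immediately yields some $\sfC_i^\star(z) \in M_{N-n_i}$ satisfying \eqref{eq:thm i}, because the entries of this $h$ are exactly the $(a_{k,i}, b_{k,i})$ listed in \eqref{eq:YHdagger}. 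The Schur stability of $\sfC_i^\star(z)$ and its dependence solely on $(\sfG_i(z), N)$ then follow by transcribing \eqref{eq:YH}--\eqref{eq:Cstar} to the $i$-th subsystem: the least-norm property of $(Y_{\ff,i}\rmH_i^\dagger)^\top$, combined with the bijection established in Lemma~\ref{lem:h}, gives
\begin{equation*}
    \sfC_i^\star(z) = \argmin_{\sfC(z) \in M_{N-n_i}} f_{n_i, N-n_i}\bigl(\sfC(z), \sfD_i(z), \sfN_i(z)\bigr),
\end{equation*}
whose minimizers are Schur stable by Lemma~\ref{lem:stable} and clearly depend only on $\sfG_i(z)$ and $N$.

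The main technical step is therefore the rank identity $\rank(\rmH_i) = m(N+1) + n_i$. By Proposition~\ref{prop:pre} applied to the full system, \eqref{eq:rankcond} gives $\im [\rmH^\top, Y_\ff^\top]^\top = \calW_{N+1}$. Projecting this identity onto the input and $i$-th output coordinates, I would argue that $\im [\rmH_i^\top, Y_{\ff,i}^\top]^\top$ equals the length-$(N+1)$ input/$i$-th-output trajectory space of the scalar subsystem $\sfG_i(z)$. The nontrivial inclusion is that every trajectory of $\sfG_i(z)$ arises as the $i$-th output projection of some full-system trajectory: the forced response coincides in both realizations, and every free response of $\sfG_i(z)$ can be written as $e_i^\top C A^t x_0$ for some initial state $x_0$ of the full minimal realization, since $\sfG_i(z)$ is the $i$-th row of $\sfG(z) = C(zI - A)^{-1} B + D$. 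Proposition~\ref{prop:pre} applied to $\sfG_i(z)$ then produces the desired rank identity.

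The hard part is precisely this lifting of subsystem trajectories to the full system, which is what turns the coarse full-system rank condition into the finer per-subsystem condition required to invoke the single-output machinery. Once it is in place, Theorem~\ref{thm:main} reduces to a direct specialization of Lemmas~\ref{lem:h} and~\ref{lem:stable} to each $i \in \{ 1, \ldots, p \}$, and no new arguments are needed.
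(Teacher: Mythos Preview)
Your approach is correct and essentially the same as the paper's: both reduce to the single-output $\argmin$ characterization and invoke Lemma~\ref{lem:stable}. The only minor difference is that you explicitly verify $\rank(\rmH_i)=m(N+1)+n_i$ via a projection/lifting argument, whereas the paper sidesteps this by applying Proposition~\ref{prop:pre} once to the full system under \eqref{eq:rankcond} (using that \eqref{eq:DDR MP} is a special case of \eqref{eq:h mo}) and reading off the $i$-th row of the resulting transfer-function identity directly.
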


\begin{proof}
    Since
    \eqref{eq:DDR MP} is a special case of \eqref{eq:h mo} that is equivalent to \eqref{eq:intro},
    applying the $z$-transform to \eqref{eq:DDR MP} yields the
    transfer function matrix $\sfG(z)$ by Proposition~\ref{prop:pre}.
    Then, for every $i\in\lbrace 1,\,\ldots,\,p\rbrace$,
    \begin{equation*}
        \frac{1}{z^N-\sum_{k=0}^{N-1}a_{k,i}z^k}\sum_{k=0}^{N}b_{k,i}^\top z^k=\sfG_i(z)
    \end{equation*}
    and thus \eqref{eq:thm i} holds for some $\sfC_i^\star(z)\in M_{N-n_i}$,
    analogously to
    Lemma~\ref{lem:h}.
    Following the derivation in Section~\ref{subsec:SO},
     \begin{equation*}
        \sfC_i^\star(z)=\argmin_{\sfC(z)\in M_{N-n_i}} f_{n_i,N-n_i}\left(\sfC(z),\sfD_i(z),\sfN_i(z)\right),
    \end{equation*}
    which shows that $\sfC_i^\star(z)$ is determined solely by $\sfG_i(z)$ and $N$.
    In addition, by Lemma~\ref{lem:stable},
    $\sfC_i^\star(z)$ is a Schur stable polynomial, which concludes the proof.
\end{proof}

We emphasize that the stability of every latent pole in \eqref{eq:DDR MP} is guaranteed by Theorem~\ref{thm:main},
regardless of the stability of the system
or the number of the latent poles,
as long as the data satisfies the condition \eqref{eq:rankcond}.
Theorem~\ref{thm:main} also states that
the latent poles depend only on the system and the parameter $N$,
not on the particular choice of data.

Due to the presence of the latent poles,
the stability of the underlying system is necessary but not sufficient in general
for the stability of \eqref{eq:h mo}.
However,
the particular representation \eqref{eq:DDR MP} with the Moore-Penrose inverses
preserves the underlying system's stability
by Theorem~\ref{thm:main},
since no unstable latent pole appears.
Therefore,
when applied recursively,
\eqref{eq:DDR MP} becomes a stable output predictor if and only if the system is stable;
the predicted output asymptotically approaches the system's output
even when the initial output trajectory of length $N$ is not perfectly known,
as stated by the following corollary.

\begin{corollary}\label{cor:prediction}
    Let $\hat{y}(t)$ be the predicted output by recursively applying \eqref{eq:DDR MP}.
    Then, under the condition \eqref{eq:rankcond}, $\lVert y(t)-\hat{y}(t)\rVert \to 0$ as $t\to \infty$ for any initial prediction $\hat{y}(-N:-1)\in\R^{pN}$ if and only if the system is stable.
\end{corollary}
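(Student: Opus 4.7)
The plan is to reduce everything to analyzing the characteristic polynomial of a single homogeneous linear difference equation governing the prediction error, and then invoke Theorem~\ref{thm:main} to remove the latent modes.

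First I would unpack \eqref{eq:DDR i} using the notation \eqref{eq:YHdagger} to write the recursive predictor explicitly as the ARX model
\begin{equation*}
\hat{y}_i(t)=\sum_{k=0}^{N-1}a_{k,i}\,\hat{y}_i(t-N+k)+\sum_{k=0}^{N}b_{k,i}^\top u(t-N+k).
\end{equation*}
Because \eqref{eq:DDR MP} is a valid representation of the underlying system (Proposition~\ref{prop:pre}), the true output $y_i(t)$ satisfies the \emph{same} inhomogeneous difference equation. Subtracting the two, the error $e_i(t):=y_i(t)-\hat{y}_i(t)$ obeys the homogeneous recurrence
\begin{equation*}
e_i(t)=\sum_{k=0}^{N-1}a_{k,i}\,e_i(t-N+k),
\end{equation*}
whose characteristic polynomial is $z^N-\sum_{k=0}^{N-1}a_{k,i}z^k$. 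By Theorem~\ref{thm:main} this factors as $\sfC_i^\star(z)\sfD_i(z)$ with $\sfC_i^\star(z)$ always Schur stable, so stability of the error dynamics reduces to Schur stability of each $\sfD_i(z)$.

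For the sufficiency direction, stability of the system means every pole of $\sfG(z)$ lies strictly inside the unit disc; since $\sfG_i(z)=\sfN_i(z)/\sfD_i(z)$ is written in coprime form, the roots of $\sfD_i(z)$ are poles of $\sfG_i(z)$ and hence of $\sfG(z)$, so each $\sfD_i(z)$ is Schur. The characteristic polynomial $\sfC_i^\star(z)\sfD_i(z)$ is therefore Schur, and the homogeneous recurrence drives $e_i(t)\to 0$ from every initial condition $e_i(-N:-1)$, equivalently from every initial prediction $\hat{y}_i(-N:-1)$, for each $i$. For necessity, if the system is unstable then $\sfG(z)$ has a pole $\lambda$ with $|\lambda|\geq 1$, and since the pole set of $\sfG(z)$ is precisely the union of the pole sets of its rows, some $\sfD_i(z)$ has $\lambda$ as a root. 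Standard linear difference-equation theory provides a bijection between the $N$-dimensional initial data $e_i(-N:-1)$ and the modal coefficients of the general solution, so one can select $\hat{y}_i(-N:-1)$ that excites the $\lambda^t$ mode and forces $e_i(t)\not\to 0$.

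The only step that requires care, and that I would flag as the main obstacle, is the identification of stability of the \emph{system} with Schur stability of all the $\sfD_i(z)$'s simultaneously; this rests on the standard but worth-stating fact that the poles of $\sfG(z)$ are exactly $\bigcup_{i=1}^{p}\{\lambda:\sfD_i(\lambda)=0\}$ when each $\sfG_i=\sfN_i/\sfD_i$ is in coprime form. Apart from this bookkeeping, the argument is essentially a direct application of Theorem~\ref{thm:main} to the error recurrence.
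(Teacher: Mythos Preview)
Your proposal is correct and follows essentially the same approach as the paper: derive the homogeneous error recurrence for each output channel, factor its characteristic polynomial as $\sfC_i^\star(z)\sfD_i(z)$ via Theorem~\ref{thm:main}, and reduce the question to Schur stability of every $\sfD_i(z)$. The paper's proof is terser on the necessity direction and on the pole-union bookkeeping you flagged, but the argument is the same.
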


\begin{proof}
    For $i\in \lbrace 1,\,\ldots,\,p\rbrace$, define the $i$-th prediction error by
    $e_i^y(t):=y_i(t)-\hat{y}_i(t)$,
    where $\hat{y}_i(t)$ is the $i$-th element of $\hat{y}(t)$.
    Then, each $e_i^y(t)$ follows the dynamics of
    \begin{equation}\label{eq:ey_dyn}
            e_i^y(t)=\begin{bmatrix}
                a_{0,i} & a_{1,i} & \cdots & a_{N-1,i}
            \end{bmatrix}e_i^y(t-N:t-1),
    \end{equation}
    which is stable if and only if $\sfC_i^\star(z)\sfD_i(z)$ of \eqref{eq:thm i} is Schur stable.
    Therefore, the dynamics of
    $y(t)-\hat{y}(t)$
    is stable if and only if $\sfC_i^\star(z)\sfD_i(z)$ is Schur stable for every $i\in \lbrace 1,\,\ldots,\,p\rbrace$, which holds if and only if
    $\sfD_i(z)$
    is Schur stable for every $i$ by Theorem~\ref{thm:main}.
    This concludes the proof.
\end{proof}

\begin{remark}\label{rem:mo}
    It can be inferred from \eqref{eq:h mo}
    that the data matrix $\rmH$ needs to have full row rank, i.e., $n=pN$ by \eqref{eq:rankcond}, in order to avoid
    latent poles.
    The condition $n=pN$ for some $N\geq l$
    can only be satisfied by a very restricted class of multi-output systems.
    This shows the difficulty of avoiding the latent poles, highlighting the importance of ensuring their stability.
\end{remark}

\subsection{Effect of Noisy Data}\label{subsec:noisy}

Now we analyze the effect of noise in data on the stability of the data-driven representation \eqref{eq:DDR MP}.
We consider a single-output system for simplicity, but the analysis in this subsection can be easily extended to the multi-output case.

Suppose that the representation \eqref{eq:DDR MP} is implemented with the noisy data matrices 
$$\tilde\rmH = \rmH + E, \qquad \tilde Y_\ff = Y_\ff +  e,$$
where $E$ and $e$ are output measurements noises.
If $\tilde Y_\ff \tilde \rmH^\dagger \to Y_\ff \rmH^\dagger$ as both $E \to 0$ and $e \to 0$, then we can claim that, {\em with sufficiently small noises, the poles of \eqref{eq:DDR MP} with $\tilde Y_\ff$ and $\tilde \rmH$ are not much different from those with $Y_\ff$ and $\rmH$.}
However, it does not hold in general that $\lim_{E \to 0} (\rmH + E)^\dagger = \rmH^\dagger$.
While this property holds for the perturbation $E$ that preserves the rank (i.e., $\rank (\rmH+E) = \rank (\rmH)$) \cite{MPinv}, this cannot be expected for the random noise $E$.

Nevertheless,
it is shown in this subsection
that the property holds for the truncated Moore-Penrose inverse,
which does \emph{not} necessarily preserve the rank.
Particularly, most numerical packages such as MATLAB and NumPy compute a truncated version of the Moore–Penrose inverse, since the non-truncated form is numerically unstable in the presence of small singular values.
In this regard,
the above property holds ``in practice,'' as will be shown in Theorem~\ref{thm:noisy}.

\begin{definition}\label{def:truncMP}
    Let $A\in\R^{\sfm\times \sfn}$ have the singular value decomposition $U\Sigma V^\top$
    where
    $\Sigma=\diag(\sigma_1,\,\ldots,\,\sigma_\sfh)$ with
    $\sigma_1\geq \sigma_2\geq \cdots \geq \sigma_\sfh\geq 0$ ($\sfh:=\min\lbrace \sfm,\,\sfn\rbrace$).
    Then,
    the truncated Moore-Penrose inverse
    of $A$ with tolerance $\tau>0$
    is defined by
    \begin{equation*}
        [A_\tau]^\dagger:=V^\top \begin{bmatrix}
            \Sigma_\tau^{-1} & 0\\
            0 & 0
        \end{bmatrix}U,
    \end{equation*}
    where $\Sigma_\tau:=\diag(\sigma_1,\,\ldots,\,\sigma_\kappa)$ with
    $\kappa$
    such that
    $\sigma_\kappa >\tau \geq \sigma_{\kappa+1}$ ($\sigma_{\sfh+1}:=0$).
\end{definition}

Note that in general,
$
    \rank(\rmH^\dagger)\leq \rank([\tilde{\rmH}_\tau]^\dagger)\leq \rank(\tilde{\rmH}^\dagger)
$
by definition,
given a sufficiently small tolerance $\tau>0$.

Now we consider \eqref{eq:DDR MP} with
\begin{equation}\label{eq:noisy YH}
    \tilde{Y}_\ff[\tilde{\rmH}_\tau]^\dagger=\left[\tilde{b}_0^\top,\,\ldots,\,\tilde{b}_{N-1}^\top,\,\tilde{b}_N^\top,\,\tilde{a}_0,\,\ldots,\,\tilde{a}_{N-1}\right],
\end{equation}
where
$\tilde{b}_k\in\R^m$
and
$\tilde{a}_k\in\R$ for each $k$.
Then, the roots of
$z^N-\sum_{k=0}^{N-1}\tilde{a}_k z^k$
are exactly the poles of
\eqref{eq:DDR MP} with \eqref{eq:noisy YH}.

\begin{theorem}\label{thm:noisy}
    Under the rank condition \eqref{eq:rankcond}, let $p_1,\,\ldots,\,p_n$ be the roots of $\sfD(z)$, and let $q_1,\,\ldots,\,q_{N-n}$ be the roots of $\sfC^\star(z)$.
    If
    $0<\tau<\sigma_{\min}(\rmH)$,
    then
    for any $\epsilon>0$, there exists $\delta>0$ such that
    \begin{equation}\label{eq:noisy thm}
        z^N-\sum_{k=0}^{N-1}\tilde{a}_k z^k=\prod_{k=1}^n (z-\tilde{p}_k)\prod_{j=1}^{N-n}(z-\tilde{q}_j)
    \end{equation}
    with $\lvert \tilde{p}_k-p_k\rvert <\epsilon$
    for each $k$
    and $\lvert \tilde{q}_j-q_j\rvert <\epsilon$
    for each $j$
    if $\lVert [E^\top,\,e^\top]\rVert_2 <\delta$.
\end{theorem}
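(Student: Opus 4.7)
The plan is to reduce the statement to two classical continuity facts: continuity of the Moore–Penrose inverse under \emph{rank-preserving} perturbations, and continuity of polynomial roots in the polynomial's coefficients. The key role of the tolerance $\tau$ with $0<\tau<\sigma_{\min}(\rmH)$ is to force the truncation to behave as a rank-preserving cutoff, so that $[\tilde{\rmH}_\tau]^\dagger$ lies in the ``nice'' regime where the first continuity fact applies.

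First I would show that for all sufficiently small $\lVert E\rVert_2$, the truncated inverse $[\tilde{\rmH}_\tau]^\dagger$ coincides with the Moore–Penrose inverse of the best rank-$r$ approximation of $\tilde{\rmH}$, where $r:=\rank(\rmH)$. Let $\sigma_1(\rmH)\geq\cdots\geq\sigma_r(\rmH)=\sigma_{\min}(\rmH)>0=\sigma_{r+1}(\rmH)=\cdots$. By Weyl's inequality, $\lvert \sigma_i(\tilde{\rmH})-\sigma_i(\rmH)\rvert \leq \lVert E\rVert_2$ for every $i$. Hence if $\lVert E\rVert_2<\tfrac{1}{2}\min\{\sigma_{\min}(\rmH)-\tau,\,\tau\}$, then $\sigma_i(\tilde{\rmH})>\tau$ for $i\leq r$ and $\sigma_i(\tilde{\rmH})<\tau$ for $i>r$. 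By Definition~\ref{def:truncMP}, $[\tilde{\rmH}_\tau]^\dagger$ uses exactly the top $r$ singular triples, so it equals $(\tilde{\rmH}_r)^\dagger$ where $\tilde{\rmH}_r$ is the truncated SVD of $\tilde{\rmH}$ at rank $r$.

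Next I would invoke the classical fact that $A\mapsto A^\dagger$ is continuous on the set of matrices of a fixed rank \cite{MPinv}. Since $\tilde{\rmH}_r$ and $\rmH$ both have rank $r$, and $\tilde{\rmH}_r\to \rmH$ as $E\to 0$ (the top-$r$ SVD is a continuous operation once the rank is locally constant, which holds by the singular-value separation above), we conclude $[\tilde{\rmH}_\tau]^\dagger = (\tilde{\rmH}_r)^\dagger \to \rmH^\dagger$. Combining with $\tilde{Y}_\ff = Y_\ff + e \to Y_\ff$, it follows that
\begin{equation*}
\tilde{Y}_\ff[\tilde{\rmH}_\tau]^\dagger \longrightarrow Y_\ff\rmH^\dagger
\end{equation*}
as $\lVert [E^\top,\,e^\top]\rVert_2\to 0$. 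Reading off coordinates from \eqref{eq:YHdagger} and \eqref{eq:noisy YH}, this gives $\tilde{a}_k\to a_k$ for every $k\in\{0,\ldots,N-1\}$.

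Finally I would close with continuity of polynomial roots: the map sending the coefficients of a monic polynomial to the multiset of its roots is continuous (under the Hausdorff distance on multisets), so by Theorem~\ref{thm:main}, which provides the factorization $z^N-\sum_{k=0}^{N-1}a_k z^k=\prod_{k=1}^n(z-p_k)\prod_{j=1}^{N-n}(z-q_j)$ in the noise-free case, the roots of the perturbed monic polynomial $z^N-\sum_{k=0}^{N-1}\tilde{a}_k z^k$ can be indexed as $\tilde{p}_1,\ldots,\tilde{p}_n,\tilde{q}_1,\ldots,\tilde{q}_{N-n}$ so that every $\tilde{p}_k$ lies within $\epsilon$ of $p_k$ and every $\tilde{q}_j$ within $\epsilon$ of $q_j$, yielding the factorization \eqref{eq:noisy thm}. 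The main obstacle is the first step: ensuring that the truncation with tolerance $\tau$ realizes exactly a rank-preserving perturbation of $\rmH$; this is where the hypothesis $0<\tau<\sigma_{\min}(\rmH)$ is used decisively, since without it the jump in rank between $\tilde{\rmH}_\tau$ and $\rmH$ would break the continuity of the pseudo-inverse.
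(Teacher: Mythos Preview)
Your proof is correct and actually takes a simpler route than the paper's. The paper delegates the main step to its technical Lemma~\ref{lem:truncMP} (Appendix~\ref{append:tech}), which establishes $[\tilde A_\tau]^\dagger\tilde b\to A^\dagger b$ \emph{without} assuming that the truncation index $\kappa$ equals $r=\rank(\rmH)$: it splits the truncated SVD into a rank-$r$ block and a residual block (indices $r{+}1,\ldots,\kappa$) and controls the residual via a Davis--Kahan subspace-perturbation argument bounding $\lVert\tilde U_1^\top U_0\rVert_2$. You instead observe that Weyl's inequality forces $\kappa=r$ once $\lVert E\rVert_2<\tfrac12\min\{\sigma_{\min}(\rmH)-\tau,\tau\}$, so $[\tilde\rmH_\tau]^\dagger=(\tilde\rmH_r)^\dagger$ with $\rank(\tilde\rmH_r)=r$, and then invoke Stewart's continuity of $A\mapsto A^\dagger$ on fixed-rank matrices directly. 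Since the theorem only asks for an existence of $\delta$, shrinking $\delta$ to enforce $\kappa=r$ is perfectly legitimate, which makes your argument shorter; the paper's approach, on the other hand, yields the slightly stronger statement that the conclusion of Lemma~\ref{lem:truncMP} holds even when $\kappa>r$, at the cost of the Davis--Kahan machinery. Both proofs finish identically by continuity of monic polynomial roots in their coefficients.
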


\begin{proof}
    The proof directly follows from Lemma~\ref{lem:truncMP} in the Appendix~\ref{append:tech} and the fact that $\sfD(z)\sfC^\star(z)$ and $z^N-\sum_{k=0}^{N-1}\tilde{a}_k z^k$
    have the same degree, and is thus omitted.
\end{proof}

Combining Theorems~\ref{thm:main} and \ref{thm:noisy}, it can be concluded that the latent poles in \eqref{eq:DDR MP} from noisy data, i.e.,
$\tilde{q}_j$'s in \eqref{eq:noisy thm}
are stable under sufficiently small noise.
Note that unlike the noise-free case,
the latent poles $\tilde{q}_j$'s in \eqref{eq:noisy thm} depend on data.

\begin{remark}
    Unlike computation of $[\tilde{\rmH}_\tau]^\dagger$,
    low-rank approximation of $\tilde{\rmH}$ requires that the approximated $\tilde{\rmH}$ has the same rank as $\rmH$.
    When the desired rank is unknown,
    such approximation
    often involves a heuristic procedure
    of inspecting the
    relatively dominant singular values of a given matrix,
    as done in \cite{SPC,MIMO}.
    On the contrary,
    our analysis of \eqref{eq:DDR MP} with $\tilde{Y}_\ff[\tilde{\rmH}_\tau]^\dagger$
    holds without relying on such a heuristic procedure
    or requiring the knowledge of the system's minimal order $n$ to know $\rank(\rmH)$.
\end{remark}

\begin{remark}\label{rem:avg}
    When the measurement noise has zero mean,
    its effect on \eqref{eq:DDR MP}
    can be mitigated in practice by averaging $\tilde{Y}_\ff[\tilde{\rmH}_\tau]^\dagger$ across multiple data trajectories,
    since
    $Y_\ff\rmH^\dagger$ itself is invariant to the specific data
    by Theorem~\ref{thm:main}.
    Although one can construct a large mosaic-Hankel matrix from multiple data trajectories \cite{mosaicHankel},
    it is generally more computationally efficient to compute the Moore–Penrose inverses of the individual Hankel matrices constructed from each trajectory,
    rather than that of a single large mosaic-Hankel matrix.
    However,
    this averaging method requires that each data trajectory satisfy the condition \eqref{eq:rankcond} in the absence of noise, under a fixed $N$.
    Note also that it yields a biased predictor of $Y_\ff\rmH^\dagger$ due to the noise in $\tilde{\rmH}$.
\end{remark}

\subsection{Numerical Examples}\label{subsec:pole ex}

We provide numerical examples
that demonstrate the stability of the latent poles in the data-driven representation \eqref{eq:DDR MP}.

\subsubsection{Stable System}
We describe how the result of Fig.~\ref{fig:forward} is obtained.
Consider a mass-spring-damper system of two point masses
that are connected
sequentially by springs and dampers between two walls.
The system is written as \cite{MSD}
\begin{align}
    \mathrm{m}_1\ddot{p}_1&=-(\mathrm{k}_0+\mathrm{k}_1)p_1 -(\mathrm{d}_0+\mathrm{d}_1)\dot{p}_1+\mathrm{k}_1p_2 \notag\\ &\qquad +\mathrm{d}_1\dot{p}_2+u, \notag \\
    \mathrm{m}_2\ddot{p}_2&=\mathrm{k}_1p_1+\mathrm{d}_1\dot{p}_1-(\mathrm{k}_1+\mathrm{k}_2)p_2-(\mathrm{d}_1+\mathrm{d}_2)\dot{p}_2, \notag\\
    y&=p_1,\label{eq:MSD}
\end{align}
and is stable.
Let the parameters of \eqref{eq:MSD} be $\mathrm{m}_1=10$, $\mathrm{m}_2=9$, $\mathrm{k}_0=0.5$, $\mathrm{k}_1=9$, $\mathrm{k}_2=0.1$, $\mathrm{d}_0=0.2$, $\mathrm{d}_1=1.8$, and $\mathrm{d}_2=0.3$.
It is discretized under the sampling period $0.05$ s, and its minimal order $n$ is $4$.

We collected input-output data 
$u^\dd$ and $y^\dd$
of length $100$
and constructed data Hankel matrices with $N=6$
that satisfy the condition \eqref{eq:rankcond},
as described in Remark~\ref{rem:pe}.
The representation \eqref{eq:intro h} is implemented for two cases, when $h^\top=Y_\ff\rmH^\dagger$ and when $h^\top=Y_\ff\rmH^\rmG$ with
a randomly generated\footnote{When a matrix $A$ is written in the singular value decomposition, its (non-unique) generalized inverse can be written as $$A = U \begin{bmatrix} \Sigma & 0 \\ 0 & 0 \end{bmatrix} V^\top, \qquad A^\rmG = V \begin{bmatrix} \Sigma^{-1} & X \\ Y & Z \end{bmatrix} U^\top$$
where $U$ and $V$ are orthogonal, $\Sigma$ is diagonal with nonzero singular values, and $X$, $Y$, and $Z$ are any matrices of suitable sizes. We have randomly generated these $X$, $Y$, and $Z$ to obtain a generalized inverse.}
generalized inverse $\rmH^\rmG$.
Given the input $u(t)=2\sin(0.05t)$, Fig.~\ref{fig:forward} compares the outputs of \eqref{eq:intro h} for these two cases with the true output of the system when the initial state is $(p_1,p_2,\dot{p}_1,\dot{p}_2)=(0,0.1,0,0)$.
For both cases, it is assumed that the initial trajectory of the true output is known for $t\in[0,N)$.

As $N-n=2$, there are two latent poles in \eqref{eq:intro h}.
When $h^\top=Y_\ff\rmH^\dagger$,
they are located at $-0.6669 \pm 0.4714i$,
inside the unit circle.
On the other hand, they are at $-1.3667\pm 0.5788i$ when $h^\top=Y_\ff\rmH^\rmG$ with our randomly generated $\rmH^\rmG$.
Ideally,
the behavior of \eqref{eq:intro h} should be
identical to that of the system, as long as $h$ satisfies $Y_\ff=h^\top\rmH$.
However, since
there is no perfect pole-zero cancellation in reality,
numerical errors are amplified by
the unstable latent poles,
as depicted
in Fig.~\ref{fig:forward}.

\subsubsection{Unstable System}\label{subsubsec:invpen}

We validate the stability of the latent poles in \eqref{eq:DDR MP} when the system is unstable,
while varying their number and injecting noise into the data.
Consider an inverted pendulum on a cart,
which is written as
\begin{equation}\label{eq:invpen}
    \begin{aligned}
        \left(\mathrm{I}+\mathrm{m}\mathrm{l}^2\right)\ddot{\theta}-\mathrm{m}\mathrm{g}\mathrm{l}\theta &= \mathrm{m}\mathrm{l}\ddot{x},\\
        \left(\mathrm{M}+\mathrm{m}\right)\ddot{x}+\mathrm{b}\dot{x}-\mathrm{m}\mathrm{l}\ddot{\theta}&=u,\quad y=x+2\mathrm{l}\theta
    \end{aligned}
\end{equation}
after linearization \cite{franklin}.
The parameters in \eqref{eq:invpen} are set as $\mathrm{M}=0.5$, $\mathrm{m}=0.2$, $\mathrm{b}=0.1$, $\mathrm{l}=0.3$, $\mathrm{I}=0.006$, and $\mathrm{g}=9.81$.
The system \eqref{eq:invpen}
is discretized under the sampling period $0.05$ s.
Note that it is unstable and has minimal order $n=4$.

We collected $10$ pairs of input-output data $u^\dd$ and $y^\dd$ such that $\col(u^\dd,y^\dd)\in \calW_{50}$
and the condition \eqref{eq:rankcond} holds
for all $N\in[4,16]$
when the matrices of \eqref{eq:UY} are $\calH_{N+1}(u^\dd)$ and $\calH_{N+1}(y^\dd)$, respectively.
Using the method of Remark~\ref{rem:avg},
$Y_\ff\rmH^\dagger$ is averaged across the $10$ input-output data trajectories.
We also collected the same data in the presence of output measurement noise,
which follows a zero-mean Gaussian distribution
with standard deviation $\sigma=0.01$,
and obtained $\tilde{Y}_\ff[\tilde{\rmH}_\tau]^\dagger$ in the same manner.
For the computation of the truncated Moore-Penrose inverses,
the tolerance $\tau$ is not manually set
and the function \verb|pinv| in MATLAB is used under the default setting.

Fig.~\ref{fig:pzplot} shows that
all the latent poles of \eqref{eq:DDR MP} are stable for different values of $N$.
Specifically,
in the noise-free case,
the latent poles are canceled by the corresponding zeros,
while the remaining poles and zeros coincide with those of the system, therefore overlapped and invisible in Fig.~\ref{fig:pzplot}.
The latent poles are also inside the unit circle in the noisy case,
but they do not overlap with zeros,
and the rest of the poles and zeros deviate from those of the system.

Unlike the denominator \eqref{eq:noisy thm} of
the noisy representation's transfer function,
the numerator
$\sum_{k=0}^N \tilde{b}_k z^k$ can have
different degree from $\sfC^\star(z)\sfN(z)$ because noise may create a nonzero higher-order coefficient;
in this example,
the latter has degree $N-1$, however,
the former has degree $N$ with a small leading coefficient due to noise.
This implies that the representation \eqref{eq:DDR MP} has one more zero in the noisy case than in the noise-free case.
Nevertheless,
under sufficiently small noise,
$\sum_{k=0}^N \tilde{b}_k z^k$ has one distinctive large root and the rest of the roots are located near those of $\sfC^\star(z)\sfN(z)$,
by the principle of \cite[Lemma~1]{DOB}.
This additional zero does not appear in Fig.~\ref{fig:pzplot} due to its large magnitude,
and can be easily removed in practice.

\begin{figure*}
    \centering
    \begin{subfigure}[b]{0.3\textwidth}
        \centering
        \includegraphics[width=\linewidth]{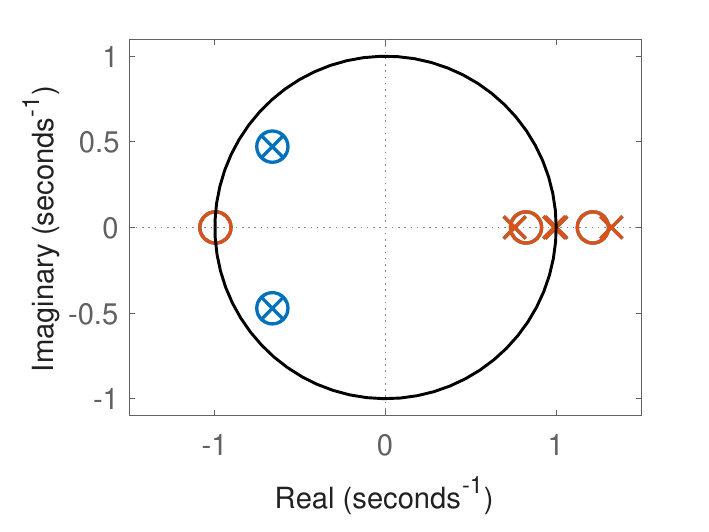}
        \caption{$N=6$, noise-free data}
    \end{subfigure}
    \hfill
    \begin{subfigure}[b]{0.3\textwidth}
        \centering
        \includegraphics[width=\linewidth]{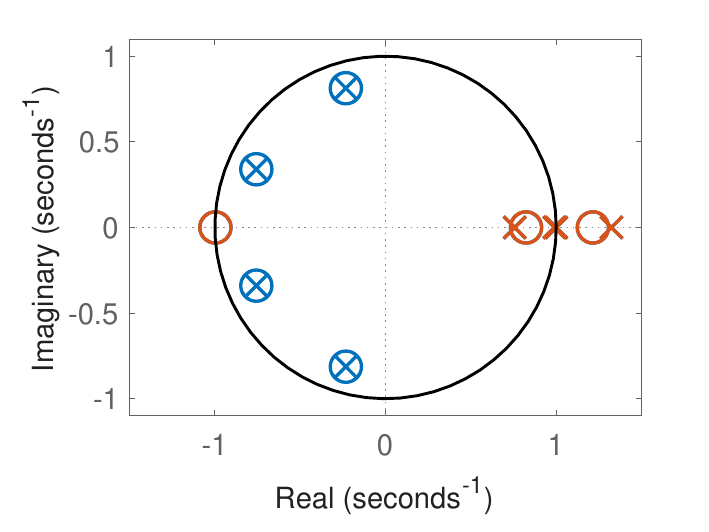}
        \caption{$N=8$, noise-free data}
    \end{subfigure}
    \hfill
    \begin{subfigure}[b]{0.3\textwidth}
        \centering
        \includegraphics[width=\linewidth]{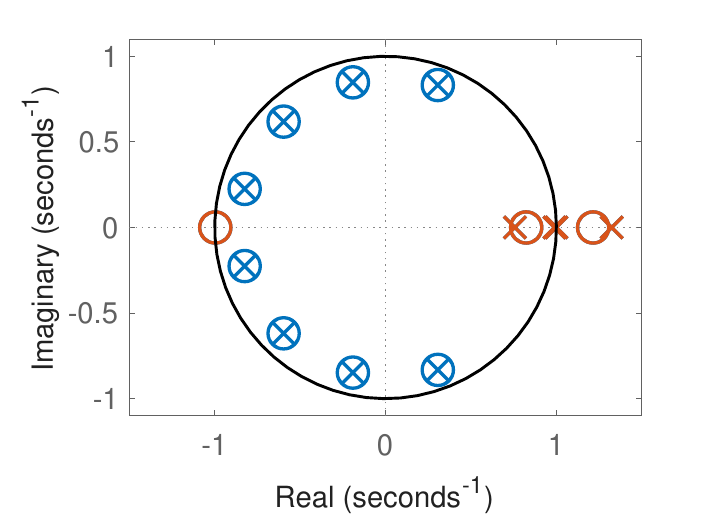}
        \caption{$N=12$, noise-free data}
    \end{subfigure}
    \hfill
    \begin{subfigure}[b]{0.3\textwidth}
        \centering
        \includegraphics[width=\linewidth]{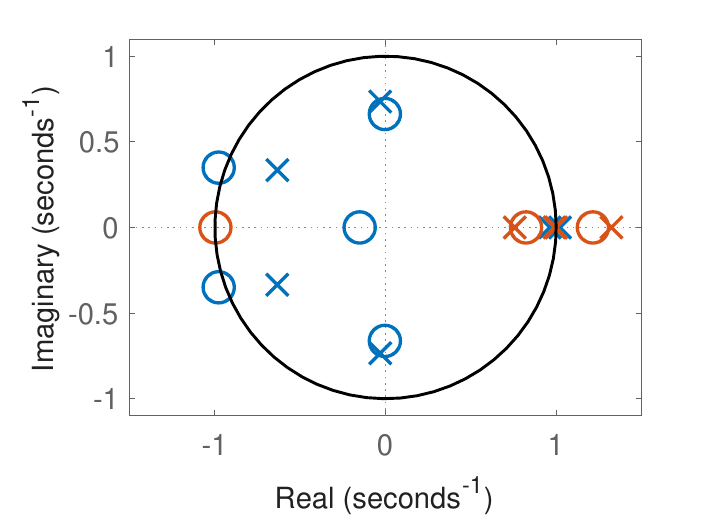}
        \caption{$N=6$, noisy data}
    \end{subfigure}
    \hfill
    \begin{subfigure}[b]{0.3\textwidth}
        \centering
        \includegraphics[width=\linewidth]{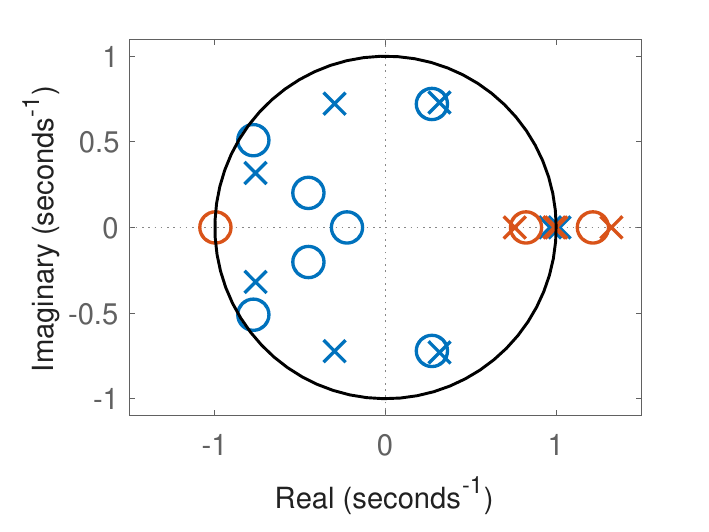}
        \caption{$N=8$, noisy data}
    \end{subfigure}
    \hfill
    \begin{subfigure}[b]{0.3\textwidth}
        \centering
        \includegraphics[width=\linewidth]{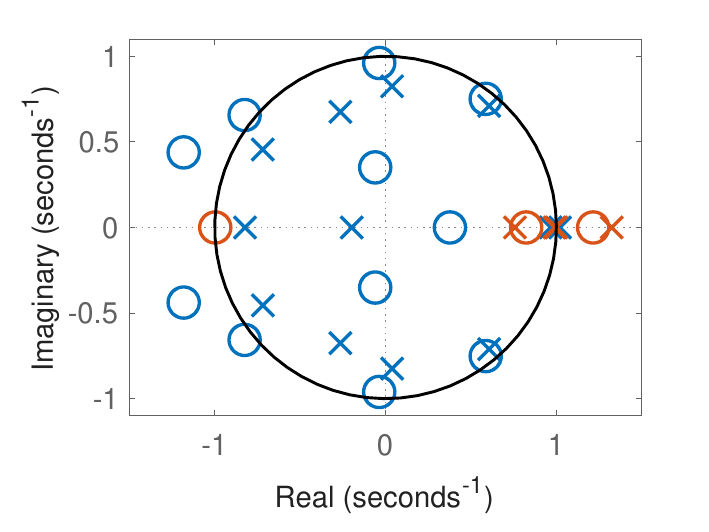}
        \caption{$N=12$, noisy data}
    \end{subfigure}
    \caption{Pole-zero plots of the system \eqref{eq:invpen} (the orange marks) and its data-driven representation \eqref{eq:DDR MP} from noise-free and noisy data (the blue marks). The black circle represents the unit circle.}
    \label{fig:pzplot}
\end{figure*}

\section{Application to Data-Driven\\
Output Feedback Control}\label{sec:control}

\subsection{Non-Minimal State-Space Realization}

We discuss how the data-driven representation \eqref{eq:DDR MP},
which is guaranteed to contain only stable latent poles,
can be used for data-driven output feedback control.
To this end,
we formulate a non-minimal state-space realization,
where the state variable $\chi(t)\in\R^{pN(m+1)}$ consists of the past inputs and outputs:
\begin{equation}\label{eq:chi}
        \chi(t):=\begin{bmatrix}
            \chi_1(t)\\ \vdots \\ \chi_p(t)
        \end{bmatrix},\quad
        \chi_i(t):=\begin{bmatrix}
            y_i(t-N:t-1)\\
            u(t-N:t-1)
        \end{bmatrix}.
\end{equation}
With this state variable,
the data-driven representation \eqref{eq:DDR MP} can be equivalently written as
\begin{equation}\label{eq:NIOR}
    \begin{aligned}
        \chi(t+1)&=\calA\chi(t)+\calB u(t),\\
        y(t)&=\calC\chi(t)+\calD u(t),
    \end{aligned}
\end{equation}
where the matrices are defined using \eqref{eq:YHdagger}, by
\begin{multline*}
    (\calA,\calB,\calC,\calD):=\\
    \left(\begin{bmatrix}
			\calA_1 & & \\
			& \ddots & \\
			& & \calA_p
		\end{bmatrix},\begin{bmatrix}
		\calB_1\\ \vdots \\ \calB_p
		\end{bmatrix},\begin{bmatrix}
			\calC_1 & & \\
			& \ddots & \\
			& & \calC_p
		\end{bmatrix},\begin{bmatrix}
		\calD_1\\ \vdots \\ \calD_p
		\end{bmatrix}\right)
\end{multline*}
with for each $i$,
\begin{align*}
    &\left[\begin{array}{c|c}
        \calA_i & \calB_i\\
        \hline
        \calC_i & \calD_i
    \end{array}\right]:=\\
    &\scalebox{0.88}{\renewcommand{\arraystretch}{1.1}
    $\left[\begin{array}{cccccccc|c}
			0 & 1 & &  & & &  & & 0_m^\top\\
			\vdots & & \ddots & & & & & & \vdots\\
			0 & & & 1 & & & & & 0_m^\top\\
			a_{0,i} & a_{1,i} & \cdots & a_{N-1,i} & b_{0,i}^\top & b_{1,i}^\top & \cdots & b_{N-1,i}^\top & b_{N,i}^\top\\
			& & & & 0_{m\times m} & I_m & &  & 0_{m\times m}\\
			& & & & \vdots & & \ddots & & \vdots \\
			& & & & 0_{m\times m} & & & I_m & 0_{m\times m}\\
			& & & & 0_{m\times m} & & \cdots & 0_{m\times m} & I_m\\
            \hline
            a_{0,i} & a_{1,i} & \cdots & a_{N-1,i} & b_{0,i}^\top & b_{1,i}^\top & \cdots & b_{N-1,i}^\top & b_{N,i}^\top
		\end{array}\right]$}\!.
\end{align*}

This realization is clearly non-minimal,
and this stems from two sources.
One is the presence of the latent poles,
but thanks to their stability,
they do not cause any significant issues
since our design relies on the stabilizability and detectability of \eqref{eq:NIOR}.
The other is the structure of \eqref{eq:NIOR} in the multi-output case
that the input signals repeatedly appear in the state variable $\chi$.
We claim that this second source may damage the stabilizability,
showing that \eqref{eq:NIOR} is always detectable.
Moreover,
we present a necessary and sufficient condition for the stabilizability of \eqref{eq:NIOR},
which always holds for single-output systems.

\begin{proposition}\label{prop:detect}
    The pair $(\calC,\calA)$ of \eqref{eq:NIOR} is detectable.
\end{proposition}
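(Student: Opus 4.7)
The plan is to exploit the block-diagonal structure of $(\calC,\calA)$ together with the companion-matrix structure of each block, and then apply the PBH detectability test. First, since $\calA = \diag(\calA_1,\ldots,\calA_p)$ and $\calC = \diag(\calC_1,\ldots,\calC_p)$, the pair $(\calC,\calA)$ is detectable iff each $(\calC_i,\calA_i)$ is detectable, so I would fix $i$ and work with a single block.

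Next, I would inspect the displayed structure of $\calA_i$ and observe that it is block upper triangular, of the form $\calA_i = \left[\begin{smallmatrix} A_{11}^{(i)} & A_{12}^{(i)} \\ 0 & A_{22}^{(i)} \end{smallmatrix}\right]$, where $A_{11}^{(i)}\in\R^{N\times N}$ is precisely the companion matrix of $z^N-\sum_{k=0}^{N-1}a_{k,i}z^k=\sfC_i^\star(z)\sfD_i(z)$ (by Theorem~\ref{thm:main}), the off-diagonal $A_{12}^{(i)}$ has nonzero entries only in its last row (containing $b_{0,i}^\top,\ldots,b_{N-1,i}^\top$), and $A_{22}^{(i)}\in\R^{mN\times mN}$ is the nilpotent shift operator on the past-input coordinates. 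Consequently, the spectrum of $\calA_i$ is the union of the roots of $\sfC_i^\star(z)\sfD_i(z)$ and $\{0\}$, so every \emph{unstable} eigenvalue $\lambda$ of $\calA_i$ (with $|\lambda|\geq 1$) is in particular nonzero and is a root of $\sfD_i(z)$.

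For the PBH test at such $\lambda$, I would take any eigenvector $v=\col(v_1,v_2)$ of $\calA_i$. Because $A_{22}^{(i)}$ is nilpotent and $\lambda\neq 0$, the bottom equation $(A_{22}^{(i)}-\lambda I)v_2=0$ forces $v_2=0$, and then $v_1$ must be an eigenvector of the companion matrix $A_{11}^{(i)}$. For a companion matrix this eigenvector is the Vandermonde vector $v_1=[1,\lambda,\lambda^2,\ldots,\lambda^{N-1}]^\top$ (up to scaling). Reading off $\calC_i$ from the displayed formula, its first $N$ entries are exactly $[a_{0,i},\ldots,a_{N-1,i}]$, so
\begin{equation*}
\calC_i v = \sum_{k=0}^{N-1} a_{k,i}\lambda^k = \lambda^N \neq 0,
\end{equation*}
where the middle equality uses that $\lambda$ satisfies $\lambda^N-\sum_{k=0}^{N-1}a_{k,i}\lambda^k=0$. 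Hence $\begin{bmatrix}\lambda I-\calA_i \\ \calC_i\end{bmatrix}$ has no nontrivial null vector, the PBH rank condition holds for every $|\lambda|\geq 1$, and detectability of $(\calC_i,\calA_i)$ follows.

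I do not anticipate a serious obstacle: once the block upper-triangular decomposition is recognized, the companion-matrix eigenvector identity collapses the computation. The one point requiring a line of care is that $A_{11}^{(i)}$ may have repeated roots, so one must note that a companion matrix has a single Jordan block per eigenvalue and thus a one-dimensional eigenspace, leaving no room for hidden eigenvectors beyond the Vandermonde vector treated above; the eigenvalue $\lambda=0$ needs no analysis because detectability concerns only $|\lambda|\geq 1$.
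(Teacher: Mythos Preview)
Your proof is correct and follows essentially the same route as the paper: both reduce to a single block $(\calC_i,\calA_i)$, apply the PBH test, and show that any unobservable mode must satisfy $\lambda=0$. The paper does this by writing out the shift relations directly and combining the $N$-th row of $\calA_i\xi_i=\lambda\xi_i$ (which equals $\calC_i\xi_i$) with $\calC_i\xi_i=0$ to get $\lambda v_{N-1,i}=0$; you arrive at the same conclusion via the block upper-triangular decomposition and the Vandermonde eigenvector of the companion block, which is a slightly more structured but equivalent packaging of the same computation. One minor remark: your appeal to Theorem~\ref{thm:main} to assert that unstable eigenvalues are roots of $\sfD_i(z)$ is unnecessary (and in fact Proposition~\ref{prop:detect} does not assume the rank condition \eqref{eq:rankcond} that Theorem~\ref{thm:main} requires); your argument only uses $\lambda\neq 0$, which already follows from $|\lambda|\geq 1$.
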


\begin{proof}
    To avoid the trivial case, suppose that $(\calC,\calA)$ is unobservable.
    Then, there exist $\lambda\in \C$ and
    a nonzero vector $\xi=\col(\xi_1,\,\ldots,\,\xi_p)\in\R^{pN(m+1)}$ ($\xi_i\in\R^{N(m+1)}$ for each $i$)
    such that
    $\calC\xi=0_p$ and $\calA\xi=\lambda\xi$.
    This implies that there is a nonzero $\xi_i$ such that
    $\calC_i\xi_i=0$ and $\calA_i\xi_i=\lambda\xi_i$.
    Let us write
    $\xi_i=\left[v_{0,i},\,\ldots,\,v_{N-1,i},\,w_{0,i}^\top,\,\ldots,\,w_{N-1,i}^\top\right]^\top$,
    where $v_{j,i}\in\R$ and $w_{j,i}\in\R^m$ for each $j$.
    Then, from the structure of $(\calC_i,\calA_i)$,
    \begin{align*}
        v_{j+1,i} &= \lambda v_{j,i}, & w_{j+1,i} &= \lambda w_{j,i}, \qquad j \in \{0, \ldots, N-2\} \\
        0 &= \lambda v_{N-1,i}, & 0_m &= \lambda w_{N-1,i}.
    \end{align*}
    We claim that $\lambda=0$ so that $(\calC,\calA)$ is detectable.
    Indeed, if not,
    the above relation implies that $v_{N-1,i}=\cdots=v_{0,i}=0$ and $w_{N-1,i}=\cdots=w_{0,i}=0_m$,
    yielding a contradiction.
\end{proof}

For simplicity, let us define
\begin{equation*}
    \sfD_i^\star(z):=z^N- \sum_{k=0}^{N-1}a_{k,i}z^k,\quad \sfN_i^\star(z):=\sum_{k=0}^{N}b_{k,i}^\top z^k
\end{equation*}
for each $i\in \lbrace 1,\,\ldots,\,p\rbrace$.

\begin{proposition}\label{prop:stabil}
    Under the rank condition \eqref{eq:rankcond}, the pair $(\calA,\calB)$ of \eqref{eq:NIOR} is stabilizable if and only if the set of vectors
    \begin{equation}\label{eq:set}
    \left\lbrace
    \left[\real(\sfN_i^\star(\lambda)),\,\imagin(\sfN_i^\star(\lambda))\right]^\top:
    i\text{ such that }\sfD_i^\star(\lambda)=0\right\rbrace
    \end{equation}
    is linearly independent for all $\lambda\in\C$ such that $\lvert \lambda \rvert \geq 1$.
\end{proposition}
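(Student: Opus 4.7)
The plan is to invoke the Popov--Belevitch--Hautus (PBH) criterion: $(\calA,\calB)$ is stabilizable iff $[\lambda I-\calA\mid \calB]$ has full row rank over $\C$ for every $\lambda\in\C$ with $\lvert\lambda\rvert\geq 1$; equivalently, no nonzero $\xi=\col(\xi_1,\ldots,\xi_p)\in\C^{pN(m+1)}$ satisfies both $\xi_i^\top(\lambda I-\calA_i)=0$ for every $i$ and $\sum_i \xi_i^\top \calB_i=0$. The block-diagonal form of $\calA$ and the stacked form of $\calB$ reduce this to a per-block left-eigenvector analysis together with one combination constraint in the $\calB_i$-directions. Each $\calA_i$ is block upper-triangular: the top-left block is the (transpose) companion matrix of $\sfD_i^\star(z)$, while the bottom-right block is the nilpotent shift register for the past inputs, whose eigenvalues are all zero. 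Because $\lvert\lambda\rvert\geq 1\neq 0$, a nonzero left eigenvector of $\calA_i$ at $\lambda$ exists iff $\sfD_i^\star(\lambda)=0$, so attention restricts to the index set $I_\lambda:=\{i:\sfD_i^\star(\lambda)=0\}$, and for each $i\in I_\lambda$ the left eigenspace is one-dimensional.

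For such $i$ I would write $\xi_i=c_i\hat\xi_i$ with $c_i\in\C$ and a canonical $\hat\xi_i$ whose $v$-part is the standard companion left eigenvector (entries given by the coefficients of $\sfD_i^\star(z)/(z-\lambda)$, normalized so the last coordinate is $1$) and whose $w$-part is recovered by inverting $\lambda I-A_w$ through the finite Neumann series $\sum_{k=0}^{N-1}A_w^k/\lambda^{k+1}$, where $A_w$ denotes the nilpotent shift block. The key computation is a closed-form evaluation of $\hat\xi_i^\top\calB_i$: since $\calB_i$ has only two nonzero row-blocks---the $b_{N,i}^\top$ entry in the last row of the $v$-block and the $I_m$ block driving $w_{N-1}(t+1)=u(t)$---the normalization together with a telescoping of the $w$-part contributions assemble into the polynomial $\sfN_i^\star$ and yield
\[
\hat\xi_i^\top\calB_i = \frac{\sfN_i^\star(\lambda)}{\lambda^N}.
\]
Hence the constraint $\sum_i \xi_i^\top\calB_i=0$ collapses to $\sum_{i\in I_\lambda} c_i\,\sfN_i^\star(\lambda)=0$ in $\C^{1\times m}$, and stabilizability at $\lambda$ becomes equivalent to the complex linear independence of $\{\sfN_i^\star(\lambda)\}_{i\in I_\lambda}$.

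To reach the real statement in \eqref{eq:set}, I would separate $c_i$ and $\sfN_i^\star(\lambda)$ into real and imaginary parts, using that $\sfD_i^\star$ and $\sfN_i^\star$ have real coefficients so that the condition at $\bar\lambda$ imposes nothing new. The main obstacle I expect is the bookkeeping in the companion-eigenvector construction and the telescoping that produces the identity $\hat\xi_i^\top\calB_i=\sfN_i^\star(\lambda)/\lambda^N$; once that identity is established, the rest of the argument is routine linear algebra, with the single-output case $p=1$ recovering unconditional stabilizability because $\sfN_i^\star(\lambda)\neq 0$ for $\lvert\lambda\rvert\geq 1$ by coprimeness of $\sfN_i,\sfD_i$ and Schur stability of $\sfC_i^\star$ from Theorem~\ref{thm:main}.
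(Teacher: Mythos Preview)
Your approach is essentially the paper's: both apply the PBH test, exploit the block-triangular structure of each $\calA_i$ to see that for $\lambda\neq 0$ the left eigenspace is one-dimensional exactly when $\sfD_i^\star(\lambda)=0$, and reduce the $\calB$-constraint to $\sum_{i\in I_\lambda} v_{N-1,i}\,\sfN_i^\star(\lambda)=0$. Your identity $\hat\xi_i^\top\calB_i=\sfN_i^\star(\lambda)/\lambda^N$ is exactly what the paper obtains in \eqref{eq:stabil pf blambda}--\eqref{eq:stabil pf vN} once $v_{N-1,i}$ is normalized to $1$; the paper simply writes out the coordinate recursions \eqref{eq:stabil a}--\eqref{eq:stabil b} directly rather than packaging the $w$-part inversion as a Neumann series.

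The step you should \emph{not} dismiss as routine is your last paragraph. PBH hands you \emph{complex} linear independence of $\{\sfN_i^\star(\lambda)\}_{i\in I_\lambda}$, whereas \eqref{eq:set} asks for \emph{real} linear independence of the stacked vectors $[\real(\sfN_i^\star(\lambda)),\,\imagin(\sfN_i^\star(\lambda))]^\top\in\R^{2m}$, which is a strictly weaker notion (for instance, $1$ and $i$ in $\C$ are $\C$-dependent, yet their stacks $[1,0]^\top$ and $[0,1]^\top$ are $\R$-independent). Conjugate symmetry at $\bar\lambda$ does not bridge this gap, so ``separate into real and imaginary parts'' does not finish the argument as you suggest. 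The paper's own proof glosses over the same point by declaring $\xi\in\R^{pN(m+1)}$, which is only legitimate when $\lambda$ is real; for non-real $\lambda$ the PBH vector is genuinely complex and so are the $v_{N-1,i}$. Thus your plan is no less complete than the paper's, but be aware that the delicate spot is the passage from complex to real coefficients, not the telescoping that yields $\sfN_i^\star(\lambda)/\lambda^N$.
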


\begin{proof}
    Consider $\xi=\col(\xi_1,\,\ldots,\,\xi_p)\in\R^{pN(m+1)}$
    where $\xi_i=\left[v_{0,i},\,\ldots,\,v_{N-1,i},\,w_{0,i}^\top,\,\ldots,\,w_{N-1,i}^\top\right]^\top\in\R^{N(m+1)}$
    for each $i$
    with $v_{j,i}\in\R$ and $w_{j,i}\in\R^m$ for each $j$.
    Given $\lambda\in \C$,
    there exists a nonzero $\xi$ such that
    \begin{equation}\label{eq:stabil pf AB}
        \xi^\top \calA=\lambda \xi^\top \quad \text{and}\quad \xi^\top \calB=0_m^\top
    \end{equation}
    if and only if
    there exists $i$ such that $\xi_i$ is nonzero,
    \begin{align}
        \xi_i^\top\calA_i&=\lambda\xi_i^\top,
        \label{eq:stabil pf Axi}\\
        \text{and}\quad \sum_{k=1}^p \xi_k^\top \calB_k&=0_m^\top.\label{eq:stabil pf Bxi}
    \end{align}
    Due to the structure of $(\calA,\calB)$,
    \eqref{eq:stabil pf Axi} is equivalent to
    \begin{equation}\label{eq:stabil a}
        \begin{aligned}
            a_{0,i}v_{N-1,i}&=\lambda v_{0,i},\\
            v_{j-1,i}+a_{j,i}v_{N-1,i}&=\lambda v_{j,i},\quad j\in \lbrace 1,\,\ldots,\,N-1\rbrace,
        \end{aligned}
    \end{equation}
    \begin{equation}\label{eq:stabil b}
        \begin{aligned}
            v_{N-1,i}b_{0,i}&=\lambda w_{0,i},\\
            v_{N-1,i}b_{j,i}+w_{j-1,i}&=\lambda w_{j,i},\quad j\in \lbrace 1,\,\ldots,\,N-1\rbrace,
        \end{aligned}
    \end{equation}
    and \eqref{eq:stabil pf Bxi} is equivalent to
    \begin{equation}\label{eq:stabil pf vbw}
        \sum_{k=1}^p v_{N-1,k}b_{N,k} +w_{N-1,k}=0_m.
    \end{equation}
    Note that \eqref{eq:stabil a} implies
    \begin{equation}\label{eq:stabil pf vD}
        v_{N-1,i}\sfD_i^\star(\lambda)=0,
    \end{equation}
    and \eqref{eq:stabil b} implies
    \begin{equation}\label{eq:stabil pf blambda}
        v_{N-1,i}\sum_{k=0}^{N-1} \lambda^k b_{k,i}=\lambda^N w_{N-1,i}.
    \end{equation}
    Finally, \eqref{eq:stabil pf vbw} and \eqref{eq:stabil pf blambda} imply
    \begin{equation}\label{eq:stabil pf vN}
        \sum_{k=1}^p v_{N-1,k}\sfN_k^\star(\lambda)=0_m^\top.
    \end{equation}
    (Sufficiency:)
    To avoid the trivial case, suppose that $(\calA,\calB)$ is uncontrollable
    so that there exist $\lambda$ and a nonzero $\xi$ such that \eqref{eq:stabil pf AB} holds.
    If $v_{N-1,i}=0$ for all $i$, then
    $\xi$ should be zero by \eqref{eq:stabil a} and \eqref{eq:stabil b}.
    If $v_{N-1,i}\neq 0$ for some $i$, then
    \eqref{eq:stabil pf vD} implies $\sfD_i^\star(\lambda)=0$ for such $i$.
    Then, \eqref{eq:stabil pf vN} becomes
    \begin{equation}\label{eq:stabil pf final}
        \sum_{\lbrace i: \sfD_i^\star(\lambda)=0\rbrace} v_{N-1,i}\sfN_i^\star(\lambda)=0_m^\top,
    \end{equation}
    which implies $\lvert \lambda \rvert<1$ by the assumption.
    (Necessity:)
    Suppose that there exists $\lambda\in\C$ such that $\lvert \lambda \rvert \geq 1$ and the set \eqref{eq:set} is linearly dependent.
    This means that for such $\lambda$, there exists a nonzero $v_{N-1,i}$ such that \eqref{eq:stabil pf final} holds.
    With such $\lambda$ and $v_{N-1,i}$, \eqref{eq:stabil a} and \eqref{eq:stabil b} yield a nonzero $\xi_i$, and therefore \eqref{eq:stabil pf AB} holds with a nonzero $\xi$ and $\lvert \lambda\rvert \geq 1$, concluding the proof.
\end{proof}

\begin{remark}
    The necessary and sufficient condition of Proposition~\ref{prop:stabil} can be checked on data, from $Y_{\ff,i}\rmH_i^\dagger$'s, by the definitions of $\sfD^\star_i(z)$'s and $\sfN^\star_i(z)$'s.
    Moreover,
    it suffices to inspect the set \eqref{eq:set}
    only for $\lambda$'s such that $\sfD_i^\star(\lambda)=0$ for more than one $i$'s;
    if there is no such $i$, then the set \eqref{eq:set} is empty and hence is linearly independent.
    If there is only one such $i$, then the set \eqref{eq:set} is
    also linearly independent,
    since
    $\sfN_i^\star(\lambda)=\sfC_i^\star(\lambda)\sfN_i(\lambda)$
    is nonzero
    because
    $\sfC_i^\star(\lambda)\neq 0$
    for $\lvert \lambda\rvert \geq 1$
    (Theorem~\ref{thm:main})
    and $\sfN_i(\lambda)\neq 0_m^\top$ when $\sfD_i(\lambda)=0$.
\end{remark}

In what follows,
we provide
sufficient conditions for the stabilizability of \eqref{eq:NIOR}
that are pure properties of the system,
based on the Schur stability of $\sfC_i^\star(z)$'s.

\begin{corollary}\label{cor:stabil}
    Under the condition \eqref{eq:rankcond},
    the pair $(\calA,\calB)$ of \eqref{eq:NIOR} is stabilizable if
    either of the following conditions holds:
    \begin{enumerate}
        \item $p=1$.
        \item The set $\lbrace \sfN_i(\lambda)^\top: i\text{ such that }\sfD_i(\lambda)=0\rbrace$ is linearly independent over $\C$
        for all $\lambda\in\C$ such that $\lvert \lambda \rvert \geq 1$.
    \end{enumerate}
\end{corollary}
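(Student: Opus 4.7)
The plan is to derive both items as consequences of Proposition~\ref{prop:stabil}, using the factorizations $\sfD_i^\star(z) = \sfC_i^\star(z)\sfD_i(z)$ and $\sfN_i^\star(z) = \sfC_i^\star(z)\sfN_i(z)$ from Theorem~\ref{thm:main} together with the Schur stability of every $\sfC_i^\star(z)$. By Proposition~\ref{prop:stabil}, it suffices to fix an arbitrary $\lambda \in \C$ with $\lvert\lambda\rvert \geq 1$ and show that the real vectors
\[
  \bigl[\real(\sfN_i^\star(\lambda)),\,\imagin(\sfN_i^\star(\lambda))\bigr]^\top,\quad i \in S(\lambda) := \{i : \sfD_i^\star(\lambda)=0\},
\]
are linearly independent over $\R$. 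As a first reduction, note that $\sfC_i^\star(\lambda) \neq 0$ for every $i$ and every such $\lambda$ by Theorem~\ref{thm:main}, so $\sfD_i^\star(\lambda) = 0$ is equivalent to $\sfD_i(\lambda) = 0$, i.e.\ $S(\lambda) = \{i : \sfD_i(\lambda)=0\}$.

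For case (1), $p = 1$, the candidate set contains at most one element. If $1 \in S(\lambda)$, then $\sfD_1(\lambda) = 0$, and by the minimality of $\sfG_1(z)$ (coprimeness of $\sfD_1$ and $\sfN_1$) we have $\sfN_1(\lambda) \neq 0_m^\top$. Hence $\sfN_1^\star(\lambda) = \sfC_1^\star(\lambda)\sfN_1(\lambda)$ is a nonzero complex row vector, which makes $[\real(\sfN_1^\star(\lambda)),\,\imagin(\sfN_1^\star(\lambda))]^\top$ a single nonzero vector, and a single nonzero vector is trivially linearly independent. If $S(\lambda)$ is empty the independence claim is vacuous. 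Thus Proposition~\ref{prop:stabil} yields stabilizability.

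For case (2), suppose real scalars $\{c_i\}_{i \in S(\lambda)}$ satisfy
\[
  \sum_{i \in S(\lambda)} c_i\,\bigl[\real(\sfN_i^\star(\lambda)),\,\imagin(\sfN_i^\star(\lambda))\bigr]^\top = 0,
\]
which is equivalent to $\sum_{i \in S(\lambda)} c_i\,\sfN_i^\star(\lambda) = 0_m^\top$ as a complex row-vector identity. Substituting $\sfN_i^\star(\lambda) = \sfC_i^\star(\lambda)\sfN_i(\lambda)$ gives
\[
  \sum_{i \in S(\lambda)} \bigl(c_i\,\sfC_i^\star(\lambda)\bigr)\sfN_i(\lambda) = 0_m^\top,
\]
a $\C$-linear combination of the vectors $\{\sfN_i(\lambda)^\top : i \in S(\lambda)\}$. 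By the assumed $\C$-linear independence of this set, every coefficient $c_i\,\sfC_i^\star(\lambda)$ vanishes, and since $\sfC_i^\star(\lambda) \neq 0$, each $c_i = 0$. Hence Proposition~\ref{prop:stabil} again delivers stabilizability.

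The main obstacle is not a hard calculation but a careful bookkeeping of the shift between real and complex linear independence: the condition in Proposition~\ref{prop:stabil} is naturally phrased over $\R$, while the hypothesis of (2) is phrased over $\C$, and the bridge is precisely the nonzero scalar $\sfC_i^\star(\lambda)$ supplied by Theorem~\ref{thm:main}. Once that observation is in place, both statements reduce to one-line verifications.
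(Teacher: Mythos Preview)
Your proof is correct and follows essentially the same approach as the paper's: both reduce to Proposition~\ref{prop:stabil}, use Theorem~\ref{thm:main} to factor $\sfD_i^\star,\sfN_i^\star$ and to ensure $\sfC_i^\star(\lambda)\neq 0$ for $\lvert\lambda\rvert\ge 1$, and then verify real linear independence of the vectors in \eqref{eq:set} directly for $p=1$ and via the assumed $\C$-independence for case~(2). Your explicit reduction $S(\lambda)=\{i:\sfD_i(\lambda)=0\}$ makes the index-set switch a bit cleaner than in the paper, but otherwise the arguments coincide.
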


\begin{proof}
    When $p=1$,
    the set \eqref{eq:set} is a singleton of a nonzero vector
    for any $\lambda\in\C$ such that $\lvert \lambda\rvert \geq 1$,
    since
    $\sfC^\star(\lambda)\sfN(\lambda)\neq 0_m^\top$
    because
    $\sfN(\lambda)\neq 0_m^\top$ when $\sfD(\lambda)=0$,
    and $\sfC^\star(\lambda)\neq 0$ as
    $\sfC^\star(z)$ is a Schur stable polynomial by Theorem~\ref{thm:main}.

    Now suppose that the condition 2) holds.
    Given $\lambda\in\C$ such that $\lvert \lambda\rvert \geq 1$,
    let
    \begin{equation*}
        \sum_{\lbrace i: \sfD_i^\star(\lambda)=0\rbrace} \omega_i \left[\real(\sfN_i^\star(\lambda)),\,\imagin(\sfN_i^\star(\lambda))\right]^\top
        =0_{2m}.
    \end{equation*}
    for some $\omega_i\in\R$.
    Since $\sfC_i^\star(\lambda)\neq 0$ for every $i$ by Theorem~\ref{thm:main},
    \begin{equation*}
        \sum_{\lbrace i: \sfD_i(\lambda)=0\rbrace} \omega_i\sfN_i^\star(\lambda)=\sum_{\lbrace i: \sfD_i(\lambda)=0\rbrace} \omega_i\sfC_i^\star(\lambda)\sfN_i(\lambda)=0_m^\top,
    \end{equation*}
    which implies that $\omega_i\sfC_i^\star(\lambda)=0$ for all $i$ such that $\sfD_i(\lambda)=0$.
    Again,
    $\sfC_i^\star(\lambda)\neq 0$ by Theorem~\ref{thm:main} and thus $\omega_i=0$,
    concluding the proof by Proposition~\ref{prop:stabil}.
\end{proof}

Corollary~\ref{cor:stabil} specifies a class of systems where the realization \eqref{eq:NIOR} from data is always stabilizable,
which includes all single-output systems.
The following example presents
a multi-output system that satisfies condition 2) of Corollary~\ref{cor:stabil}.

\begin{example}\upshape
    Consider
    \begin{equation*}
        \sfG(z)=\begin{bmatrix}
            \frac{1}{z-2} & \frac{1}{z-3}\\
            \frac{1}{z} & \frac{1}{z-2}
        \end{bmatrix},
    \end{equation*}
    where $\sfD_1(z)=(z-2)(z-3)$, $\sfN_1(z)=[z-3,\,z-2]$, $\sfD_2(z)=z(z-2)$, and $\sfN_2(z)=[z-2,\,z]$.
    For $\lambda=2$, the set $\lbrace \sfN_i(2)^\top: i=1,\,2\rbrace=\lbrace [-1,0]^\top,\,[0,2]^\top \rbrace$
    is linearly independent (over $\C$).
    Thus, by Corollary~\ref{cor:stabil},
    $(\calA,\calB)$ is stabilizable.\qed
\end{example}

\subsection{Design of Output Feedback LQR Controller}

Given that the realization \eqref{eq:NIOR} is stabilizable,
which largely owes to the stability of the latent poles,
one can design a data-driven output feedback controller directly from \eqref{eq:NIOR};
this does not require additional assumptions or measures to construct a controllable realization from data.
As an example, we design an output feedback LQR controller from \eqref{eq:NIOR}.

Consider the following LQR problem with respect to \eqref{eq:sys_ss}:
\begin{equation}\label{eq:lqr}
	\begin{aligned}
		\minimize_{u[0],\,u[1],\,\ldots}\,\,\,&\sum_{k=0}^{\infty}y[k]^\top Q y[k]+u[k]^\top Ru[k]\\
		\text{subject to}\,\,\, & x[k+1]=Ax[k]+Bu[k],\,\,\, y[k]=Cx[k],\\
		& x[0]=x(0),
	\end{aligned}
\end{equation}
given positive definite matrices $Q\in\R^{p\times p}$ and $R\in\R^{m\times m}$.
In this subsection,
we assume that $D=0$, $(A,B)$ is stabilizable, and $(C,A)$ is detectable.
Thus, the problem \eqref{eq:lqr} has a unique optimal solution \cite{Lewis} since $(Q^{1/2}C,A)$ is detectable.

The problem \eqref{eq:lqr} can be alternatively written as
\begin{equation}\label{eq:lqr_dd}
	\begin{aligned}
		\minimize_{u[0],\,u[1],\,\ldots}\,\,\,&\sum_{k=0}^{\infty}y[k]^\top Q y[k]+u[k]^\top Ru[k]\\
		\text{subject to}\,\,\,&\chi[k+1]=\calA \chi[k]+\calB u[k],\quad y[k]=\calC\chi[k],\\
		& \chi[0]=\chi(0),
	\end{aligned}
\end{equation}
since \eqref{eq:NIOR} is a non-minimal realization of the system
and
$\chi(0)$,
defined by \eqref{eq:chi}, consists of the
system's
initial $N$ input-output pairs
that yield $x(0)$.
Due to this equivalence, the problem \eqref{eq:lqr_dd} also has a unique optimal solution.
Note that $\calD=0$ because $b_{N,i}=0_m$ for all $i\in \lbrace 1,\,\ldots,\,p\rbrace$ by $D=0$.

Assume that the pair $(\calA,\calB)$ is stabilizable, i.e., the condition of Proposition~\ref{prop:stabil} is satisfied.
Recall that $(\calC,\calA)$ is always detectable by Proposition~\ref{prop:detect}.
Then,
there exists a unique positive semidefinite solution $\calP$ to the algebraic Riccati equation
\begin{equation*}
        \calP=\calA^\top \calP\calA+\calC^\top Q\calC-\calA^\top \calP\calB\left(\calB^\top \calP\calB+R\right)^{-1}\calB^\top \calP\calA,
\end{equation*}
and with
$$\calK:=(\calB^\top \calP\calB+R)^{-1}\calB^\top \calP\calA,$$
$\calA-\calB\calK$ is Schur stable \cite{Lewis}.

Now we present a data-driven output feedback controller as
\begin{equation}\label{eq:LQRctr}
    \begin{aligned}
        \hat{\chi}(t+1)&=\calF\hat{\chi}(t)+\calG y(t),\\
        u(t)&=-\calK\hat{\chi}(t),
    \end{aligned}
\end{equation}
where $\hat{\chi}(t)\in\R^{pN(m+1)}$ is the state and
\begin{equation*}
    \calG:=I_p\otimes\begin{bmatrix}
        0_{N-1} \\ 1\\ 0_{mN}
    \end{bmatrix},\quad
    \calF:=\calA-\calG\calC-\calB\calK.
\end{equation*}
Note that $\calA-\calG\calC$ is exactly $\calA$ where all $a_{k,i}$'s and $b_{k,i}^\top$'s are replaced with zeros, and therefore is nilpotent.
As shown by the following theorem, the controller \eqref{eq:LQRctr} stabilizes the system \eqref{eq:sys_ss},
and also yields the optimal solution to the LQR problem \eqref{eq:lqr} when
its initial state consists of the system's initial input-output trajectory, i.e.,
when $\hat{\chi}(0)=\chi(0)$.

\begin{theorem}\label{thm:lqr}
    Under the condition \eqref{eq:rankcond}
    and the stabilizability of the pair $(\calA,\calB)$,
    the controller \eqref{eq:LQRctr} achieves the followings:
    \begin{enumerate}
        \item The closed-loop system \eqref{eq:sys_ss} with \eqref{eq:LQRctr} is stable.
        \item
        Let the optimal solution to \eqref{eq:lqr} be $u^\star[0],\,u^\star[1],\,\cdots$.
        If $\hat{\chi}(0)=\chi(0)$,
        then $u(t)=u^\star[t]$ for all $t\in\Z_{\geq 0}$.
    \end{enumerate}
\end{theorem}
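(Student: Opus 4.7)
My plan is to read the controller \eqref{eq:LQRctr} as a Luenberger observer for the non-minimal realization \eqref{eq:NIOR} combined with a state-feedback gain $\calK$, and then apply a separation-principle argument. Indeed, using $\calF = \calA - \calG\calC - \calB\calK$ and $u(t) = -\calK\hat\chi(t)$, the observer update can be rewritten as $\hat\chi(t+1) = \calA\hat\chi(t) + \calB u(t) + \calG(y(t) - \calC\hat\chi(t))$, which is the standard form. Since $\chi(t)$ in \eqref{eq:chi} is constructed from the \emph{actual} past inputs and outputs and \eqref{eq:NIOR} realizes the system, along any closed-loop trajectory one has $\chi(t+1) = \calA\chi(t) + \calB u(t)$ and $y(t) = \calC\chi(t)$ (recall $\calD = 0$).

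\textbf{Proof of claim 1 (stability).} Define the observer error $e(t) := \chi(t) - \hat\chi(t)$. Subtracting the two $\chi$-equations and using $y(t) = \calC\chi(t)$ gives
\begin{equation*}
    e(t+1) = (\calA - \calG\calC)\, e(t).
\end{equation*}
The paper already notes that $\calA - \calG\calC$ is the matrix obtained from $\calA$ by zeroing out the $a_{k,i}$ and $b_{k,i}$ rows, which is nilpotent; hence $e(t) = 0$ for all $t \geq N$. For the $\chi$-dynamics, substituting $u(t) = -\calK\hat\chi(t) = -\calK\chi(t) + \calK e(t)$ yields
\begin{equation*}
    \begin{bmatrix} \chi(t+1) \\ e(t+1) \end{bmatrix}
    = \begin{bmatrix} \calA - \calB\calK & \calB\calK \\ 0 & \calA - \calG\calC \end{bmatrix}
    \begin{bmatrix} \chi(t) \\ e(t) \end{bmatrix}.
\end{equation*}
The $(1,1)$ block is Schur stable by the standard LQR theorem (using stabilizability of $(\calA,\calB)$ assumed here, and detectability of $(Q^{1/2}\calC,\calA)$, which follows from $Q \succ 0$ and Proposition~\ref{prop:detect}); the $(2,2)$ block is nilpotent. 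By the triangular structure, the full matrix is Schur stable, so $\chi(t) \to 0$ and $e(t) \to 0$, whence $u(t) \to 0$ and $y(t) = \calC\chi(t) \to 0$. Finally, detectability of $(C,A)$, together with $u(t) \to 0$ and $y(t) \to 0$ along the physical dynamics $x(t+1) = Ax(t) + Bu(t)$, $y(t) = Cx(t)$, forces $x(t) \to 0$.

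\textbf{Proof of claim 2 (optimality).} If $\hat\chi(0) = \chi(0)$, then $e(0) = 0$, so $e(t) \equiv 0$ and $\hat\chi(t) = \chi(t)$ for all $t$. Consequently $u(t) = -\calK\chi(t)$, which is exactly the LQR state-feedback law for the non-minimal realization in \eqref{eq:lqr_dd}. Since $(\calA,\calB)$ is stabilizable and $(Q^{1/2}\calC,\calA)$ is detectable, this feedback is the unique optimizer of \eqref{eq:lqr_dd}. The equivalence between \eqref{eq:lqr} and \eqref{eq:lqr_dd} already invoked by the paper (the realization reproduces the same input-output trajectories from consistent initial conditions, and the cost depends only on $u$ and $y$) then identifies this $u(t)$ with $u^\star[t]$.

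\textbf{Main obstacle.} The only delicate point is verifying that, along the closed-loop trajectories of the \emph{physical} system, the identity $y(t) = \calC\chi(t)$ and the update $\chi(t+1) = \calA\chi(t) + \calB u(t)$ truly hold (this is what makes the error analysis collapse to $(\calA-\calG\calC)e$). This is a direct consequence of Theorem~\ref{thm:main} and Proposition~\ref{prop:pre}: the realization \eqref{eq:NIOR} is an exact input-output realization, and $\chi(t)$ is defined from the physical past data, so the two coincide. Once this consistency is in hand, the rest is a routine separation-principle argument, and no further nontrivial estimates are needed.
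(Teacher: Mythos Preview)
Your proposal is correct and proves both claims, but it takes a different route from the paper for claim~1. The paper argues \emph{algebraically}: it expresses the closed-loop characteristic polynomial of $(x,\hat\chi)$ via a Schur-complement identity, invokes the transfer-function equality $C(zI-A)^{-1}B=\calC(zI-\calA)^{-1}\calB$ to replace the physical realization by the non-minimal one, and obtains
\[
\det\!\begin{bmatrix} zI-A & B\calK \\ -\calG C & zI-\calF\end{bmatrix}
=\frac{\det(zI-A)}{\det(zI-\calA)}\,\det\bigl(zI-(\calA-\calB\calK)\bigr)\det\bigl(zI-(\calA-\calG\calC)\bigr).
\]
Stability then follows because the two determinant factors on the right are Schur (LQR) and nilpotent, and any unstable root of $\det(zI-A)$ is a controllable/observable mode and hence a root of $\det(zI-\calA)$, so it cancels. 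Your route is \emph{trajectory-based}: you embed the closed-loop evolution into the $(\chi,e)$ coordinates, use the block-triangular separation structure to get $\chi(t)\to0$, $e(t)\to0$, and then invoke detectability of $(C,A)$ (via the observability reconstruction of $x_1$ from finitely many $y,u$ samples) to conclude $x(t)\to0$. Your argument is more elementary and avoids the determinant manipulation; the paper's buys an explicit factorization of the closed-loop spectrum. For claim~2 the two proofs are identical.

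One small point to tighten: $\chi(0)$ as defined in \eqref{eq:chi} needs $u(-N\!:\!-1),y(-N\!:\!-1)$, which are not part of the closed-loop state $(x(0),\hat\chi(0))$, and a compatible pre-history need not exist for every $x(0)$ when $A$ is singular on its uncontrollable subspace. The clean fix is to define $\chi(t)$ only for $t\ge N$ using the closed-loop data on $[0,N-1]$; your block-triangular dynamics then holds for $t\ge N$, and since $(\calA-\calG\calC)$ is nilpotent of index at most $N$ the same conclusions follow. Nothing else in your argument changes.
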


\begin{proof}
    The characteristic polynomial of the closed-loop system \eqref{eq:sys_ss} with \eqref{eq:LQRctr} satisfies
    \begin{equation}\label{eq:charpoly}
    \begin{aligned}
        &\det\left(\begin{bmatrix}
            zI-A & B\calK\\
            -\calG C & zI-\calF
        \end{bmatrix}\right)\\
        &\,\,=\det\left(zI-A\right)\det\left(zI-\calF+\calG C(zI-A)^{-1}B\calK\right)\\
        &\,\,=\det\left(zI-A\right)\det\left(zI-\calF+\calG \calC(zI-\calA)^{-1}\calB\calK\right)\\
        &\,\,=\frac{\det(zI-A)}{\det(zI-\calA)}\det\left(\begin{bmatrix}
            zI-\calA & \calB\calK\\
            -\calG \calC & zI-\calF
        \end{bmatrix}\right),
    \end{aligned}
    \end{equation}
    where the second equality comes from the fact that \eqref{eq:NIOR} and \eqref{eq:sys_ss} have the same transfer function matrices.
    The closed-loop system \eqref{eq:NIOR} in feedback with \eqref{eq:LQRctr} can be transformed into
    \begin{equation*}
    \begin{aligned}
        e_\chi(t+1)&=\left(\calA-\calG\calC\right)e_\chi(t),\\
        \chi(t+1)&=\left(\calA-\calB\calK\right)\chi(t)+\calB\calK e_\chi(t),
    \end{aligned}
    \end{equation*}
    where $e_\chi(t):=\chi(t)-\hat{\chi}(t)$.
    Then, \eqref{eq:charpoly} equals
    \begin{equation*}
        \frac{\det(zI-A)}{\det(zI-\calA)}\det(zI-(\calA-\calB\calK))\det(zI-(\calA-\calG\calC)),
    \end{equation*}
    and therefore is Schur stable. This is because $\calA-\calB\calK$ is Schur stable, every eigenvalue of $\calA-\calG\calC$ is at the origin,
    and lastly every unstable root of $\det(zI-A)$, that is a controllable and observable mode of \eqref{eq:sys_ss}, is a root of $\det(zI-\calA)$.

    The optimal solution to \eqref{eq:lqr}, or equivalently, \eqref{eq:lqr_dd} is $u^\star[k]=-\calK\chi[k]$ for $k\in\Z_{\geq 0}$ when $\chi[0]=\chi(0)$.
    Due to the structure of \eqref{eq:LQRctr}, $\hat{\chi}(t)=\chi(t)$ for all $t\in\Z_{\geq 0}$ if $\hat{\chi}(0)=\chi(0)$, which implies that $u(t)=u^\star[t]$ for all $t\in\Z_{\geq 0}$.
\end{proof}

\subsection{Simulation Results}

We provide simulation results that demonstrate 
the performance of the proposed controller \eqref{eq:LQRctr}, with respect to a SISO and MIMO system,
and discuss the effect of the parameter $N$.

\subsubsection{SISO system}
The linearized inverted pendulum \eqref{eq:invpen} of Section~\ref{subsubsec:invpen} is considered under the same setting.
The data are collected in the same way as well,
except that the output measurement noise
follows a Gaussian distribution with zero mean and standard deviation $\sigma = 10^{-4}$.
We define the signal-to-noise ratio (SNR) by
\begin{equation}\label{eq:SNR}
    \mathrm{SNR}:=10\log_{10}
    \sigma_{\min}\left(\begin{bmatrix}
        \rmH\\ Y_\ff
    \end{bmatrix}\right)
    -10\log_{10}\sigma_{\max}\left(\begin{bmatrix}
        E\\ e
    \end{bmatrix}\right).
\end{equation}
The resulting SNR, which varies by $N$, is plotted in Fig.~\ref{fig:SNR}.

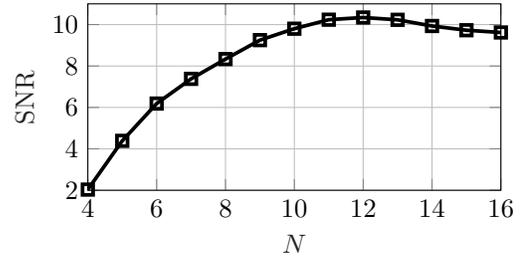
\begin{figure}[t]
    \centering
%
%
\definecolor{mycolor1}{rgb}{0.12941,0.12941,0.12941}%
\begin{tikzpicture}

\begin{axis}[%
width=0.62\linewidth,
height=0.28\linewidth,
at={(1.454in,0.741in)},
scale only axis,
xmin=4,
xmax=16,
xlabel style={font=\color{mycolor1}},
xlabel={$N$},
ylabel style={font=\color{mycolor1}},
ylabel={$\mathrm{SNR}$},
ymin=2,
ymax=11,
axis background/.style={fill=white},
xmajorgrids = true,
ymajorgrids = true
]
\addplot [color=black, line width=1.4pt, mark=square, mark options={solid, black}, forget plot]
  table[row sep=crcr]{%
4	2.03330416202636\\
5	4.38662497446225\\
6	6.18741396205447\\
7	7.37994909216653\\
8	8.33414337055836\\
9	9.24932673229504\\
10	9.80144325399189\\
11	10.2360091039961\\
12	10.3420505551774\\
13	10.230421626673\\
14	9.93249785785759\\
15	9.72745845019545\\
16	9.61862987411342\\
};
\end{axis}
\end{tikzpicture}%
    \caption{Average SNR (defined by \eqref{eq:SNR}) across the $10$ input-output trajectories by varying $N$.}
    \label{fig:SNR}
\end{figure}

The weighting parameters of the LQR problem \eqref{eq:lqr} are set as $Q=100$ and $R=1$.
The controller \eqref{eq:LQRctr} is implemented from both the noise-free and noisy data for each $N\in [4,16]$.
Fig.~\ref{fig:lqr_siso} shows
the output of the system with the controller \eqref{eq:LQRctr}
when there is online output measurement noise
that follows a zero-mean Gaussian distribution with
standard deviation $\sigma=10^{-3}$.
The initial conditions were
$(x,\dot{x},\theta,\dot{\theta})=(1,0,0,0)$
and
$\hat{\chi}(0)=[1,\,\ldots,\,1,\,0_N^\top]^\top$.
In this case, $\hat{\chi}(0)=\chi(0)$, and
thus by Theorem~\ref{thm:lqr},
the controllers \eqref{eq:LQRctr} with different values of $N$
result in the same response without online noise.
Meanwhile,
the controllers from noisy data with $N=4$ and $5$ could not stabilize the system \eqref{eq:invpen}.

\begin{figure}[t]
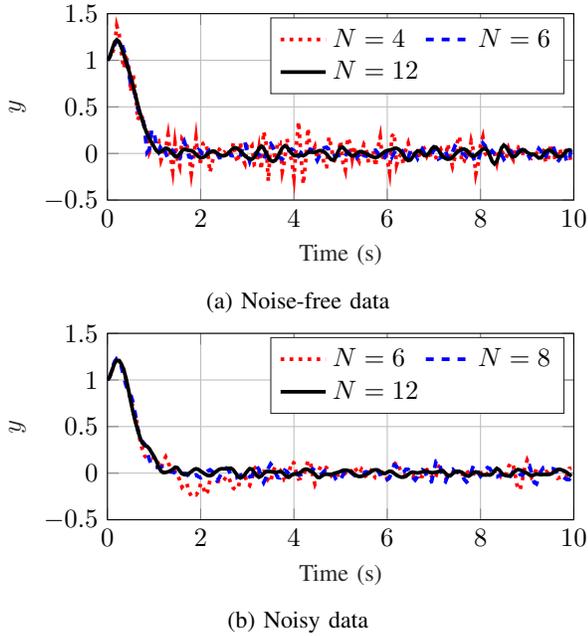

    \centering
    \begin{subfigure}[b]{\linewidth}
        \centering
        \input{figures/lqr_siso_on}
        \caption{Noise-free data}
        \label{fig:lqr_siso_on}
    \end{subfigure}
    \hfill
    \begin{subfigure}[b]{\linewidth}
        \centering
        \input{figures/lqr_siso_both}
        \caption{Noisy data}
        \label{fig:lqr_siso_both}
    \end{subfigure}
    \caption{Output of the system \eqref{eq:invpen} with the controller \eqref{eq:LQRctr} under online output measurement noise.}
    \label{fig:lqr_siso}
\end{figure}

Fig.~\ref{fig:lqr_siso} shows that, as 
$N$ increases, the performance of the controller \eqref{eq:LQRctr} under online measurement noise improves.
To see it clearly,
we consider the closed-loop system of \eqref{eq:sys_ss} with \eqref{eq:LQRctr} from noise-free data, written as
\begin{equation}\label{eq:H2norm}
\begin{aligned}
		\begin{bmatrix}
			x(t+1)\\ \hat{\chi}(t+1)
		\end{bmatrix}&=\begin{bmatrix}
		A & -B\calK\\ \calG C & \calF
		\end{bmatrix}\begin{bmatrix}
		x(t)\\ \hat{\chi}(t)
		\end{bmatrix}+\begin{bmatrix}
		0 \\ \calG
		\end{bmatrix}w(t),\\ z(t)&=\begin{bmatrix}
		Q^{1/2}C & 0\\
		0 & R^{1/2}\calK
		\end{bmatrix}\begin{bmatrix}
		x(t)\\ \hat{\chi}(t)
		\end{bmatrix},
\end{aligned}
\end{equation}
where $w(t)\in\R^p$ is the external input representing the output measurement noise and $z(t)\in\R^{p+m}$ is the output whose $l^2$-norm is the cost of the LQR problem \eqref{eq:lqr}.
Let $\calT_{wz}(z)$ denote the transfer function matrix of \eqref{eq:H2norm}.
Then,
as shown in Fig.~\ref{fig:H2norm},
the $\calH_2$ norm of $\calT_{wz}(z)$ decreases as $N$ increases.

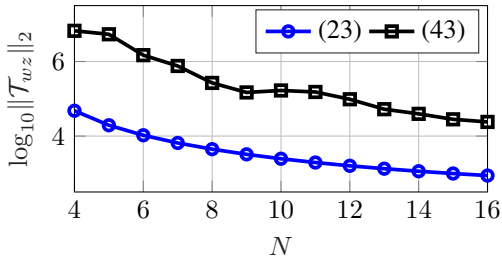
\begin{figure}[t]
    \centering
%
%
\definecolor{mycolor1}{rgb}{0.12941,0.12941,0.12941}%
\begin{tikzpicture}

\begin{axis}[%
width=0.62\linewidth,
height=0.28\linewidth,
at={(1.454in,0.741in)},
scale only axis,
xmin=4,
xmax=16,
xlabel style={font=\color{mycolor1}},
xlabel={$N$},
ymin=2.5,
ymax=7.5,
ylabel style={font=\color{mycolor1}},
ylabel={$\log_{10}\lVert \calT_{wz}\rVert_2$},
axis background/.style={fill=white},
xmajorgrids = true,
ymajorgrids = true,
legend style={legend columns=2, legend cell align=left, align=left, draw=white!15!black},
]
\addplot [color=blue, line width=1.4pt, mark=o, mark options={solid, blue}]
  table[row sep=crcr]{%
4	4.67933591984461\\
5	4.29017214328954\\
6	4.02149339478043\\
7	3.81504607415696\\
8	3.64830004441965\\
9	3.50956358828943\\
10	3.39181194470925\\
11	3.2904266868328\\
12	3.20217852153637\\
13	3.12470548533539\\
14	3.05621949826954\\
15	2.99532964514287\\
16	2.94092993575493\\
};
\addlegendentry{\eqref{eq:invpen}}

\addplot [color=black, line width=1.4pt, mark=square, mark options={solid, black}]
  table[row sep=crcr]{%
4	6.82383589992263\\
5	6.73056497941012\\
6	6.17059084565066\\
7	5.88050855655365\\
8	5.42556210153496\\
9	5.17038558125874\\
10	5.2231534265532\\
11	5.18300895326819\\
12	4.9873400814218\\
13	4.72059646830325\\
14	4.59592134066517\\
15	4.44962595293592\\
16	4.3792697364691\\
};
\addlegendentry{\eqref{eq:submarine}}

\end{axis}
\end{tikzpicture}%
    \caption{$\calH_2$ norm of the closed-loop system \eqref{eq:H2norm}
    with respect to the system \eqref{eq:invpen} (the blue circle-marked line) and \eqref{eq:submarine} (the black square-marked line) by varying $N$.}
    \label{fig:H2norm}
\end{figure}

\subsubsection{MIMO system}
Let us consider a submarine longitudinal model at constant speed
\cite{submarine},
written as
\begin{equation}\label{eq:submarine}
    \dot{x}=A_cx+B_cu,\quad y=C_cx,
\end{equation}
where $x\in\R^4$ is the state, $u\in\R^2$ is the input, $y\in\R^2$ is the output, and the matrices are defined by
\begin{align*}
    A_c&:=\scalebox{0.95}{$\begin{bmatrix}
			-0.0123v & 0.29029v & 0 & 0.000475v\\
			0.000554v & -0.02979v & 0 & -0.001817v\\
			1 & 0 & 0 & -v\\
			0 & 1 & 0 & 0
		\end{bmatrix}$},\\
    B_c&:=\scalebox{0.95}{$v^2\begin{bmatrix}
			-0.000791 & -0.002399\\
			0.00018178 & -0.000233\\
			0 & 0\\
			0 & 0
		\end{bmatrix}$},\,
    C_c:=\scalebox{0.95}{$\begin{bmatrix}
			0 & 0 & 1 & 0\\
			0 & 0 & 0 & 1
		\end{bmatrix}$},
\end{align*}
with submarine speed $v=3.086$.
When discretized under the sampling period $0.05$ s,
the system \eqref{eq:submarine} has an unstable pole at $z=1$,
a common root of $\sfD_1(z)$ and $\sfD_2(z)$.
However,
the matrix $[\sfN_1(1)^\top,\,\sfN_2(1)^\top]$ is invertible,
satisfying condition 2) of Corollary~\ref{cor:stabil}.
Thus, for any $N\geq l=2$,
the pair $(\calA,\calB)$ from any data satisfying the condition \eqref{eq:rankcond} would be stabilizable.

As in Section~\ref{subsubsec:invpen},
we collected $10$ pairs of $u^\dd$ and $y^\dd$ such that
$\col(u^\dd,y^\dd)\in \calW_{100}$
and the condition \eqref{eq:rankcond} holds with respect to the data Hankel matrices
for every $N\in[4,16]$.
The averaging method of Remark~\ref{rem:avg} is applied to
each $Y_{\ff,i}\rmH_i^\dagger$.
With fixed $Q=100I_2$ and $R=I_2$,
the controller \eqref{eq:LQRctr} is
obtained for each $N\in[4,16]$.

Fig.~\ref{fig:lqr_simul}
compares the performance of the controller \eqref{eq:LQRctr} for different values of $N$,
when $y(t)$ in \eqref{eq:LQRctr} is replaced with
$y(t)-y_\refsig$,
where
$y_\refsig=[10,\,0]^\top$.
The initial conditions were
$x(0)=[0,\,0,\,15,\,0]^\top$
and $\hat{\chi}(0)=[5,\,\ldots,\,5,\,0_{(m+2)N}^\top]^\top\neq \chi(0)$,
thus resulting in different responses by $N$ in Fig.~\ref{fig:lqr_simul}.
To compare the transient responses clearly,
neither online nor offline noise was considered,
and it can be seen that
the transient performance improves as $N$ increases.
Moreover, as depicted in Fig.~\ref{fig:H2norm}, the $\calH_2$ norm of \eqref{eq:H2norm} again tends to decrease as $N$ increases.

\begin{figure*}
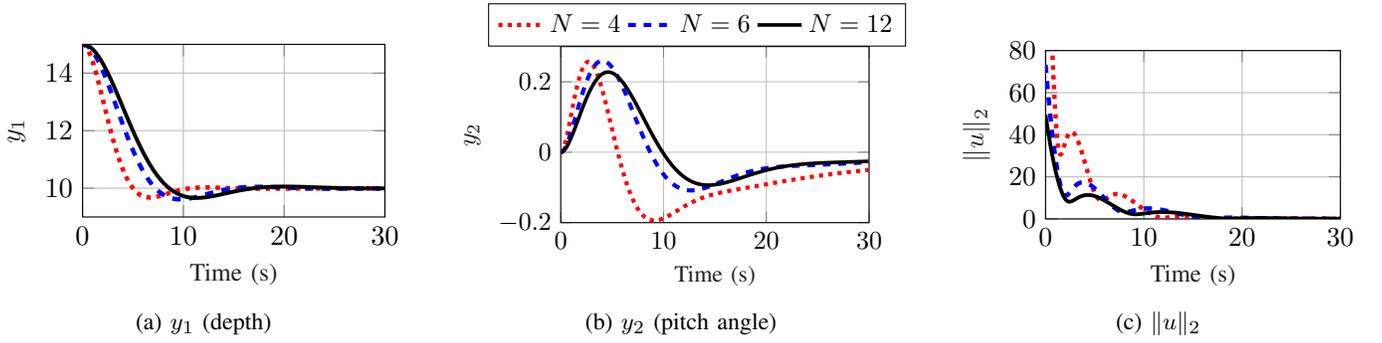

    \centering
    \begin{subfigure}[b]{0.3\textwidth}
        \centering
        \resizebox{\linewidth}{!}{\input{figures/lqr_y1}}
        \caption{$y_1$ (depth)}
    \end{subfigure}
    \hfill
    \begin{subfigure}[b]{0.33\textwidth}
        \centering
        \resizebox{\linewidth}{!}{\input{figures/lqr_y2}}
        \caption{$y_2$ (pitch angle)}
    \end{subfigure}
    \hfill
    \begin{subfigure}[b]{0.3\textwidth}
        \centering
        \resizebox{\linewidth}{!}{\input{figures/lqr_unorm}}
        \caption{$\lVert u\rVert_2$}
    \end{subfigure}
    \caption{Performance of the system \eqref{eq:submarine} with the controller \eqref{eq:LQRctr} by varying $N$.}
    \label{fig:lqr_simul}
\end{figure*}

\begin{remark}
    The performance benefits of increasing the past input-output trajectory length $N$ have also been discussed in the context of predictive control \cite{ARC,CSM,gammaDDPC,NIOR}.
    However, when data-driven output feedback controllers are derived from non-minimal realizations such as \eqref{eq:NIOR},
    $N$ has been set as the minimum \cite{DDformulas,outputOptimal} or reduced afterwards \cite{MIMO,KalmanLQ},
    in order to ensure the controllability of such realizations.
    Therefore, it is worth noting that our stability guarantee on the latent poles enables the increase of $N$
    for the class of systems specified in Corollary~\ref{cor:stabil},
    which may lead to enhanced control performance
    as observed from the simulation results.
\end{remark}

\section{Application to Data-Driven Inversion}\label{sec:inv}

This section studies data-driven inversion,
or unknown input estimation,
which reconstructs the system's (previous) input from the output using input-output data matrices.
Analogously to \eqref{eq:intro},
existing data-driven inversion algorithms can be understood as data-driven representations of an $L$-delay inverse\footnote{A proper transfer function matrix $\hat{\sfG}(z)$ is an $L$-delay inverse \cite{Ldelay} of $\sfG(z)$ if $\hat{\sfG}(z)\sfG(z)=\frac{1}{z^L}I_m$. An $L$-delay inverse of a SISO system always exists if $L$ is greater than or equal to the relative degree.} of the system,
which also contain latent poles.
We show that the stability of the latent poles
can be guaranteed
through the use of
the Moore-Penrose inverse of the data matrix.
This enables
\emph{asymptotic} input estimation for minimum-phase systems
even
under unknown initial input trajectory.
We exploit this property
to implement DD-DOB \cite{DDinv23},
which simultaneously
estimates and rejects the input disturbance,
without requiring knowledge of its initial trajectory
in contrast to the method of \cite{DDinv23}.

We consider a SISO system whose relative degree $\nu\in\Z_{\geq 0}$ is unknown while its upper bound 
$\bar{\nu}\geq \nu$ is known.
Besides $N\geq \bar{l}$,
we introduce another parameter $L\in\Z_{\geq 0}$, satisfying $L\geq \bar{\nu}$
so that an $L$-delay inverse of the system always exists.
To begin with, let us define the following set of trajectories.

\begin{definition}\label{def:WkL}
    For $k\in\N$ and $L\in \Z_{\geq 0}$,
    $\calW_k^L$
    is
    defined by
    the set of $\col(u,y)$
    such that
    $u\in \R^k$ and $y\in \R^{k+L}$
    satisfy $\col(u,u^\prime,y)\in \calW_{k+L}$ for some $u^\prime \in \R^L$.
\end{definition}

We construct data matrices so that
each column of
\begin{equation*}
    \overline{\rmH}_\inv:=\begin{bmatrix}
        U_\pp\\ U_\ff \\ Y_\pp \\ Y_\ff^L
    \end{bmatrix}\in \R^{(2N+L+2)\times T}
\end{equation*}
belongs to $\calW_{N+1}^L$,
where both $U_\pp$ and $Y_\pp$ have $N$ rows, $U_\ff$ is a row vector, and $Y_\ff^L$ has $L+1$ rows.
(Apart from this,
we do not impose any specific structure on the data matrices
unlike the previous works \cite{DDinv22,DDinv23}.)

At time $t\in\Z_{\geq 0}$, the algorithm of \cite{DDinv23}
computes
$\hat{u}(t-L)$, an estimate of the $L$-delayed input $u(t-L)$,
as follows:
\begin{subequations}\label{eq:dd_inv}
    \begin{equation}\label{eq:dd_inv_u}
        \hat{u}(t-L)=U_\ff g(t),
    \end{equation}
    where $g(t)\in\R^{T}$ is a solution to
    \begin{equation}\label{eq:dd_inv_g}
        \rmH_\inv g(t):=\begin{bmatrix}
            U_\pp\\ Y_\pp\\ Y_\ff^L
        \end{bmatrix}g(t)=\begin{bmatrix}
            \hat{u}(t-N-L:t-1-L)\\
            y(t-N-L:t-1-L)\\
            y(t-L:t)
        \end{bmatrix},
    \end{equation}
\end{subequations}
given the past $N$ estimates and the system's (known) outputs.
To begin with, we generalize the result of \cite{DDinv23}, showing that this estimation is correct
under the condition
\begin{equation}\label{eq:rankcond inv}
    \rank(\rmH_\inv)=N+n+L-\nu+1
\end{equation}
and perfectly known initial input trajectory.
To this end, we first present the following lemma.

\begin{lemma}\label{lem:inv}
    Let $\col(u^\dd,y^\dd)\in \calW_{k}^L$
    for $L\geq \nu$, $u^\dd\in \R^{k}$, and $y^\dd\in \R^{k+L}$.
    If $u^\dd$ is persistently exciting of order $n+N+L+1$ with $N\geq l$
    (i.e., $\calH_{n+N+L+1}(u^\dd)$ has full row rank),
    then
    \begin{equation}\label{eq:lem inv rank}
        \rank\left(\begin{bmatrix}
            \calH_{N+1}(u^\dd)\\
            \calH_{N+L+1}(y^\dd)
        \end{bmatrix}\right)=N+n+L-\nu+1.
    \end{equation}
\end{lemma}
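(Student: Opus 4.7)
The plan is to first identify the column span of the matrix in \eqref{eq:lem inv rank} with the trajectory space $\calW_{N+1}^L$, and then compute $\dim \calW_{N+1}^L$ via a relative-degree-based kernel argument. The first identification uses Willems' fundamental lemma applied to an extended input signal, and the second uses a state-space parameterization of $\calW_{N+1}^L$.

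For the identification, by Definition~\ref{def:WkL} there exists $u^\prime \in \R^L$ such that $\tilde u := \col(u^\dd, u^\prime) \in \R^{k+L}$ satisfies $\col(\tilde u, y^\dd) \in \calW_{k+L}$. Since the first $N+1$ rows of $\calH_{N+L+1}(\tilde u)$ only involve indices $\le k-1$, they coincide with $\calH_{N+1}(u^\dd)$; hence the matrix in \eqref{eq:lem inv rank} equals $\Pi M_{\mathrm{full}}$, where $M_{\mathrm{full}} := \col(\calH_{N+L+1}(\tilde u), \calH_{N+L+1}(y^\dd))$ and $\Pi$ is the projection that deletes the last $L$ input rows. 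Persistence of $u^\dd$ of order $n+N+L+1$ transfers to $\tilde u$ because $\calH_{n+N+L+1}(u^\dd)$ is a submatrix (in fact, the first $k-n-N-L$ columns) of $\calH_{n+N+L+1}(\tilde u)$. Willems' fundamental lemma, as stated via Proposition~\ref{prop:pre}, then yields $\im M_{\mathrm{full}} = \calW_{N+L+1}$, and Definition~\ref{def:WkL} gives $\Pi(\calW_{N+L+1}) = \calW_{N+1}^L$. Therefore the rank in \eqref{eq:lem inv rank} equals $\dim \calW_{N+1}^L$.

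For the dimension count, I would fix a minimal realization $(A,B,C,D)$ of order $n$ and consider the linear surjection $\Phi: \R^n \times \R^{N+1} \times \R^L \to \calW_{N+1}^L$ sending $(x_0,u,u^\prime)$ to $(u,y)$, where $y$ is the output driven by $x_0$ under the concatenated input $\col(u,u^\prime)$. This gives $\dim \calW_{N+1}^L = n+N+1+L - \dim \ker \Phi$. In $\ker \Phi$, one has $u = 0$ and $y = 0$; since $N \ge l$, observability of the minimal realization under zero input forces $x_0 = 0$, whence the state at time $N+1$ is also zero. The remaining constraints $y_{N+1} = \cdots = y_{N+L} = 0$ take the form $T u^\prime = 0$, where $T$ is the $L \times L$ lower-triangular Toeplitz matrix with entries $T_{ij} = G_{i-j}$ built from the Markov parameters $G_0, G_1, \ldots$. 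The relative degree $\nu$ gives $G_0 = \cdots = G_{\nu-1} = 0$ and $G_\nu \ne 0$, so the first $\nu$ rows of $T$ vanish while the remaining $L-\nu$ rows form a lower-triangular system with nonzero diagonal $G_\nu$. This pins $u^\prime_1 = \cdots = u^\prime_{L-\nu} = 0$ and leaves $u^\prime_{L-\nu+1},\ldots,u^\prime_L$ free, giving $\dim \ker \Phi = \nu$ and thus the claimed rank $n + N + L - \nu + 1$.

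The main obstacle is the kernel computation: a naive count of the free parameters in $(x_0, u, u^\prime)$ would overestimate $\dim \calW_{N+1}^L$ by $\nu$, so one must invoke the Markov parameter structure induced by the relative degree to identify precisely which tail components of $u^\prime$ fail to influence the observed trajectory. The remaining steps, namely surjectivity of $\Phi$ onto $\calW_{N+1}^L$ and the application of the fundamental lemma to the extended trajectory, are direct.
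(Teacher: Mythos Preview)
Your proof is correct and takes a genuinely different route from the paper's. Both arguments extend the input to $\tilde u=\col(u^\dd,u^\prime)$ and transfer persistency of excitation, but they diverge afterward. The paper introduces the (virtual) state trajectory $x^\dd$ from a minimal realization and, via \cite[Corollary~2]{fundamental}, shows that $\col(U_\pp,U_\ff^L,X)$ has full row rank; it then writes
\[
\rmH_\inv=\begin{bmatrix}
I_N & 0 & 0\\ \ast & 0 & \calO_N\\ \ast & \calT_L & \ast
\end{bmatrix}
\begin{bmatrix}U_\pp\\ U_\ff^L\\ X\end{bmatrix}
\]
and reads off $\rank(\rmH_\inv)=N+\rank(\calO_N)+\rank(\calT_L)=N+n+(L-\nu+1)$, invoking \cite{Ldelay} for the rank of the Markov Toeplitz block $\calT_L$. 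You instead identify the column span with $\calW_{N+1}^L$ via the projection $\Pi$ and then compute $\dim\calW_{N+1}^L$ by rank--nullity on the parameterization $\Phi$, deriving the Toeplitz kernel dimension $\nu$ directly from the relative-degree structure of the Markov parameters. Your route is more geometric and self-contained (no external citation for the Toeplitz rank), while the paper's matrix factorization is more concrete and makes the block-triangular structure explicit; the two are essentially dual, since the left factor in the paper's decomposition is exactly the matrix representation of your map $\Phi$ (after reordering), and its rank deficiency $\nu$ is your $\dim\ker\Phi$.
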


\begin{proof}
    See the Appendix~\ref{append:inv lem}.
\end{proof}

Lemma~\ref{lem:inv} provides a sufficient condition on the input data for satisfying \eqref{eq:rankcond inv}
in case when $\overline{\rmH}_\inv=\begin{bmatrix}
            \calH_{N+1}(u^\dd)\\
            \calH_{N+L+1}(y^\dd)
        \end{bmatrix}$;
note that
$\rank(\overline{\rmH}_\inv)=\rank(\rmH_\inv)$
because
$U_\ff$ belongs to the row span of $\rmH_\inv$
due to the existence of an $L$-delay inverse.

Based on Lemma~\ref{lem:inv},
we show
that \eqref{eq:rankcond inv} is a necessary and sufficient condition for $\overline{\rmH}_\inv$ to fully characterize $\calW_{N+1}^L$.

\begin{proposition}\label{prop:pre_inv}
    For $N\geq l$ and $L\geq \nu$,
    $\im\overline{\rmH}_\inv=\calW_{N+1}^L$ if and only if
    \eqref{eq:rankcond inv} holds.
    In this case,
    $\hat{u}(t-L)=u(t-L)$ for all $t\in\Z_{\geq 0}$ by \eqref{eq:dd_inv} if $\hat{u}(t-L)=u(t-L)$ for all $t\in[-N,-1]$.
\end{proposition}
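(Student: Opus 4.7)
The plan is to extend the reasoning of Proposition~\ref{prop:pre} to the behavior $\calW_{N+1}^L$. The key identity to establish is $\dim(\calW_{N+1}^L)=N+L+n-\nu+1$; once this is in hand, the equivalence follows by comparing ranks, since $\im\overline{\rmH}_\inv\subseteq\calW_{N+1}^L$ holds by construction of the data matrix.

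To compute $\dim(\calW_{N+1}^L)$, I would consider the surjective linear projection $\pi\colon\calW_{N+L+1}\to\calW_{N+1}^L$ that deletes the $L$ future inputs $u(N+1),\ldots,u(N+L)$. By Proposition~\ref{prop:pre}, $\dim(\calW_{N+L+1})=N+L+1+n$ (here $m=1$). The kernel of $\pi$ consists of trajectories with all retained entries zero; from $N\geq l$ and observability, the state satisfies $x(N+1)=0$, so the outputs at times $N+1,\ldots,N+L$ are produced from the free inputs alone. The relative degree $\nu$ makes $y(N+1),\ldots,y(N+\nu)$ vanish automatically, while $y(N+\nu+k)=0$ for $k=1,\ldots,L-\nu$ forces $u(N+k)=0$ successively via the nonzero leading Markov parameter $CA^{\nu-1}B$. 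This leaves exactly $\nu$ free variables $u(N+L-\nu+1),\ldots,u(N+L)$, so $\dim(\ker\pi)=\nu$ and the claimed dimension follows.

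Next, I would show $\rank(\overline{\rmH}_\inv)=\rank(\rmH_\inv)$ by verifying that $U_\ff$ lies in the row span of $\rmH_\inv$. Any $g\in\ker\rmH_\inv$ induces a ``virtual'' element of $\calW_{N+1}^L$ with all coordinates but $U_\ff g$ equal to zero, and repeating the recursion-plus-relative-degree argument yields $y(N+\nu)=CA^{\nu-1}B\,(U_\ff g)=0$, whence $U_\ff g=0$ (this is where $L\geq\nu$ is essential). Combined with $\rank(\overline{\rmH}_\inv)\leq\dim(\calW_{N+1}^L)$, the equivalence between $\im\overline{\rmH}_\inv=\calW_{N+1}^L$ and \eqref{eq:rankcond inv} is obtained.

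For the second assertion, I would induct on $t\geq 0$. Assuming $\hat u(s-L)=u(s-L)$ for $s\in[-N,t-1]$, the right-hand side of \eqref{eq:dd_inv_g} at time $t$ equals the corresponding block of the system's actual trajectory, which lies in $\calW_{N+1}^L=\im\overline{\rmH}_\inv$; hence \eqref{eq:dd_inv_g} admits a solution $g(t)$. Since $U_\ff$ belongs to the row span of $\rmH_\inv$, the value $U_\ff g(t)$ is uniquely determined by $\rmH_\inv g(t)$ and therefore equals the true $u(t-L)$, closing the induction. The delicate step will be the kernel computation for $\pi$: isolating exactly $\nu$ free degrees of freedom requires carefully coupling the automatic vanishing of the first $\nu$ outputs with the triangular propagation of zero constraints down the input sequence, and tracking the relative-degree index $\nu$ through this bookkeeping is the easiest place to slip.
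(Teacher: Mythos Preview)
Your proposal is correct and takes a genuinely different route from the paper. The paper's proof is essentially a wrapper around external results: it invokes \cite[Lemma~1]{DDinv23} together with Lemma~\ref{lem:inv} (proved in Appendix~\ref{append:inv lem}) to obtain $\dim(\calW_{N+1}^L)=N+n+L-\nu+1$, and then cites \cite[Theorem~2]{DDinv23} for the second assertion. The dimension computation in the appendix proceeds via a block factorization $\rmH_\inv = M\,[U_\pp^\top,(U_\ff^L)^\top,X^\top]^\top$ with a structured left factor built from $\calO_N$ and $\calT_L$, and then appeals to a rank-increment result from \cite{Ldelay} for the Toeplitz block $\calT_L$.

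Your argument is more elementary and self-contained: the rank--nullity computation for the projection $\pi$ identifies the $\nu$ free future-input degrees of freedom directly through the Markov parameters, without invoking persistency of excitation, the Hankel construction, or the external rank lemma. Your proof that $U_\ff$ lies in the row span of $\rmH_\inv$ is likewise a clean direct argument (and correctly uses $L\geq\nu$ to access $y(N+\nu)$), and your induction for the second part simply unpacks what the paper outsources to \cite{DDinv23}. The trade-off is that the paper's route simultaneously establishes Lemma~\ref{lem:inv}, the constructive Hankel rank formula under persistency of excitation that is used elsewhere to \emph{verify} \eqref{eq:rankcond inv}; your approach delivers the dimension of $\calW_{N+1}^L$ but not that particular data-dependent rank statement.
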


\begin{proof}
    By \cite[Lemma~1]{DDinv23},
    $\im\begin{bmatrix}
            \calH_{N+1}(u^\dd)\\
            \calH_{N+L+1}(y^\dd)
        \end{bmatrix}=\calW_{N+1}^L$
    for $\col(u^\dd,y^\dd)\in \calW_k^L$
    where $u^\dd\in\R^k$ is persistently exciting of order $n+N+L+1$.
    Thus, $\dim(\calW_{N+1}^L)=N+n+L-\nu+1$ by Lemma~\ref{lem:inv}.
    Since $\im \overline{\rmH}_\inv\subset \calW_{N+1}^L$ as the system is LTI,
    $\im \overline{\rmH}_\inv=\calW_{N+1}^L$
    if and only if \eqref{eq:rankcond inv} holds.
    The rest of the proof directly follows from \cite[Theorem~2]{DDinv23}.
\end{proof}

Proposition~\ref{prop:pre_inv} implies that $\hat{u}$ in \eqref{eq:dd_inv} can be replaced with $u$;
this makes \eqref{eq:dd_inv} a data-driven representation of an $L$-delay inverse of the system,
which is inherently recursive.
To analyze the stability of \eqref{eq:dd_inv},
we rewrite \eqref{eq:dd_inv} with respect to $u$ as
\begin{equation}\label{eq:inv eta}
    u(t-L)=\eta^\top \begin{bmatrix}
            u(t-N-L:t-1-L)\\
            y(t-N-L:t-1-L)\\
            y(t-L:t)
        \end{bmatrix},
\end{equation}
with an arbitrary $\eta\in \R^{2N+L+1}$ such that $U_\ff=\eta^\top \rmH_\inv$.
When applied the $z$-transform, \eqref{eq:inv eta} yields the transfer function of an $L$-delay inverse by Proposition~\ref{prop:pre_inv} as
\begin{equation}\label{eq:inv tf}
    \frac{\delta_{N+L}z^{N+L}+\cdots+\delta_1z+\delta_0}{z^L\left(z^N-\gamma_{N-1}z^{N-1}-\cdots-\gamma_0\right)}=\frac{\sfD(z)}{z^L\sfN(z)},
\end{equation}
where
\begin{equation}\label{eq:thm inv eta}
    \eta^\top=[\gamma_0,\,\gamma_1,\,\ldots,\,\gamma_{N-1},\,\delta_0,\,\delta_1,\,\ldots,\,\delta_{N+L}].
\end{equation}
Then, it is clearly seen from \eqref{eq:inv tf} that the poles of \eqref{eq:inv eta} consist of the system's zeros, the origin, and $N-n+\nu$ latent poles.

We show that every latent pole in \eqref{eq:inv eta} is stable when $\eta^\top=U_\ff \rmH_\inv^\dagger$,
or equivalently,
when $g(t)$ of \eqref{eq:dd_inv_u} is chosen as the least-norm solution to \eqref{eq:dd_inv_g}.
We specifically write
\begin{equation*}
    U_\ff\rmH_\inv^\dagger=:\left[c_0,\,c_1,\,\ldots,\,c_{N-1},\,d_0,\,d_1,\,\ldots,\,d_{N+L}\right],
\end{equation*}
and define
\begin{equation*}
    \sfN_\inv^\star(z):=z^N-\sum_{k=0}^{N-1}c_kz^k,\quad
    \sfD_\inv^\star(z):=\sum_{k=0}^{N+L}d_kz^k.
\end{equation*}
Then, the following theorem states that only stable pole-zero cancellations occur in
\eqref{eq:inv eta} when $\eta^\top=U_\ff \rmH_\inv^\dagger$.

\begin{theorem}\label{thm:inv}
    Under the condition \eqref{eq:rankcond inv},
    there exists a Schur stable $\sfC_\inv^\star(z)\in M_{N-n+\nu}$
    such that
    \begin{equation}\label{eq:Cinv}
    \sfN_\inv^\star(z)=\frac{1}{\rho}\sfC_\inv^\star(z)\sfN(z),\quad
    \sfD_\inv^\star(z)=\frac{1}{\rho}\sfC_\inv^\star(z)\sfD(z),
    \end{equation}
    where $\rho\in \R$ is the leading coefficient of $\sfN(z)$.
    Furthermore, $\sfC_\inv^\star(z)$ is determined solely by $\sfG(z)$ and $N$.
\end{theorem}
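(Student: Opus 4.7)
The plan is to mirror the strategy used in Section~\ref{subsec:SO}: characterize all $\eta$ satisfying $U_\ff=\eta^\top\rmH_\inv$ by a polynomial $\sfC(z)\in M_{N-n+\nu}$, observe that $\sfC_\inv^\star(z)$ corresponds to the least-norm $\eta$, recast the resulting minimization as an instance of the cost \eqref{eq:f def}, and finally invoke Lemma~\ref{lem:stable} for Schur stability.

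First I would establish the analog of Lemma~\ref{lem:h} for the inverse case. Since $\rmH_\inv$ satisfies \eqref{eq:rankcond inv}, every $\eta$ with $U_\ff=\eta^\top\rmH_\inv$ produces, through \eqref{eq:inv eta} and Proposition~\ref{prop:pre_inv}, exactly the transfer function $\sfD(z)/(z^L\sfN(z))$ of an $L$-delay inverse. Cross-multiplying in \eqref{eq:inv tf} yields $\sfN_\inv^\star(z)\sfD(z)=\sfD_\inv^\star(z)\sfN(z)$, and coprimeness of $\sfN(z)$ and $\sfD(z)$ (which holds because $\sfG(z)$ is given in minimal form) forces $\sfN(z)\mid\sfN_\inv^\star(z)$ and $\sfD(z)\mid\sfD_\inv^\star(z)$. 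Matching degrees and leading coefficients produces a unique monic $\sfC(z)\in M_{N-n+\nu}$ such that \eqref{eq:Cinv} holds; in particular $\sfD_\inv^\star(z)$ is automatically of degree $N+\nu$, so $d_k=0$ for all $k>N+\nu$. This pins down a bijection between the affine solution set of $U_\ff=\eta^\top\rmH_\inv$ and $M_{N-n+\nu}$.

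Second, since $(U_\ff\rmH_\inv^\dagger)^\top$ is the least-norm solution of $U_\ff=\eta^\top\rmH_\inv$, the polynomial $\sfC_\inv^\star(z)$ is the unique element of $M_{N-n+\nu}$ minimizing $\|\eta\|_2^2=\sum_{k=0}^{N-1}c_k^2+\sum_{k=0}^{N+\nu}d_k^2$. To apply Lemma~\ref{lem:stable}, I would identify $r(z)=\sfC(z)$, $p(z)=\sfD(z)\in M_n$, and $q(z)=\sfN(z)\in P_n^{1\times 1}$ (the latter inclusion uses $\deg\sfN=n-\nu\le n$), so that $\sfn=n$ and $\tau=N-n+\nu$. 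Reading off the coefficients of $r(z)p(z)=\sfC(z)\sfD(z)=\rho\,\sfD_\inv^\star(z)$ and $r(z)q(z)=\sfC(z)\sfN(z)=\rho\,\sfN_\inv^\star(z)$ then gives an identity of the form $f_{n,N-n+\nu}(\sfC(z),\sfD(z),\sfN(z))=\rho^2\|\eta\|_2^2+(\rho^2-1)$, so the two minimizations share the same optimizer and $\sfC_\inv^\star(z)$ is determined solely by $\sfG(z)$ and $N$.

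Schur stability of $\sfC_\inv^\star(z)$ is then an immediate consequence of Lemma~\ref{lem:stable}. The main obstacle is the first step: one must verify that for every valid $\eta$ the polynomial $\sfD_\inv^\star(z)$ has degree at most $N+\nu$ (and not the nominal $N+L$ suggested by \eqref{eq:thm inv eta}), so that the constrained least-norm problem reduces cleanly to an unconstrained optimization over $M_{N-n+\nu}$. Once this degree collapse is in place, the relative-degree gap $\deg\sfN=n-\nu\le n$ is precisely what makes $\sfN(z)\in P_n^{1\times 1}$ admissible as the $q$-argument of Lemma~\ref{lem:stable}, and the proof concludes without further modification.
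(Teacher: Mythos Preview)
Your proposal is correct and follows the same overall strategy as the paper: establish a bijection between the solutions $\eta$ of $U_\ff=\eta^\top\rmH_\inv$ and $M_{N-n+\nu}$, use the least-norm property of the Moore--Penrose inverse, recast the minimization as an instance of the cost $f_{\sfn,\tau}$, and invoke Lemma~\ref{lem:stable}.

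The one substantive difference is the identification of $(p,q,\sfn)$. The paper chooses $\sfn=n-\nu$, $p(z)=\tfrac{1}{\rho}\sfN(z)\in M_{n-\nu}$, $q(z)=\tfrac{1}{\rho}\sfD(z)$, which yields $f=\|\eta\|_2^2$ exactly but places $q$ outside the formal domain $P_{n-\nu}$ whenever $\nu\ge 1$ (since $\deg\sfD=n>n-\nu$). You instead swap the roles, taking $\sfn=n$, $p(z)=\sfD(z)\in M_n$, $q(z)=\sfN(z)\in P_n^{1\times 1}$; this introduces the harmless affine offset $f=\rho^2\|\eta\|_2^2+(\rho^2-1)$ but keeps the type constraints of Lemma~\ref{lem:stable} satisfied without any reinterpretation. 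Your swap is arguably cleaner, and the ``degree collapse'' $d_k=0$ for $k>N+\nu$ that you flag as the main obstacle is indeed immediate once the bijection is in place, exactly as you indicate.
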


\begin{proof}
    By Proposition~\ref{prop:pre_inv}, the equality \eqref{eq:inv tf} holds
    for any $\eta\in\R^{2N+L+1}$ satisfying $U_\ff=\eta^\top \rmH_\inv$,
    which is parameterized as in \eqref{eq:thm inv eta}.
    It follows that
    \begin{equation}\label{eq:thm inv pf}
        z^N-\sum_{k=0}^{N-1}\gamma_k z^k=\frac{1}{\rho}\sfC_\inv(z)\sfN(z),\,
        \sum_{k=0}^{N+L}\delta_kz^k=\frac{1}{\rho}\sfC_\inv(z)\sfD(z)
    \end{equation}
    for some $\sfC_\inv(z)\in M_{N-n+\nu}$,
    since the polynomials $\sfD(z)$ and $\sfN(z)$ are relatively coprime.
    Conversely,
    as in Lemma~\ref{lem:h},
    it is clear that
    given $\sfC_\inv(z)\in M_{N-n+\nu}$,
    the vector \eqref{eq:thm inv eta} defined by \eqref{eq:thm inv pf} satisfies $U_\ff=\eta^\top\rmH_\inv$.
    Thus, we define a function $\Psi$ that maps $\eta$ such that $U_\ff=\eta^\top\rmH_\inv$ to $\sfC_\inv(z)$ of \eqref{eq:thm inv pf},
    which is invertible.
    Then, \eqref{eq:Cinv} holds with $\sfC_\inv^\star(z)=\Psi((U_\ff\rmH_\inv^\dagger)^\top)$.
    By the definition \eqref{eq:f def},
    \begin{align*}
        \sfC_\inv^\star(z)&=\argmin_{\sfC(z)\in M_{N-n+\nu}}\left\lVert \Psi^{-1}(\sfC(z)) \right\rVert_2^2\\
        &=\argmin_{\sfC(z)\in M_{N-n+\nu}} f_{n-\nu,N-n+\nu}\!\left(\sfC(z),\frac{1}{\rho}\sfN(z),\frac{1}{\rho}\sfD(z)\right)\!,
    \end{align*}
    which implies that $\sfC^\star_\inv(z)$ is determined solely by $\sfG(z)$ and $N$.
    Moreover, by Lemma~\ref{lem:stable}, $\sfC^\star_\inv(z)$ is Schur stable.
\end{proof}

Note that by Theorem~\ref{thm:inv}, $d_{N+L}=\cdots=d_{N+\nu+1}=0$, that is, the degree of $\sfD_\inv^\star(z)$ is $N+\nu$.

Theorem~\ref{thm:inv} also implies that
when
$\eta^\top=U_\ff\rmH_\inv^\dagger$,
the data-driven inverse \eqref{eq:inv eta} is stable for minimum-phase systems,
due to the stability of every latent pole.
This enables asymptotic input estimation
by \eqref{eq:inv eta}, where
the estimate $\hat{u}(t-L)$ asymptotically approaches the true input $u(t-L)$
despite an inaccurate initial estimate.
This is shown by the following corollary.

\begin{corollary}\label{cor:inv}
    Let $\hat{u}(t-L)$ be the $L$-delayed input estimate from \eqref{eq:inv eta}
    with $\eta^\top=U_\ff\rmH_\inv^\dagger$.
    Under the condition \eqref{eq:rankcond inv}, $\lvert u(t-L)-\hat{u}(t-L)\rvert \to 0$ as $t\to \infty$ for any initial estimate $\hat{u}(-N-L:-1-L)\in\R^N$ if and only if the system is of minimum-phase.
\end{corollary}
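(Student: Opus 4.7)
The plan is to adapt the proof of Corollary~\ref{cor:prediction} to the inversion setting. Define the $L$-delayed input estimation error by $e^u(s):=u(s)-\hat u(s)$. By Proposition~\ref{prop:pre_inv}, the true input-output trajectory satisfies \eqref{eq:inv eta} identically, so with $\eta^\top=U_\ff\rmH_\inv^\dagger$,
\begin{equation*}
u(t-L)=\sum_{k=0}^{N-1} c_k\, u(t-N-L+k)+\sum_{k=0}^{N+L} d_k\, y(t-N-L+k).
\end{equation*}
The recursive estimator \eqref{eq:dd_inv} is, by construction, the same identity with $u$ replaced by $\hat u$ on the right-hand side while the \emph{measured} output $y$ is left unchanged. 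First, I would subtract the two relations: the $y$-terms cancel exactly, yielding the autonomous linear difference equation
\begin{equation*}
e^u(t-L)=\sum_{k=0}^{N-1} c_k\, e^u(t-N-L+k),
\end{equation*}
whose characteristic polynomial is precisely $\sfN_\inv^\star(z)=z^N-\sum_{k=0}^{N-1} c_k z^k$.

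Next I would invoke Theorem~\ref{thm:inv} to factor $\sfN_\inv^\star(z)=\tfrac{1}{\rho}\sfC_\inv^\star(z)\sfN(z)$, where $\sfC_\inv^\star(z)$ is Schur stable. Hence $\sfN_\inv^\star(z)$ is Schur stable if and only if $\sfN(z)$ is, i.e., if and only if every zero of the system lies strictly inside the unit disk---the minimum-phase condition. Convergence $e^u(t-L)\to 0$ as $t\to\infty$ for \emph{every} initial condition $e^u(-N-L:-1-L)\in\R^N$---equivalently, for every choice of initial estimate $\hat u(-N-L:-1-L)$---is in turn equivalent to Schur stability of the characteristic polynomial of the autonomous recursion, giving both directions of the stated equivalence.

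The only nontrivial step, more bookkeeping than a genuine obstacle, is verifying the cancellation of the $y$-terms in the error dynamics: this relies on the fact that the estimator of \cite{DDinv23} feeds the same measured output $y$ into both \eqref{eq:inv eta} and its recursive counterpart, so the $d_k$-coefficients multiply identical quantities on the two sides. Once this cancellation is made explicit, the remaining argument reduces to the Schur-stability accounting already delivered by Theorem~\ref{thm:inv} (via Lemma~\ref{lem:stable}).
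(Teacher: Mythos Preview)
Your proposal is correct and follows essentially the same approach as the paper: the paper's proof simply states that the error dynamics $e_u(t):=u(t-N-L:t-1-L)-\hat{u}(t-N-L:t-1-L)$ has eigenvalues at the zeros of the system and the roots of $\sfC_\inv^\star(z)$, then invokes Theorem~\ref{thm:inv}. You derive this more explicitly by subtracting the two instances of \eqref{eq:inv eta} and identifying the characteristic polynomial as $\sfN_\inv^\star(z)=\tfrac{1}{\rho}\sfC_\inv^\star(z)\sfN(z)$, which is exactly the content the paper summarizes in one sentence.
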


\begin{proof}
    The proof directly comes from Theorem~\ref{thm:inv} and the fact that
    the dynamics of
    $e_u(t):=u(t-N-L:t-1-L)-\hat{u}(t-N-L:t-1-L)$
    has eigenvalues at the zeros of the system and the roots of $\sfC_\inv^\star(z)$.
\end{proof}

\begin{remark}
    The method of \cite{DDinv22} also aims to find an asymptotic unknown input estimator,
    which can be understood as
    finding a generalized inverse $\rmH_\inv^\rmG$ such that the dynamics of \eqref{eq:inv eta} with $\eta^\top=U_\ff\rmH_\inv^\rmG$ is stable.
    However,
    it can be inferred from \eqref{eq:inv tf} that
    such $\rmH_\inv^\rmG$ does not exist for non-minimum phase systems,
    since \eqref{eq:inv eta} has unstable zeros of the system as its poles.
\end{remark}

\subsection{Implementation of DD-DOB}

\begin{figure}[t]
    \centering
    \resizebox{0.88\linewidth}{!}{\begin{tikzpicture}[node distance = 2cm,>=latex']
            \node[rectangle, minimum height=3.4cm, minimum width=2.2cm, fill=LightCyan] (dob) at (2.1,-1.2) {};
            \node[rectangle, minimum height=1.6cm, minimum width = 5.2cm, fill = LightCyan] (dob2) at (5.6,-2.1) {};
    
            \node[coordinate, name = input]{};
            \node[sum, right of = input, name = sum, node distance = 1.5cm]{};
            \node[coordinate, right of = sum, name = junc, node distance = 1cm]{};
            \node[sum, right of = junc, name = sum2, node distance=1.5cm]{};
            \node[block, right of = sum2, name = plant, align=center, node distance=2.7cm, minimum width = 1.5cm]{\large $\sfG(z)$};
            \node[coordinate, right of = plant, name = junc2]{};
            \node[block, below of = junc, name = delay, minimum width = 1cm, minimum height = 0.7cm, node distance = 1cm]{\large $z^{-L}$};
            \node[block, below of = plant, name = dob, align=center]{Data-Driven\\Inverse};
            \node[sum, left of = dob, name = sum3, node distance = 3cm]{};
            \node[coordinate, above of = sum2, name = disturbance, node distance=1cm]{};
            \node[coordinate, right of = plant, name = yjunc]{};
            \node[coordinate, right of = yjunc, name = output, node distance=1cm]{};
            \node[coordinate, below of = sum, name = sumjunc, node distance=2.5cm]{};
            
            \draw[->] (input)--(sum) node[near start, above]{\large $u_0(t)$} node[pos=0.85, above]{ $+$};
            \draw[-] (sum)--(junc);
            \draw[->] (junc)--(sum2) node[near end, above]{ $+$};
            \draw[->] (junc)--(delay);
            \draw[->] (sum2)--(plant) node[pos=0.7, above]{\large $u(t)$};
            \draw[->] (disturbance)--(sum2) node[near start, left]{\large $d(t)$} node[near end, right]{ $+$};
            \draw[-] (plant)--(yjunc);
            \draw[->] (yjunc)--(output) node[near end, above]{\large $y(t)$};
            \draw[->] (yjunc)|-(dob);
            \draw[->] (dob)--(sum3) node[pos=0.5, below]{\large $\hat{u}(t-L)$} node[pos=0.9, above]{ $+$};
            \draw[->] (delay)|-(sum3) node[pos=0.9, above]{ $-$};
            \draw[-] (sum3)|-(sumjunc);
            \draw[->] (sumjunc)--(sum) node[pos=0.9, right]{ $-$} node[pos=0.5, left]{\large $\hat{d}(t)$};
    \end{tikzpicture}}
    \caption{Block diagram of a system with DD-DOB (the shaded area).}
    \label{fig:dd_dob}
\end{figure}
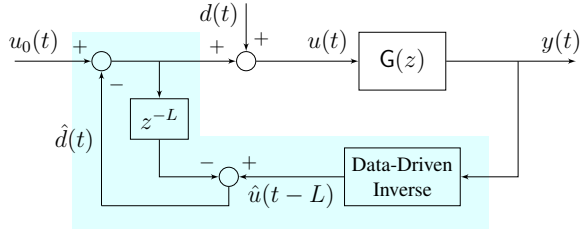

The asymptotic property of Corollary~\ref{cor:inv} is particularly useful for DD-DOB \cite{DDinv23}.
Fig.~\ref{fig:dd_dob} depicts the structure of DD-DOB,
where $u_0(t)$ is the command input, $d(t)$ is the input additive disturbance, and $\hat{d}(t)$ is
the estimated disturbance.
As shown in Fig.~\ref{fig:dd_dob},
$\hat{d}(t)$ is computed as
\begin{align*}
    \hat{d}(t)&=\hat{u}(t-L)-\left(u_0(t-L)-\hat{d}(t-L)\right)\\
    &=\hat{u}(t-L)-u(t-L)+d(t-L),
\end{align*}
so that
\begin{multline*}
    u(t)=u_0(t)-\left(\hat{u}(t-L)-u(t-L)\right)\\+\left(d(t)-d(t-L)\right).
\end{multline*}
Then, when the disturbance is slowly varying as
$d(t)\approx d(t-L)$,
it is expected that
its effect is approximately canceled, i.e., $u(t)\approx u_0(t)$.

Originally in \cite{DDinv23}, the data-driven inverse block of Fig.~\ref{fig:dd_dob} is implemented as in \eqref{eq:dd_inv}, assuming that $\hat{u}(t-L)=u(t-L)$ for all $t\in [-N,-1]$,
which requires perfect knowledge of the initial disturbance.
In contrast, by employing \eqref{eq:inv eta} with $\eta^\top=U_\ff\rmH_\inv^\dagger$ as the inverse block,
DD-DOB can operate properly
without
such an assumption
for minimum-phase systems
by
Corollary~\ref{cor:inv}.
However, for non-minimum phase systems,
\eqref{eq:inv eta} becomes unstable,
which was also a problem for the original DD-DOB.
Note that the minimum-phaseness
is necessary
for implementing
the classical disturbance observer as well \cite{DOB}.

The effectiveness of DD-DOB using Corollary~\ref{cor:inv} is
demonstrated by the simulation results depicted in Fig.~\ref{fig:dob_simul}.
Here,
the system is again \eqref{eq:MSD}
under the same setting as in Section~\ref{subsec:pole ex}.
The discretized system is of minimum-phase
and $\nu=1$.
We constructed the data matrices with
$(N,L,T)=(6,2,94)$,
satisfying the condition \eqref{eq:rankcond inv}.
A disturbance $d(t)=0.5\sin(0.02t)$ is added to a step input $u_0(t)$, as shown in Fig.~\ref{fig:dd_dob_input}.
Compared to $u_0(t)+d(t)$,
it can be seen from Fig.~\ref{fig:dd_dob_input} that
the disturbance
is gradually removed from the actual input $u(t)$ by DD-DOB.
Therefore, in Fig.~\ref{fig:dd_dob_output},
the output $y(t)$ approaches $y_0(t)$, which refers to the output when there is no disturbance.

\begin{figure}
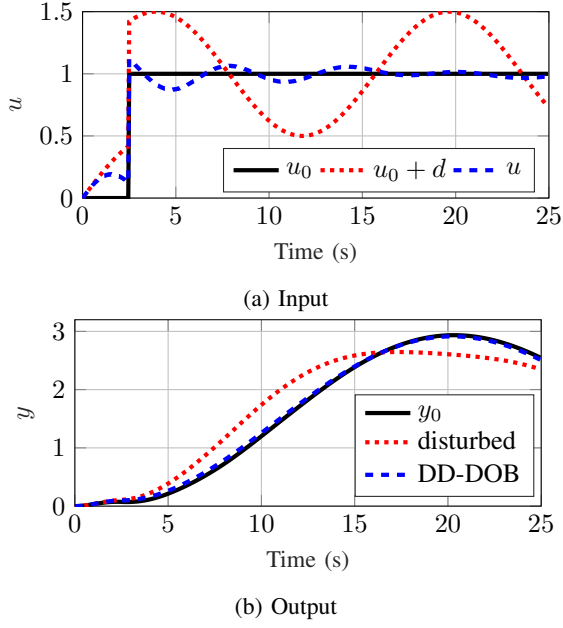

    \centering
    \begin{subfigure}[b]{\linewidth}
        \centering
        \input{figures/dob_input}
        \caption{Input}
        \label{fig:dd_dob_input}
    \end{subfigure}
    \hfill
    \begin{subfigure}[b]{\linewidth}
        \centering
        \input{figures/dob_output}
        \caption{Output}
        \label{fig:dd_dob_output}
    \end{subfigure}
    \caption{Input and output of the system without disturbance (the black solid line), with disturbance (the red dotted line), and with both disturbance and DD-DOB (the blue dashed line).}
    \label{fig:dob_simul}
\end{figure}

\section{Conclusion}\label{sec:conclusion}

We have studied the stability of the data-driven representation \eqref{eq:intro},
focusing on the existence and stability of latent poles.
It is shown that
the stability of every latent pole is guaranteed
by reformulating \eqref{eq:intro} into \eqref{eq:DDR MP} using
the Moore-Penrose inverses of the data matrices.
We have also analyzed the effect of noise in data on \eqref{eq:DDR MP},
ensuring that the latent poles
remain inside the stable region
under sufficiently small noise.

The implications and applications of the main result have been extensively explored in the context of data-driven control and analysis.
Due to the stability of the latent poles,
applying the data-driven representation \eqref{eq:DDR MP} recursively yields a stable output predictor if the underlying system is stable.
Moreover, the latent poles become uncontrollable but stable modes of the state-space realization \eqref{eq:NIOR} constructed from \eqref{eq:DDR MP}, which can be applied to design data-driven output feedback controllers.
As an example, we have designed a data-driven output feedback LQR controller, given the stabilizability of the realization \eqref{eq:NIOR}.
Finally,
we have analyzed the stability of data-driven inverse,
again showing that there are latent poles whose stability can be guaranteed
through the use of
the Moore-Penrose inverse of the data matrix.
This enables asymptotic data-driven input estimation for minimum-phase systems when the initial input trajectory is unknown;
this property has been further utilized to implement DD-DOB.

Since our approach considers a multi-output system as a set of parallel single-output systems,
more direct and fundamental approaches for multi-output systems remain to be explored.
In addition, as the data-driven representation \eqref{eq:DDR MP} for single-output systems is consistent with the subspace predictor \cite{SPC} or the use of the projection-based regularizer \cite{regularization},
an interesting direction for future work would be to investigate
the characteristics of latent poles generated by other types of regularizer.
Another direction could be to analyze the effect of the past input-output trajectory length $N$
on the realization \eqref{eq:NIOR} and on data-driven controllers synthesized from it.

\appendices

\section{Proof of Lemma~\ref{lem:tau12}}\label{append:pf}

Let $p(z)=z^\sfn+\sum_{k=0}^{\sfn-1}p_kz^k$ and $q(z)=\sum_{k=0}^{\sfn}q^\top_kz^k$.
We define Toeplitz matrices for $p(z)$ and $q(z)$ by
\begin{equation*}
    \left(P_\calT,Q_\calT \right):=\left(\scalebox{0.95}{$\begin{bmatrix}
        1 & 0 & 0\\
        p_{\sfn-1} & 1 & 0\\
        p_{\sfn-2} & p_{\sfn-1} & 1\\
        p_{\sfn-3} & p_{\sfn-2} & p_{\sfn-1}\\
        \vdots & \vdots & \vdots\\
        p_0 & p_1& p_2\\
        0 & p_0 & p_1\\
        0 & 0 & p_0
    \end{bmatrix}$}, \scalebox{0.95}{$\begin{bmatrix}
        q_\sfn & 0_m & 0_m\\
        q_{\sfn-1} & q_\sfn & 0_m\\
        q_{\sfn-2} & q_{\sfn-1} & q_\sfn\\
        \vdots & \vdots & \vdots\\
        q_0 & q_1 & q_2 \\
        0_m & q_0 & q_1\\
        0_m & 0_m & q_0
    \end{bmatrix}$}\right)
\end{equation*}
so that $P_\calT\in\R^{(\sfn+3)\times 3}$
and $Q_\calT\in\R^{m(\sfn+3)\times 3}$.
We utilize the fact that multiplication of polynomials can be represented by a product between a Toeplitz matrix and a vector of coefficients.
Specifically,
for a polynomial $w(z)=w_2z^2+w_1z+w_0$,
\begin{equation}\label{eq:Toeplitz}
    P_\calT
    \col(w_2,\,w_1,\,w_0)=\col(\bar{w}_{\sfn+2},\,\bar{w}_{\sfn+1},\,\ldots,\,\bar{w}_0),
\end{equation}
where $\bar{w}_k$ is the $k$-th coefficient of $p(z)w(z)$.
We also define
\begin{equation*}
    \begin{bmatrix}
        P_\calT^\top & Q_\calT^\top
    \end{bmatrix}\begin{bmatrix}
        P_\calT\\ Q_\calT
    \end{bmatrix}=:\begin{bmatrix}
        \theta_0 & \theta_1& \theta_2\\
        \theta_1 & \theta_0 & \theta_1 \\
        \theta_2 & \theta_1 & \theta_0
    \end{bmatrix}.
\end{equation*}
We first prove 1).
From the property \eqref{eq:Toeplitz}, it follows that
\begin{align*}
    f_{\sfn,1}(z+\lambda,p(z),q(z))&=\left \lVert \begin{bmatrix}
        P_\calT\\ Q_\calT
    \end{bmatrix}\begin{bmatrix}
        0\\ 1\\ \lambda
    \end{bmatrix}\right\rVert_2^2-1\\
    &=\theta_0\lambda^2+2\theta_1\lambda+\theta_0-1,
\end{align*}
and thus $\lambda^\star=-\theta_1/\theta_0$.
By the Cauchy-Schwarz inequality,
\begin{equation*}
    \lvert \theta_1\rvert\leq \sqrt{\theta_0-1-q_\sfn^\top q_\sfn}\sqrt{\theta_0-p_0^2-q_0^\top q_0}<\theta_0,
\end{equation*}
since
\begin{multline*}
    \theta_1=\left[p_{\sfn-1},\,p_{\sfn-2},\,\ldots,\,p_0,\,q_{\sfn-1}^\top,\,\ldots,\,q_0^\top\right]\\
    \cdot\left[1,\,p_{\sfn-1},\,\ldots,\,p_1,\,q_\sfn^\top,\,\ldots,\,q_1^\top\right]^\top.
\end{multline*}
Therefore, $\lvert\lambda^\star\rvert <1$,
which concludes the proof of 1).

Next, we prove 2).
It can be observed that
\begin{equation}\label{eq:tau2pf}
\begin{aligned}
    &f_{\sfn,2}\left(z^2+\phi z+\psi,p(z),q(z)\right)=\left\lVert \begin{bmatrix}
        P_\calT\\ Q_\calT
    \end{bmatrix}\begin{bmatrix}
        1\\ \phi\\ \psi
    \end{bmatrix}\right\rVert_2^2-1\\
    &\,\,=\theta_0\phi^2+\theta_0\psi^2+2\theta_1\phi\psi+2\theta_1\phi+2\theta_2\psi+\theta_0-1.
\end{aligned}
\end{equation}
Then, $[\phi^\star,\,\psi^\star]^\top$ is unique and $\psi^\star=(\theta_1^2-\theta_0\theta_2)/(\theta_0^2-\theta_1^2)$.
Since $\theta_0^2-\theta_1^2>0$, it suffices to show that $\theta_1^2-\theta_0\theta_2< \theta_0^2-\theta_1^2$.
By substituting $\phi=-2\theta_1/\theta_0$ and $\psi=1$ in \eqref{eq:tau2pf}, we have
\begin{equation*}
    \frac{1}{\theta_0}\left(2\theta_0^2-4\theta_1^2+2\theta_0\theta_2-\theta_0\right)\geq 0,
\end{equation*}
since the cost function is nonnegative by definition.
As $\theta_0\geq 1$ by definition, it follows that
\begin{equation*}
    \theta_0^2-2\theta_1^2+\theta_0\theta_2\geq \frac{\theta_0}{2}\geq \frac{1}{2}>0.
\end{equation*}
This concludes the proof.

\section{Technical Lemma}\label{append:tech}

Consider $A\in\R^{m\times n}$ and $b\in\R^m$ such that $b\in\im A$.
Let $r:=\rank(A)\leq l:=\min \lbrace m,\,n\rbrace$. Suppose that $A$ and $b$ are both corrupted by noise matrices $E\in\R^{m\times n}$ and $e\in\R^m$, respectively, as
\begin{equation*}
    \tilde{A}:=A+E,\quad \tilde{b}:=b+e.
\end{equation*}
Generically, we assume that $\tilde{r}:=\rank(\tilde{A})> r$.
(If $\tilde{r}=r$, then
$\tilde{A}^\dagger\to A^\dagger$ as $E\to 0_{m\times n}$ \cite{MPinv}.)
Let the singular values of $A$ and $\tilde{A}$ be $\sigma_i$ and $\tilde{\sigma}_i$, respectively, 
where $\sigma_1\geq \sigma_2\geq \cdots \geq \sigma_r >\sigma_{r+1}=\cdots=\sigma_l=0$ and $\tilde{\sigma}_1\geq \tilde{\sigma}_2\geq \cdots \geq \tilde{\sigma}_{\tilde{r}}>\tilde{\sigma}_{\tilde{r}+1}=\cdots=\tilde{\sigma}_l=0$.
Then, the following lemma holds.

\begin{lemma}\label{lem:truncMP}
    If $0<\tau<\sigma_r$, then for any $\epsilon>0$, there exists $\delta>0$ such that $\lVert A^\dagger b-[\tilde{A}_\tau]^\dagger\tilde{b}\rVert_2<\epsilon$ if $\lVert [E,\,e]\rVert_2 <\delta$.
\end{lemma}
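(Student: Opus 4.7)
\section*{Proof Proposal for Lemma~\ref{lem:truncMP}}

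The plan is to argue that, under the tolerance condition $0<\tau<\sigma_r$, the truncated Moore--Penrose inverse $[\tilde{A}_\tau]^\dagger$ \emph{coincides} with the Moore--Penrose inverse of the best rank-$r$ approximation of $\tilde{A}$ for all sufficiently small noise, thereby reducing the claim to the classical continuity of the Moore--Penrose inverse on the set of matrices of a fixed rank. Concretely, let $\tilde{A}_r$ denote the truncation of $\tilde{A}$ obtained from its singular value decomposition by keeping only its $r$ largest singular values. I will show $[\tilde{A}_\tau]^\dagger=\tilde{A}_r^\dagger$ for small $\lVert E\rVert_2$, then show $\tilde{A}_r^\dagger\tilde{b}\to A^\dagger b$ as $(E,e)\to 0$.

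First, by Weyl's inequality applied to $\tilde{A}=A+E$, one has $|\tilde{\sigma}_i-\sigma_i|\leq \lVert E\rVert_2$ for every $i$. Pick $\delta_1>0$ with $\delta_1<\min\{\sigma_r-\tau,\tau\}$ (possible since $0<\tau<\sigma_r$). Whenever $\lVert E\rVert_2<\delta_1$, we get $\tilde{\sigma}_r\geq \sigma_r-\delta_1>\tau$ and $\tilde{\sigma}_{r+1}\leq \sigma_{r+1}+\delta_1=\delta_1<\tau$, so by Definition~\ref{def:truncMP} the index $\kappa$ equals $r$ and $[\tilde{A}_\tau]^\dagger=\tilde{A}_r^\dagger$.

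Second, I will bound $\lVert \tilde{A}_r-A\rVert_2$. By the Eckart--Young theorem, $\lVert \tilde{A}-\tilde{A}_r\rVert_2=\tilde{\sigma}_{r+1}\leq \lVert E\rVert_2$, so
\begin{equation*}
\lVert \tilde{A}_r-A\rVert_2\leq \lVert \tilde{A}_r-\tilde{A}\rVert_2+\lVert \tilde{A}-A\rVert_2\leq 2\lVert E\rVert_2.
\end{equation*}
Since both $\tilde{A}_r$ and $A$ have rank exactly $r$, I can invoke the classical result (see, e.g., \cite{MPinv}) that the Moore--Penrose inverse is continuous on the manifold of $m\times n$ matrices of fixed rank $r$; equivalently, $\tilde{A}_r^\dagger\to A^\dagger$ in spectral norm as $\lVert E\rVert_2\to 0$. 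Hence there exists $\delta_2>0$ such that whenever $\lVert E\rVert_2<\delta_2$ (together with $\delta_2\leq\delta_1$), the quantity $\lVert \tilde{A}_r^\dagger-A^\dagger\rVert_2$ is arbitrarily small.

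Finally, I combine these pieces via the triangle inequality:
\begin{equation*}
\lVert A^\dagger b-[\tilde{A}_\tau]^\dagger \tilde{b}\rVert_2
\leq \lVert (A^\dagger-\tilde{A}_r^\dagger)b\rVert_2+\lVert \tilde{A}_r^\dagger e\rVert_2,
\end{equation*}
where the first term is controlled by continuity of the Moore--Penrose inverse on rank-$r$ matrices, and the second is bounded by $\lVert \tilde{A}_r^\dagger\rVert_2\lVert e\rVert_2\leq (\lVert A^\dagger\rVert_2+1)\lVert e\rVert_2$ for $\lVert E\rVert_2$ small enough. Choosing $\delta>0$ sufficiently small (and at most $\min\{\delta_1,\delta_2\}$) makes both terms smaller than $\epsilon/2$, using $\lVert [E,e]\rVert_2<\delta$ to bound $\lVert E\rVert_2$ and $\lVert e\rVert_2$ simultaneously. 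The main technical obstacle is the clean invocation of the continuity of the Moore--Penrose inverse on fixed-rank matrices; once that is in hand, the rest is bookkeeping via Weyl's inequality and Eckart--Young.
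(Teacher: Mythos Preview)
Your proof is correct and in fact cleaner than the paper's. Both arguments start from Weyl's inequality to control $\tilde{\sigma}_r$, but you go one step further and also bound $\tilde{\sigma}_{r+1}\leq\lVert E\rVert_2<\tau$ for $\lVert E\rVert_2<\min\{\sigma_r-\tau,\tau\}$, which forces the truncation index $\kappa$ to equal $r$. This collapses $[\tilde{A}_\tau]^\dagger$ to the pseudoinverse of a rank-$r$ matrix close to $A$, and the result follows immediately from the classical continuity of the Moore--Penrose inverse on fixed-rank matrices together with the triangle inequality.

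The paper, by contrast, only asserts $\kappa\geq r$ and explicitly carries a ``middle'' block $\tilde{U}_1\tilde{\Lambda}_1\tilde{V}_1^\top$ corresponding to indices $r+1,\ldots,\kappa$. To show this block does not contribute in the limit, it proves $\lVert\tilde{U}_1^\top U_0\rVert_2\to 0$ via a Davis--Kahan-type subspace perturbation argument. That machinery would be needed if one wanted a statement valid for all perturbations with $\lVert E\rVert_2$ merely below $\sigma_r-\tau$ (where $\kappa>r$ can genuinely occur), but since the lemma is an $\epsilon$--$\delta$ limit statement, your observation that the middle block is empty for sufficiently small $\delta$ makes that entire step unnecessary. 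Your route is shorter; the paper's route is more general than the lemma actually requires.
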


\begin{proof}
    With the (compact) singular value decomposition,
    $A=U_0\Lambda_0V_0^\top$, where 
    $U_0\in\R^{m\times r}$ and $V_0\in\R^{n\times r}$
    have orthogonal columns 
    and $\Lambda_0:=\diag(\sigma_1,\,\ldots,\,\sigma_r)$.
    Note that by the Weyl's theorem,
    it can be shown that
	\begin{equation}\label{eq:Weyl}
		\left\lvert \sigma_i-\tilde{\sigma}_i\right\rvert \leq \left\lVert E\right\rVert_2, \qquad i\in\lbrace 1,\,\ldots,\,l\rbrace.
	\end{equation}
    Therefore, for sufficiently small $\lVert E\rVert_2$, $\tau<\tilde{\sigma}_r$.
    Let $\kappa\in\N$ be such that $\tilde{\sigma}_\kappa>\tau\geq \tilde{\sigma}_{\kappa+1}$ (with $\tilde{\sigma}_{l+1}:=0$).
    Then, as $\kappa \geq r$,
    \begin{align*}
        \tilde{A}&=\scalebox{0.95}{$\begin{bmatrix}
            \tilde{U}_0 & \tilde{U}_1 & \tilde{U}_2
        \end{bmatrix}\begin{bmatrix}
            \tilde{\Lambda}_0 & & \\
            & \tilde{\Lambda}_1 & \\
            & & \tilde{\Lambda}_2
        \end{bmatrix}\begin{bmatrix}
            \tilde{V}_0 & \tilde{V}_1 & \tilde{V}_2
        \end{bmatrix}^\top$},\\
        \tilde{\Lambda}_0&:=\diag(\tilde{\sigma}_1,\,\ldots,\,\tilde{\sigma}_r),\quad\tilde{\Lambda}_1:=\diag(\tilde{\sigma}_{r+1},\,\ldots,\,\tilde{\sigma}_\kappa),\\
        \tilde{\Lambda}_2&:=\diag(\tilde{\sigma}_{\kappa+1},\,\ldots,\,\tilde{\sigma}_l)\in\R^{(m-\kappa)\times (n-\kappa)}
    \end{align*}
    by the singular value decomposition, where
    $[\tilde{U}_0,\tilde{U}_1, \tilde{U}_2]$ and $[\tilde{V}_0,\tilde{V}_1,\tilde{V}_0]$ are orthogonal.
     Thus, the rank-$\kappa$ approximation of $\tilde{A}$ becomes $\tilde{A}_\tau=\tilde{U}_0\tilde{\Lambda}_0\tilde{V}_0^\top+\tilde{U}_1\tilde{\Lambda}_1\tilde{V}_1^\top$.
     Since $\tilde{b}=AA^\dagger b+e$, it follows that
     \begin{multline}\label{eq:Abtilde}
    [\tilde{A}_\tau]^\dagger\tilde{b}=\tilde{V}_0\tilde{\Lambda}_0^{-1}\tilde{U}_0^\top U_0\Lambda_0 V_0^\top A^\dagger b\\+ \tilde{V}_1\tilde{\Lambda}_1^{-1}\tilde{U}_1^\top U_0\Lambda_0 V_0^\top A^\dagger b
    +[\tilde{A}_\tau]^\dagger e.
    \end{multline}
    Since
    $1/\tilde{\sigma}_\kappa<1/\tau$,
    both $\lVert [\tilde{A}_\tau]^\dagger\rVert_2$ and $\lVert \tilde{\Lambda}_1^{-1}\rVert_2$ are bounded by
    $1/\tau$.
    Then, $\lVert[\tilde{A}_\tau]^\dagger e\rVert_2\to 0$ as $\lVert e\rVert_2\to 0$.
    As $\lVert E\rVert_2\to 0$, $\tilde{V}_0\tilde{\Lambda}_0^{-1}\tilde{U}_0^\top \to A^\dagger$ because $\rank(A)=\rank(\tilde{U}_0\tilde{\Lambda}_0\tilde{V}_0^\top)$ (see \cite{MPinv}).
    Thus, the first term of \eqref{eq:Abtilde} converges to $A^\dagger b$ as $\lVert E\rVert_2\to 0$.
    Now it suffices to show that $\lVert \tilde{U}_1^\top U_0 \rVert_2\to 0$ as $\lVert E\rVert_2\to 0$.
    We follow the proof of the Davis-Kahan theorem \cite{Davis-Kahan}.
    Let
    \begin{equation*}
        H:=\tilde{A}\tilde{A}^\top-AA^\top=EA^\top+AE^\top+EE^\top.
    \end{equation*}
    Since $\tilde{U}_1^\top\tilde{A}\tilde{A}^\top=\tilde{\Lambda}_1^2\tilde{U}_1^\top$ and $AA^\top U_0=U_0\Lambda_0^2$, we have
    \begin{equation*}
        \tilde{U}_1^\top HU_0=\tilde{\Lambda}_1^2\tilde{U}_1^\top U_0-\tilde{U}_1^\top U_0\Lambda_0^2.
    \end{equation*}
    Then, for any $c\in\R$,
    \begin{multline*}
        \!\!\left\lVert \tilde{U}_1^\top HU_0\right\rVert=\left\lVert \left(\tilde{\Lambda}_1^2-cI_{\kappa-r}\right)\tilde{U}_1^\top U_0-\tilde{U}_1^\top U_0\left(\Lambda_0^2-cI_r\right)\right\rVert\\
        \geq \left\lVert (\tilde{\Lambda}_1^2-cI_{\kappa-r})\tilde{U}_1^\top U_0\right\rVert - \left\lVert \tilde{U}_1^\top U_0\left(\Lambda_0^2-cI_r\right)\right\rVert.
    \end{multline*}
    Let $c\geq \lVert A\rVert_2^2$
	so that by \eqref{eq:Weyl}, $\tilde{\sigma}_{r+1}^2<c$ with sufficiently small $\lVert E\rVert_2$.
    Then, we have
    \begin{align*}
        \left\lVert \tilde{U}_1^\top U_0\right\rVert_2
        &\leq \left\lVert \left(\tilde{\Lambda}_1^2-cI_{\kappa-r}\right)^{-1}\right\rVert_2 \left\lVert \left(\tilde{\Lambda}_1^2-cI_{\kappa-r}\right)\tilde{U}_1^\top U_0\right\rVert_2\\
        &\leq \frac{1}{c-\tilde{\sigma}_{r+1}^2}\left\lVert \left(\tilde{\Lambda}_1^2-cI_{\kappa-r}\right)\tilde{U}_1^\top U_0\right\rVert_2.
    \end{align*}
    In addition, it holds that
    \begin{equation*}
        \left\lVert \tilde{U}_1^\top U_0\left(\Lambda_0^2-cI_r\right)\right\rVert_2\leq \left(c-\sigma_r^2\right)\left\lVert \tilde{U}_1^\top U_0\right\rVert_2.
    \end{equation*}
    With sufficiently small $\lVert E\rVert_2$, $\sigma_r>\tilde{\sigma}_{r+1}$ by \eqref{eq:Weyl}, and thus
    \begin{equation*}
        \left\lVert \tilde{U}_1^\top U_0 \right\rVert_2 \leq
        \frac{1}{\sigma_r^2-\tilde{\sigma}_{r+1}^2}\left\lVert \tilde{U}_1^\top HU_0\right\rVert_2.
    \end{equation*}
    Since $\lVert H\rVert_2\to 0$ as $\lVert E\rVert_2\to 0$, the proof is concluded.
\end{proof}

\section{Proof of Lemma~\ref{lem:inv}}\label{append:inv lem}

Let $\overline{\rmH}_\inv=\begin{bmatrix}
            \calH_{N+1}(u^\dd)\\
            \calH_{N+L+1}(y^\dd)
        \end{bmatrix}$.
Since $U_\ff$ belongs to the row span of $\rmH_\inv$,
we show that \eqref{eq:rankcond inv} holds.
By Definition~\ref{def:WkL},
there exists $\bar{u}^\dd\in \R^L$ such that $\col(u^\dd,\bar{u}^\dd,y^\dd)\in \calW_{k+L}$.
It is clear that
$\col(u^\dd,\bar{u}^\dd)$ is also persistently exciting of order $n+N+L+1$,
due to the Hankel structure.
Let us consider a minimal realization $(\bar{A},\bar{B},\bar{C},\bar{D})$
and the corresponding (virtual) state trajectory $x^\dd\in \R^{n(k-N)}$ that yields $\col(u^\dd,y^\dd)$.
The matrix
\begin{equation*}
    \begin{bmatrix}
        U_\pp\\ U_\ff^L\\ X
    \end{bmatrix}:=\begin{bmatrix}
        \calH_{N+L+1}(\col(u^\dd,\bar{u}^\dd))\\
        \calH_1(x^\dd)
    \end{bmatrix}
\end{equation*}
has full row rank by \cite[Corollary~2]{fundamental},
where $U_\ff^L$ and $X$ have $L+1$ and $n$ rows, respectively.
For $i\in\N$, we define
\begin{equation*}
    \calO_i:=\scalebox{0.9}{$\begin{bmatrix}
        \bar{C}\\ \bar{C}\bar{A}\\ \vdots\\ \bar{C}\bar{A}^{i-1}
    \end{bmatrix}$},\,
    \calT_i:=\scalebox{0.9}{$\begin{bmatrix}
        \bar{D} & 0 & \cdots & 0\\
        \bar{C}\bar{B} & \bar{D} & & 0\\
        \vdots & \vdots & \ddots & \vdots\\
        \bar{C}\bar{A}^{i-1}\bar{B} & \bar{C}\bar{A}^{i-2}\bar{B} & \cdots & \bar{D}
    \end{bmatrix}$},
\end{equation*}
and $\calT_0:=\bar{D}$.
Then,
\begin{equation*}
    \begin{bmatrix}
        Y_\pp \\ Y_\ff^L
    \end{bmatrix}=\calO_{N+L+1}X+\calT_{N+L}\begin{bmatrix}
        U_\pp\\ U_\ff^L
    \end{bmatrix},
\end{equation*}
and by the structure of $\calT_{N+L}$ and $\calO_{N+L+1}$,
\begin{equation*}
    \rmH_\inv=\begin{bmatrix}
            U_\pp\\ Y_\pp\\ Y_\ff^L
    \end{bmatrix}
        =\begin{bmatrix}
            I_N & 0_{N\times (L+1)} & 0_{N\times n}\\
            \ast & 0_{N\times (L+1)} & \calO_N\\
            \ast & \calT_{L} & \ast
        \end{bmatrix}\begin{bmatrix}
            U_\pp\\ U_\ff^L\\  X
        \end{bmatrix}.
\end{equation*}
It can be seen that $\rank(\rmH_\inv)=N+\rank(\calT_L)+\rank(\calO_N)$.
As $N\geq l$, $\rank(\calO_N)=n$.
By \cite{Ldelay},
$\rank(\calT_i)-\rank(\calT_{i-1})=1$ ($\rank(\calT_{-1}):=0$) if and only if $i\geq \nu$, and thus
$\rank(\calT_L)=L-\nu+1$.
This concludes the proof.

\section*{References}
\bibliographystyle{IEEEtran}
\bibliography{ref}

\begin{IEEEbiography}[{\includegraphics[width=1in,height=1.25in,clip,keepaspectratio]{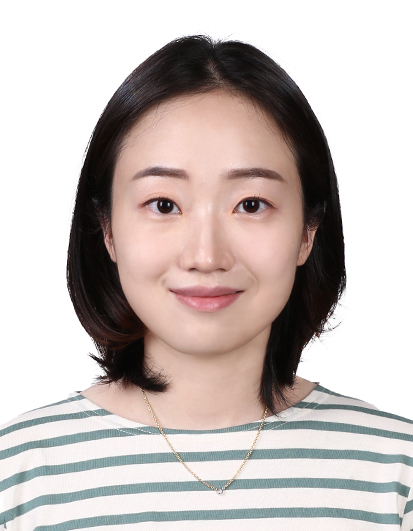}}]{Joowon Lee}
(Graduate Student Member, IEEE)
received the B.S. degree in electrical and computer engineering in 2019, from Seoul National University, South Korea.
She is currently a combined M.S./Ph.D. student in electrical and computer engineering at Seoul National University, South Korea. 
Her research interests include data-driven control, encrypted control systems, and security of cyber-physical systems.
\end{IEEEbiography}

\vskip -2.7\baselineskip plus -1fil

\begin{IEEEbiography}[{\includegraphics[width=1in,height=1.25in,clip,keepaspectratio]{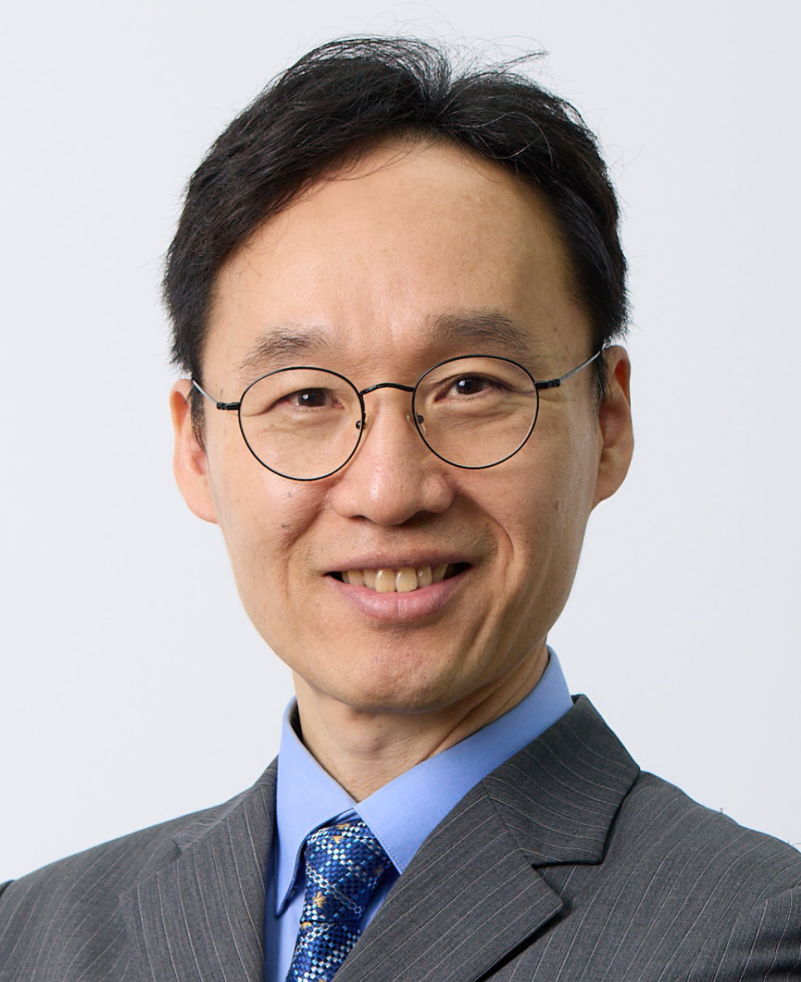}}]{Nam Hoon Jo} (Member, IEEE) received the B.S., M.S., and Ph.D. degrees all from Electrical Engineering, Seoul National University, Seoul, South Korea, in 1992, 1994, and 2000, respectively. From 2000 to 2001, he was a Post-doctoral Research Associate at Automation and Systems Research Institute (ASRI) at Seoul National University, Seoul, Korea. From 2001 to 2002, he worked as a Senior Research Engineer at Samsung Electronics, Suwon, Korea. Since 2002, he has been with the School of Electrical Engineering at Soongsil University, Seoul, Korea, where he is currently a professor. His research interests include nonlinear systems control theory, disturbance observer, and data driven control.
\end{IEEEbiography}

\vskip -2.7\baselineskip plus -1fil

\begin{IEEEbiography}[{\includegraphics[width=1in,height=1.25in,clip,keepaspectratio]{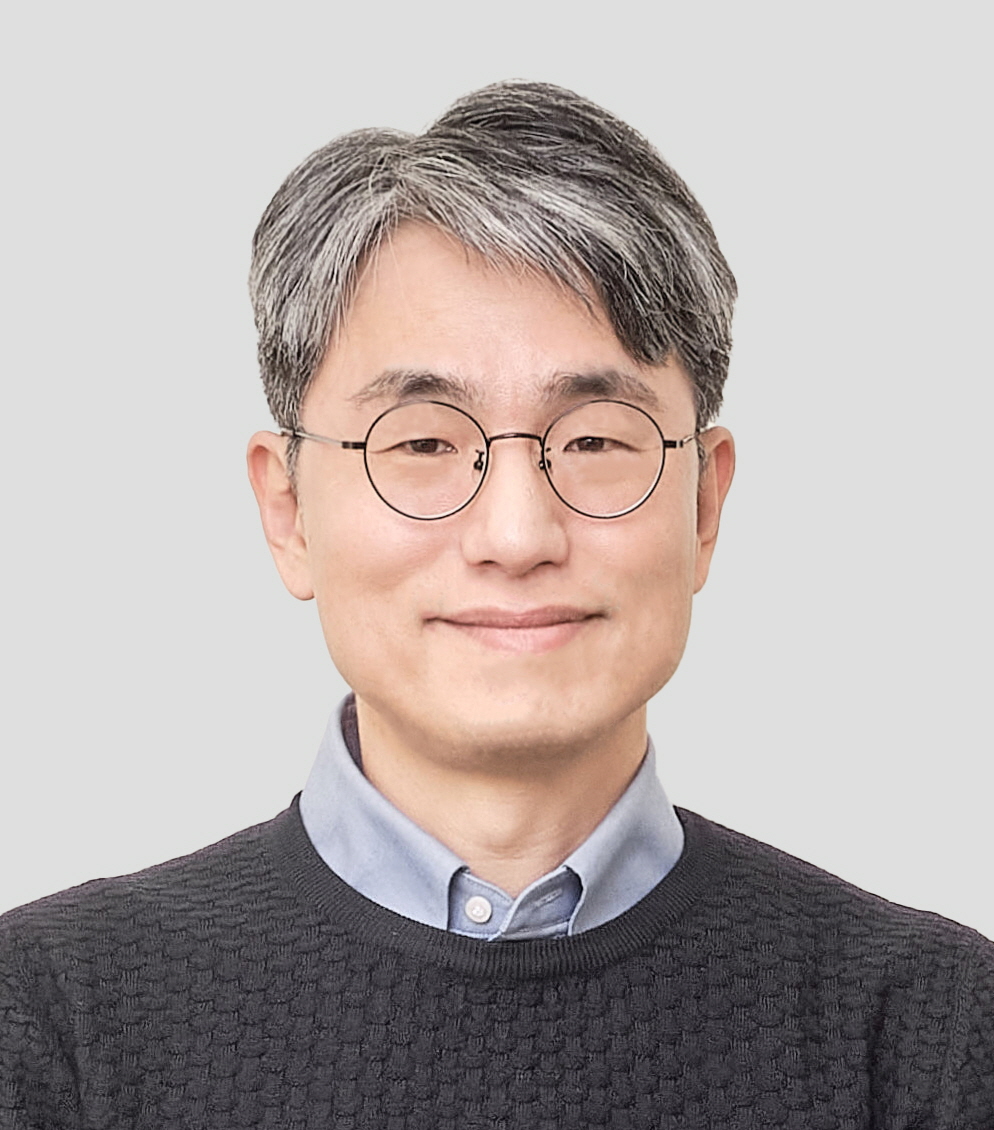}}]{Hyungbo Shim} (Senior Member, IEEE) received the B.S., M.S., and Ph.D. degrees from Seoul National University, Korea, and held the post-doc position at University of California, Santa Barbara till 2001. He joined Hanyang University, Seoul, in 2002. Since 2003, he has been with Seoul National University, Korea. He served as associate editor for Automatica, IEEE Transactions on Automatic Control, Int. Journal of Robust and Nonlinear Control, and European Journal of Control, and as editor for Int. Journal of Control, Automation, and Systems. His research interests include stability analysis of nonlinear systems, observer design, disturbance observer technique, secure control systems, and synchronization for multi-agent systems.
\end{IEEEbiography}

\vskip -2.7\baselineskip plus -1fil

\begin{IEEEbiography}[{\includegraphics[width=1in,height=1.25in,clip,keepaspectratio]{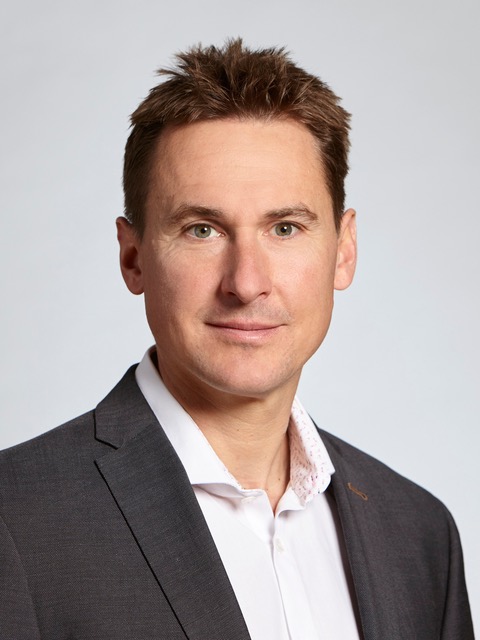}}]{Florian Dörfler} (Senior Member, IEEE) is a Professor at the Automatic Control Laboratory at ETH Zürich. He received his Ph.D. degree in Mechanical Engineering from the University of California at Santa Barbara in 2013, and a Diplom degree in Engineering Cybernetics from the University of Stuttgart in 2008. From 2013 to 2014 he was an Assistant Professor at the University of California Los Angeles. He has been serving as the Associate Head of the ETH Zürich Department of Information Technology and Electrical Engineering from 2021 until 2022. His research interests are centered around automatic control, system theory, optimization, and learning. His particular foci are on network systems, data-driven settings, and applications to power systems. He is a recipient of the 2025 Rössler Prize, the highest scientific award at ETH Zürich across all disciplines, as well as the distinguished career awards by IFAC (Manfred Thoma Medal 2020) and EUCA (European Control Award 2020). He and his team received best paper distinctions in the top venues of control, machine learning, power systems, power electronics, circuits and systems. 
\end{IEEEbiography}

\vskip -2.7\baselineskip plus -1fil

\begin{IEEEbiography}[{\includegraphics[width=1in,height=1.25in,clip,keepaspectratio]{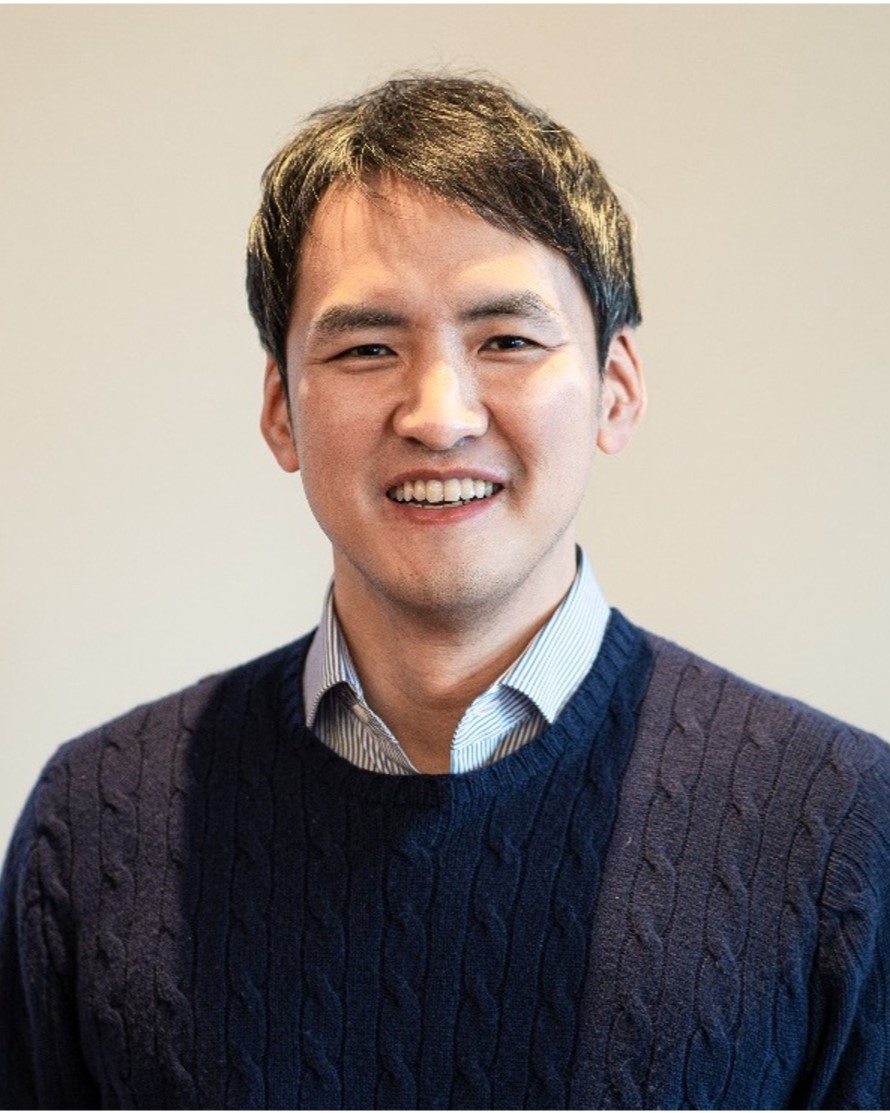}}]{Jinsung Kim} (Member, IEEE) received Ph.D. degree in mechanical engineering from the Korea Advanced Institute of Science and Technology (KAIST), Daejeon, South Korea, in 2013. He is currently a Part Leader and Senior Research Engineer at the Vehicle Control Development Center, Hyundai Motor Company, Hwaseong, South Korea, where he has been since 2013. His previous works include control and estimation algorithm for dual clutch transmissions and model predictive control for gasoline engines in mass production vehicles. From 2019 to 2020, he was a Visiting Scholar with the University of Pennsylvania, Philadelphia, PA, USA. His current research interests include learning-based control, and vehicle motion control based on software-defined vehicle platform.
\end{IEEEbiography}

\end{document}